\newcommand{\UFRCASecure}[4]{\mathbf{UF}-\mathbf{RCA}\paren{#1,#2,#3,#4}-secure}
\newcommand{\sharedCues}{Shared Cues}
\newcommand{\ExtraRehearsals}[2]{ER_{#1,#2}}
\newcommand{\Hash}{\mathbf{H}}
\newcommand{\TotalExtraRehearsals}[1]{ER_{#1}}
\newcommand{\AssociationStrength}[1]{\mathbf{s}_{#1}}
\newcommand{\Adversary}{\mathcal{A}}
\newcommand{\paren}[1]{\left( #1 \right)}
\newcommand{\cut}[1]{}
\newcommand{\guess}[1]{q_{#1}}
\newcommand{\timet}{\hat{t}}
\newcommand{\stepst}{\tilde{t}}
\newcommand{\challengelength}{\lambda}
\newcommand{\close}{\epsilon}
\newcommand{\D}{{\mathcal D}}
\newcommand{\Z}{{\mathcal Z}}
\newcommand{\VSTAT}{{\mbox{VSTAT}}}
\newcommand{\MVSTAT}{{\mbox{MVSTAT}}}
\newcommand{\MSAMPLE}{{\mbox{1-MSTAT}}}
\newcommand{\dc}{\kappa_2}
\newcommand{\SDN}{{\mathrm{SDN}}}
\newtheorem{fact}{Fact}
\newtheorem{definition}{Definition}
\newtheorem{requirement}{Requirement}
\newtheorem{claim}{Claim}
\newtheorem{theorem}{Theorem}
\newtheorem{conjecture}{Conjecture}
\newtheorem{lemma}{Lemma}
\newtheorem{remark}{Remark}
\def \QED {\hfill{$\Box$}}
\newenvironment{proofof}[1]{\noindent {\em Proof of #1.  }}{\QED}
\newenvironment{remindertheorem}[1]{\medskip \noindent {\bf Reminder of Theorem #1.  }\em}{}
\newenvironment{reminderlemma}[1]{\medskip \noindent {\bf Reminder of Lemma #1.  }\em}{}
\newenvironment{reminderclaim}[1]{\medskip \noindent {\bf Reminder of Claim #1.  }\em}{}
\begin{document}

%\mainmatter  % start of an individual contribution

% first the title is needed
\title{Towards Human Computable Passwords\thanks{This work was partially supported by the NSF Science and Technology TRUST and the AFOSR MURI on Science of Cybersecurity. This work was completed in part while the first author was at Carnegie Mellon University where the first author partially supported by an NSF Graduate Fellowship.}}

% a short form should be given in case it is too long for the running head
%\titlerunning{Self Rehearsing Passwords}

% the name(s) of the author(s) follow(s) next
%
% NB: Chinese authors should write their first names(s) in front of
% their surnames. This ensures that the names appear correctly in
% the running heads and the author index.
%

\author{Jeremiah Blocki\\ Purdue University\\ \texttt{jblocki@purdue.edu}    \and Manuel Blum \\ Carnegie Mellon University\\  \texttt{mblum@cs.cmu.edu} \and Anupam Datta \\ Carnegie Mellon University\\  \texttt{danupam@cmu.edu} \and Santosh Vempala \\ Georgia Tech \\ \texttt{vempala@cc.gatech.edu}}

%\authorrunning{Anonymous Submission}
% (feature abused for this document to repeat the title also on left hand pages)

% the affiliations are given next; don't give your e-mail address
% unless you accept that it will be published

%
% NB: a more complex sample for affiliations and the mapping to the
% corresponding authors can be found in the file "llncs.dem"
% (search for the string "\mainmatter" where a contribution starts).
% "llncs.dem" accompanies the document class "llncs.cls".
%

%\toctitle{Self-Rehearsing Passwords}
%\tocauthor{Anonymous}
\maketitle

% !TEX root = humanComputablePasswords.tex
\begin{abstract}
%Secure cryptographic protocols to authenticate humans typically assume that the human will receive assistance from trusted hardware or software. 
An interesting challenge for the cryptography community is to design authentication protocols that are so simple that a human can execute them 
without relying on a fully trusted computer. We propose several candidate authentication protocols for a setting in which the human user can 
only receive assistance from a semi-trusted computer---a computer that stores information and performs computations correctly 
but does not provide confidentiality. Our schemes use a semi-trusted computer to store and display public challenges $C_i\in[n]^k$. 
The human user memorizes a random secret mapping $\sigma:[n]\rightarrow \mathbb{Z}_d$ and authenticates by computing responses $f(\sigma(C_i))$ to 
a sequence of public challenges where $f:\mathbb{Z}_d^k\rightarrow \mathbb{Z}_d$ is a function that is easy for the human to evaluate. We prove 
that any statistical adversary needs to sample $m=\tilde{\Omega}\paren{n^{s(f)}}$ challenge-response pairs to recover $\sigma$, for a security 
parameter $s(f)$ that depends on two key properties of $f$. Our lower bound generalizes recent results of Feldman et al. \cite{feldman2013complexity} 
who proved analogous results for the special case $d=2$. To obtain our results, we apply the general hypercontractivity theorem \cite{o2007analysis} 
to lower bound the {\em statistical dimension } of the distribution over challenge-response pairs induced by $f$ and $\sigma$. 
Our {\em statistical dimension} lower bounds apply to arbitrary functions $f:\mathbb{Z}_d^k\rightarrow \mathbb{Z}_d$ (not just to functions that 
are easy for a human to evaluate). As an application, we propose a family of human computable password 
functions $f_{k_1,k_2}$ in which the user needs to perform $2k_1+2k_2+1$ primitive operations (e.g., adding two digits or remembering a 
secret value $\sigma(i)$), and we show that $s(f) = \min\{k_1+1, (k_2+1)/2\}$. For these schemes, we prove that forging passwords is 
equivalent to recovering the secret mapping. Thus, our human computable password schemes can maintain strong security guarantees even after 
an adversary has observed the user login to many different accounts. 
%We also issue a public challenge to the cryptography community to crack passwords that were generated using our human computable password schemes.
\end{abstract}

%\noindent {\bf Keywords: } Passwords, Human Authentication, Planted Satisfiability, Statistical Algorithms, Statistical Dimension

%\newpage

%\setcounter{page}{1}

% !TEX root = humanComputablePasswords.tex
\section{Introduction} \label{sec:Introduction}
A typical computer user has many different online accounts which require some form of authentication. While passwords are still the dominant form of authentication, users struggle to remember their passwords. As a result users often adopt insecure password practices (e.g., reuse, weak passwords) \cite{florencio2007large,center2010consumer,kruger2008empirical,bonneau2012science} or end up having to frequently reset their passwords. 
Recent large-scale password breaches highlight the importance of this problem \cite{breach:CERT-Warning,center2010consumer,breach:militaryHACK,breach:natoHACK,noPlaintextPassword,breach:Zappos,breach:Atlassian,breach:apple,breach:sony,breach:linkedin,breach:IEEE,breach:Adobe}. An important research goal is to develop usable and secure password management scheme ---  a systematic strategy to help users create and remember multiple passwords. Blocki et al. \cite{NaturallyRehearsingPasswords} and Blum and Vempala~\cite{blum2015publishable} recently proposed password management schemes that maintain some security guarantees after a small constant number of breaches (e.g., an adversary who sees three of the user's passwords still has some uncertainty about the user's remaining passwords). 

%
%%%Blocki et al. \cite{NaturallyRehearsingPasswords} initiated the rigorous study of password management schemes  ---  systematic strategies to help users create and remember multiple passwords. Their \sharedCues~ scheme balances security and usability considerations, but maintains security only when a small (constant) number of plaintext passwords of a user are revealed (e.g., 1 to 4). 
%An adversary who has seen several of the user's passwords might be able to break the user's passwords at other accounts. Is it possible to design a human authentication protocol that allows a user to authenticate to multiple untrusted parties and will remain secure even after many breaches (e.g., 50 to 100)?

In this work we focus on the goal of developing human computable password management schemes  in which security guarantees are strongly maintained after {\em many} breaches (e.g., an adversary who sees one-hundred of the user's passwords still has high uncertainty about the user's remaining passwords).  In a human computable password management scheme the user reconstructs each of his passwords by {\em computing} the response to a public challenge. 

Our human computable password schemes admittedly require more human effort than the password management schemes of Blocki et al. \cite{NaturallyRehearsingPasswords} and Blum and Vempala~\cite{blum2015publishable}, and, unlike Blocki et al.  \cite{NaturallyRehearsingPasswords}, our scheme requires users to do simple mental arithmetic (e.g., add two single-digit numbers) in their head.  However, our proposed schemes are still human usable in the sense that a motivated, security-conscious user would be able to learn to use the scheme and memorize all associated secrets in a few hours.  In particular, the human computation in our schemes only involves a few very simple operations (e.g., addition modulo $10$) over secret values (digits) that the user has memorized. More specifically, in our candidate human computable password schemes the user learns to compute a simple function $f:\mathbb{Z}_d^k\rightarrow\mathbb{Z}_d$,\footnote{ In our security analysis we consider arbitrary bases $d$. However, our specific schemes use the base  $d = 10$ that is most familiar to human users.} and memorizes a secret mapping $\sigma:[n]\rightarrow \mathbb{Z}_d$. The user authenticates by responding to a sequence of single digit challenges, i.e.,  a challenge-response pair $\paren{C,f\paren{\sigma\paren{C}}}$ is a challenge $C \in X_k \subseteq [n]^k$ and the corresponding response.

One of our candidate human computable password schemes involves  the function
\[ f\paren{x_0,x_1,x_2,x_3,x_4,x_5,\ldots,x_{13}} = x_{13} + x_{12} + x_{\paren{x_{10}+x_{11} \mod{10}}} \mod{10} \ . \]
If the user memorizes a secret mapping $\sigma$ from $n$ images to digits then each challenge $C=(I_0,\ldots,I_{13})$ would correspond to an ordered subset of $14$ of these images and the response to the challenge is $f\big(\sigma(I_0),\ldots,\sigma(I_{13})\big)$. We observe that a human would only need to perform three addition operations modulo $10$ to evaluate this function.  The user would respond by (1) adding the secret digits associated with challenge images $I_{10}$ and $I_{11}$ to get a secret index $0 \leq i \leq 9$, (2) finding image $I_i$, (3) adding the secret digits associated with images $I_i$, $I_{12}$ and $I_{13}$ to produce the final response. To amplify security the user may respond to $\challengelength \geq 1$ single-digit challenges $C_1,\ldots,C_\challengelength$ to obtain a $\challengelength$ digit password $f(\sigma(C_1)),\ldots,f(\sigma(C_\challengelength))$.  We note that the challenge $C$ does not need to be kept secret and thus the images can be arranged on the screen in helpful manner for the human user --- see Figure \ref{fig:HumanComputablePasswordsDigitChallengeAndResponse} for an example and see Appendix \ref{apdx:AuthenticationProcess} for more discussion of the user interface.  

\begin{figure}
\centering
\includegraphics[scale=0.5]{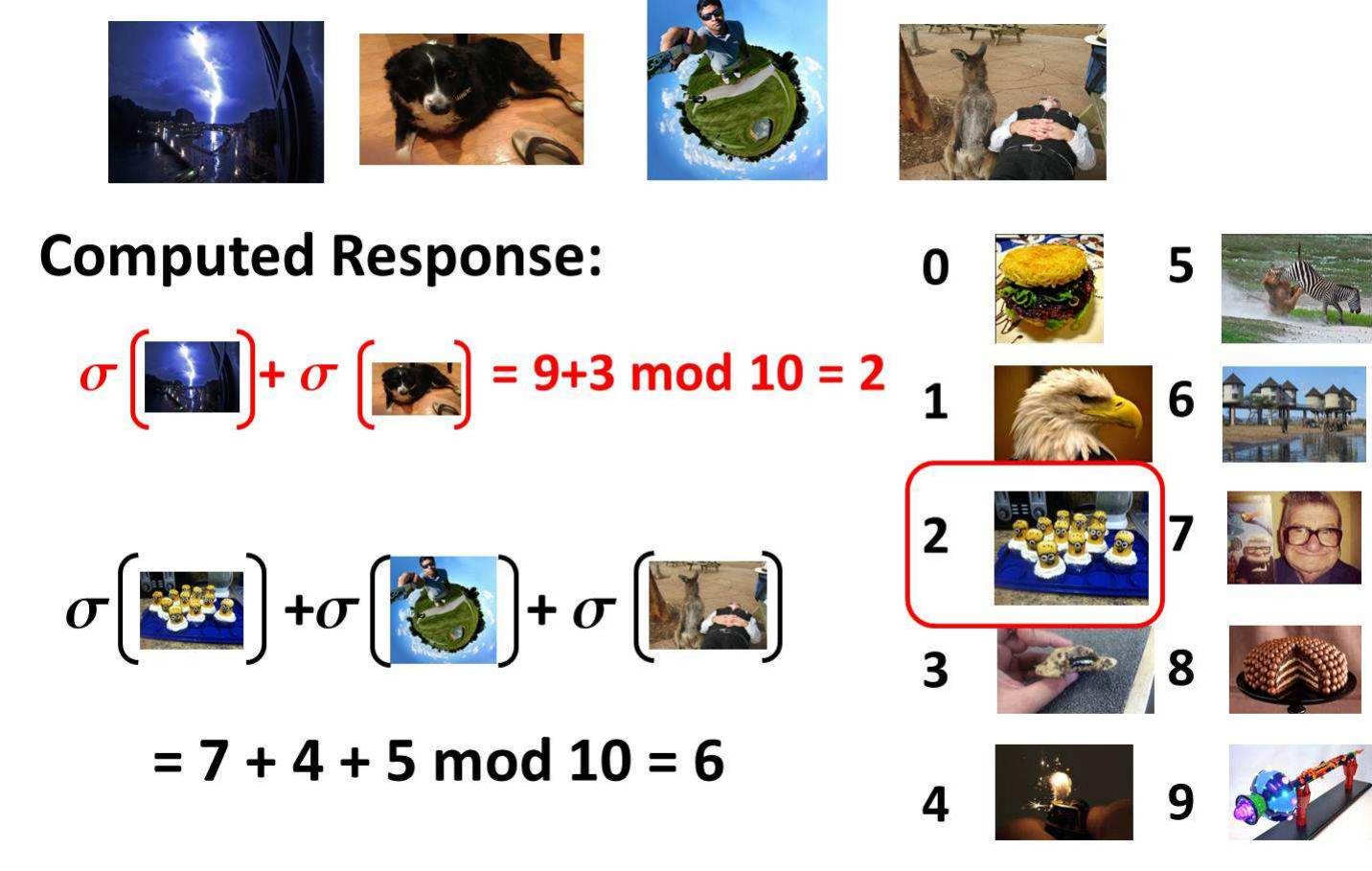}
\caption{Computing the response $f(\sigma(C)) = 6$ to a single digit challenge $C$.}
\label{fig:HumanComputablePasswordsDigitChallengeAndResponse}
\end{figure}

We present a natural conjecture which implies that a polynomial time attacker will need to see the responses to $\tilde{\Omega}\paren{n^{s(f)}}$ random challenges before he can forge the user's passwords (accurately predict the responses to randomly selected challenges)\footnote{We stress that, unlike \cite{NaturallyRehearsingPasswords,blum2015publishable}, our security guarantees are not information theoretic. In fact, a computationally unbounded adversary would need to see at most $O\paren{n}$ challenge-response pairs to break the human computable password management scheme.}. Here, $s(f)$ is a security parameter that depends on two key properties of the function $f$ (in our above example $s(f) = 3/2$). Furthermore, we provide strong evidence for our conjecture by ruling out a broad class of algorithmic techniques that the adversary might use to attack our scheme. 
	%One concrete way for a user to begin using the above scheme would be to memorize a secret mapping $\sigma$ from $n$ images to digits. Now each single-digit challenge $C=(I_0,\ldots,I_{13})$ would be a (randomly chosen) ordered subset of $14$ of these images. 
%However, we provide strong evidence that any polynomial time adversary will need at least $m=\tilde{\Omega}\paren{n^{s(f)}}$ challenge-response pairs --- even if $s(f) > 1$. 
%In particular, if we ask the user to memorize a secret mapping of length $n=100$ and the password for each account is ten digits then the adversary would need to breach one-hundred of the user's accounts before he could obtain enough challenge-response pairs $(100^{1.5}=10(100))$ to forge the user's passwords.

Following Blocki et al. \cite{NaturallyRehearsingPasswords} we consider a setting where a user has two types of memory: {\em persistent memory} (e.g., a sticky note or a text file on his computer) and {\em associative memory} (e.g., his own human memory). We assume that persistent memory is reliable and convenient but not private (i.e., an adversary can view all challenges stored in persistent memory, but he cannot tamper with them). In contrast, a user's associative memory is private but lossy---if the user does not
rehearse a memory it may be forgotten. Thus, the user can store a password challenge $C \in X_k$ in persistent memory, but the mapping $\sigma$ must be stored in associative memory (e.g., memorized and rehearsed). We allow the user to receive assistance from a semi-trusted computer. A semi-trusted computer will perform computations accurately (e.g., it can be trusted to show the user the correct challenge), but it will not ensure privacy of its inputs or outputs. This means that a human computable password management scheme should be based on a function  $f$ that the user can compute entirely in his head.

\paragraph{Contributions.} 
We provide precise notions of security and usability for a human computable password management scheme (Section~\ref{sec:Definitions}). 
We introduce the notion of UF-RCA security (Unforgeability under Random Challenge Attacks). Informally, a human computable password 
scheme is UF-RCA secure if an adversary cannot forge passwords after seeing many example challenge-response pairs.
%We use the usability model of Blocki et al. \cite{NaturallyRehearsingPasswords} to quantify the effort that a user must expend rehearsing the secret mapping $\sigma$ and present a simple model to estimate the time that a user must spend to compute each password in his head. 

We present the design of a candidate family of human computable password management schemes $f_{k_1,k_2}$, and analyze the usability and 
security of these schemes (Section \ref{sec:Candidates}). Our usability analysis indicates that to compute $f_{k_1,k_2}\paren{\sigma\paren{C}}$ 
the user needs to execute $2k_1+2k_2+1$ simple operations (e.g., addition of single digits modulo 10). The main technical result of this section (Theorem \ref{thm:UFRCASecure}) 
states that our scheme is UF-RCA secure given a plausible 
conjecture about the hardness of random planted constraint satisfiability problems (\emph{RP-CSP}). 
Our conjecture is that any polynomial time adversary needs to see at least $m = n^{\min\{r(f)/2,g(f)+1-\epsilon\}}$ 
challenge-response pairs $\paren{C,f\paren{\sigma\paren{C}}}$ to recover the secret mapping $\sigma$. Here, 
$s(f) = \min\{r(f)/2,g(f)+1\}$ is a composite security parameter involving $g(f)$ (how many inputs to $f$ need to be 
fixed to make $f$ linear?) and $r(f)$ (what is the largest value of $r$ such that the distribution over 
challenge-response pairs are $\paren{r-1}$-wise independent?). We prove that $g\paren{f_{k_1,k_2}} = k_1$ and 
$r\paren{f_{k_1,k_2}} = \paren{k_2+1}/2$. 

Next we prove that any statistical adversary needs at least $\tilde{\Omega}\paren{n^{r(f)/2}}$ 
challenge-response pairs $\paren{C,f\paren{\sigma\paren{C}}}$ to find a secret mapping $\sigma'$ that is $\epsilon$-correlated with 
$\sigma$ (Section \ref{sec:TechnicalResults}). This result may be interpreted as strong evidence in favor of the \emph{RP-CSP} hardness assumption as most natural algorithmic techniques have statistical analogues (see   discussion in Section \ref{sec:TechnicalResults}). 
While Gaussian Elimination is a notable exception, our composite security parameter accounts for attacks based on Gaussian 
Elimination---an adversary needs to see $m=\tilde{\Omega}\paren{n^{1+g(f)}}$ challenge-response pairs to recover $\sigma$ using Gaussian 
Elimination. Moving beyond asymptotic analysis we also provide empirical evidence that our human computable password management scheme is hard to crack. In particular, we used a CSP solver to try to recover 
$\sigma \in \mathbb{Z}_{10}^n$ given $m$ challenge-response pairs using the functions $f_{1,3}$ and $f_{2,2}$. Our CSP solver failed to 
find the secret mapping $\sigma \in \mathbb{Z}_{10}^{50}$ given $m=1000$ random challenge-response pairs with both functions $f_{1,3}$ and $f_{2,2}$.  Additionally, we constructed public challenges for cryptographers to break our human computable password management schemes under various parameters (e.g., $n=100$, $m=1000$). 

Our lower bound for statistical adversaries is based on the {\em statistical dimension} of the distribution over challenge-response pairs 
induced by $f$ and $\sigma$. We stress that our analysis of the statistical dimension applies to arbitrary functions $f:\mathbb{Z}_d^k\rightarrow\mathbb{Z}_d$, 
not just to functions that are easy for humans to compute. Our analysis of the statistical dimension generalizes recent results of 
Feldman et al. \cite{feldman2013complexity} for binary predicates and may be of independent interest. While the analysis is similar at a high level, we stress that our proofs do require some new ideas. Because our function $f$ is not a binary predicate we cannot use the Walsh basis functions to express 
the Fourier decomposition of $f$ and analyze the statistical dimension of our distribution over challenge-response pairs as 
Feldman et al. \cite{feldman2013complexity} do. Instead, we use a generalized set of Fourier basis functions to take the Fourier decomposition of $f$, and we apply the general hypercontractivity theorem \cite{o2007analysis} to obtain our bounds on the statistical dimension. 

We complete the proof of Theorem \ref{thm:UFRCASecure} in Section \ref{sec:SecurityAnalysis} by proving that forging passwords and 
approximately recovering the secret mapping are equivalent problems for a broad class of human computable password schemes, 
including our candidate family $f_{k_1,k_2}$. This result implies that any adversary who can predict the response $f(C)$ to a random challenge $C$ with better accuracy than random guessing can be used as a blackbox to approximately recover the secret mapping.

 \cut{ These results imply that any statistical adversary needs to see at least 
$\tilde{\Omega}\paren{n^{r(f)/2}}$ challenge-response pairs before he can accurately forge passwords. }
\cut{ In particular techniques like Expectation Maximization\cite{dempster1977maximum}, local search, MCMC optimization\cite{gelfand1990sampling}, first and second order methods for convex optimization, PCA, ICA, k-means can be modeled as a statistical algorithm  --- see \cite{blum2005practical} and \cite{chu2007map} for proofs.} 

%We conclude with discussion in Section \ref{sec:OpenQuestions}. In particular, we argue that our human computable password schemes could be used to achieve secure cryptography in a panoptic world (the user only has access to a semi-trusted computer) by leveraging recent breakthroughs on code obfuscation\cite{garg2013candidate}. 

\cut{
\paragraph{Organization.} The rest of the paper is organized as follows. Section \ref{sec:related} presents related work. Section \ref{sec:Definitions} introduces preliminary notation and definitions. Section \ref{sec:Candidates} describes our candidate human computable password schemes and analyzes their security (under a new hardness assumption) and their usability. Section \ref{sec:TechnicalResults} provides evidence for our hardness assumption. In particular, the assumption is true for statistical adversaries. }

%(e.g., every time the user computes the response $f\paren{\sigma\paren{C}}$ to a single-digit challenge the adversary observes the pair $\paren{C,f\paren{\sigma\paren{C}}}$). 

%\input{Preliminaries}

% !TEX root = humanComputablePasswords.tex
\section{Definitions} \label{sec:Definitions}

\subsection{Notation} Given two strings $\alpha_1,\alpha_2 \in \mathbb{Z}_d^n$ we use $H\paren{\alpha_1,\alpha_2} \doteq \left| \left\{i\in[n]~\vline~ \alpha_1[i] \neq \alpha_2[i] \right\} \right|$ to denote the Hamming distance between them. We will also use $H\paren{\alpha_1} \doteq H\paren{\alpha_1, \vec{0}}$ to denote the Hamming weight of $\alpha_1$. We use $\sigma:[n]\rightarrow \mathbb{Z}_d$ to denote a secret random mapping that the user will memorize. We will sometimes abuse notation and think of $\sigma \in \mathbb{Z}_d^n$ as a string which encodes the mapping, and we will use $\sigma \sim \mathbb{Z}_d^n$ to denote a random mapping chosen from $\mathbb{Z}_d^n$ uniformly at random. Given a distribution $\mathcal{D}$ we will use $x \sim \mathcal{D}$ to denote a random sample from this distribution. We also use $x \sim S$ to denote an element chosen uniformly at random from a finite set $S$. 

\begin{definition} \label{def:EpsilonClose}
We say that two mappings $\sigma_1,\sigma_2 \in \mathbb{Z}_d^n$ are $\close$-correlated if $\frac{H\paren{\sigma_1,\sigma_2}}{n} \leq \frac{d-1}{d} - \close$, and we say that a mapping $\sigma \in \mathbb{Z}_d^n$ is $\delta$-balanced if $\max_{i \in \{0,\ldots,d-1\}} \left|\frac{H\paren{\sigma , \vec{i} }}{n} - \frac{d-1}{d} \right| \leq \delta$.
\end{definition}
Note that for a random mapping $\sigma_2$ we expect $\sigma_1$ and $\sigma_2$ to differ at $\mathbb{E}_{\sigma_2 \sim \mathbb{Z}_d^n}\left[H\paren{\sigma_1,\sigma_2} \right] = n\paren{\frac{d-1}{d}}$ locations, and for a random mapping $\sigma$ and $i \sim \{0,\ldots,d-1\}$ we expect $\sigma$ to differ from $\vec{i}$ at $\mathbb{E}_{i\sim\mathbb{Z}_d,\sigma \sim \mathbb{Z}_d^n}\left[H\paren{\sigma,\vec{i}} \right] = n\paren{\frac{d-1}{d}}$ locations. Thus, with probability $1-o(1)$ a random mapping $\sigma_2$ will not be $\epsilon$-correlated with $\sigma_1$, but a random mapping $\sigma$ will be $\delta$-balanced with probability $1-o(1)$. 

 We let $X_k\subseteq [n]^k$ denote the space of ordered clauses of $k$ variables without repetition. We use $C \sim X_k$ to denote a clause $C$ chosen uniformly at random from $X_k$ and we use $\sigma\paren{C} \in \mathbb{Z}_d^k$ to denote the values of the corresponding variables in $C$. For example, if $d=10$, $C = \paren{3, 10,59}$ and $\sigma\paren{i} = \paren{i +1\mod 10}$ then $\sigma\paren{C} = \paren{4,1,0}$. 

We view each clause $C \in X_k$ as a {\em single-digit challenge}. The user responds to a challenge $C$ by computing $f\paren{\sigma\paren{C}}$, where $f: \mathbb{Z}_d^k \rightarrow \mathbb{Z}_d$ is a {\em human computable} function (see discussion below) and $\sigma:[n]\rightarrow \mathbb{Z}_d$ is the secret mapping that the user has memorized. For example, if $d=10$, $C = \paren{3, 10,59}$, $\sigma\paren{i} = \paren{i+1 \mod 10}$ and $f\paren{x,y,z} = \paren{x-y+z \mod{10}}$ then $f\paren{\sigma\paren{C}} = \paren{4 - 1+0  \mod{10}} = 3$. A length-$\challengelength$ password challenge $\vec{C} = \left\langle C_1,\ldots,C_\challengelength\right\rangle \in \paren{X_k}^\challengelength$ is a sequence of $t$ single digit challenges, and $f\paren{\sigma\paren{\vec{c}}} = \left\langle f\paren{\sigma\paren{C_1}}, \ldots, f\paren{\sigma\paren{C_\challengelength}} \right\rangle \in \mathbb{Z}_d^\challengelength$ denotes the corresponding response (e.g., a password).

Let's suppose that the user has $m$ accounts $A_1,\ldots,A_m$. In a human computable password management scheme we will generate $m$ length-$\challengelength$ password challenges $\vec{C}_1,\ldots,\vec{C}_m \in \paren{X_k}^\challengelength$. These challenges  will be stored in persistent memory so they are always accessible to the user as well as the adversary. When our user needs to authenticate to account $A_i$ he will be shown the length-$t$ password challenge $\vec{C}_i = \left\langle C_1^i,\ldots, C_\challengelength^i\right\rangle$. The user will respond by computing his password $p_i = \left\langle  f\paren{\sigma\paren{C_1^i}},\ldots,f\paren{\sigma\paren{C_\challengelength^i}} \right\rangle\in \mathbb{Z}_d^\challengelength$. \\

\subsection{Requirements for a Human Computable Function}
In our setting we require that the composite function $f\circ\sigma:X_k\rightarrow\mathbb{Z}_d$ is human computable. Informally, we say that a function $f$ is {\em human-computable} if a human user can evaluate $f$ {\em quickly} in his head.
\begin{requirement}
\label{req:humanComputable} A function $f$ is $\timet$-{\em human computable} for a human user $H$ if $H$ can reliably evaluate $f$  in his head in $\timet$ seconds.
\end{requirement}

We argue that a function $f$ will be {\em human-computable} whenever there is a fast streaming algorithm~\cite{alon1996space} to compute $f$ using only very simple primitive operations. A streaming algorithm is an algorithm for processing a data stream in which the input (e.g., the challenge $C$) is presented as a sequence of items that can only be examined once. In our context the streaming algorithm must have a very low memory footprint because a typical person can only keep $7 \pm 2$ `chunks' of information in working memory \cite{memory:chunks:miller1956} at any given time. Our streaming algorithm can only involve primitive operations that a person could execute quickly in his head (e.g., adding two digits modulo $10$, recalling a value $\sigma(i)$ from memory). 

\begin{definition} \label{def:HumanComputable}
Let $P$ be a  set of primitive operations. We say that a function $f$ is $(P,\stepst,\hat{m})$-computable if there is a space $\hat{m}$ streaming algorithm $\mathcal{A}$ to compute $f$ using only $\stepst$ operations from $P$. 
\end{definition}

In this paper we consider the following primitive operations $P$: $\mathbf{Add}$, $\mathbf{Recall}$ and $\mathbf{TableLookup}$. $\mathbf{Add}: \mathbb{Z}_{10} \times \mathbb{Z}_{10} \rightarrow \mathbb{Z}_{10}$ takes two digits $x_1$ and $x_2$ and returns $x_1 + x_2 \mod{10}$. $\mathbf{Recall}:[n]\rightarrow \mathbb{Z}_{10}$ takes an index $i$ and returns the secret value $\sigma(i)$ that the user has memorized. $\mathbf{TableLookup}:\mathbb{Z}_{10} \times [n]^{10} \rightarrow [n] $ takes a digit $x_1$ and finds the $x_1$'th value from a table of $10$ indices. We take the view that no human computable function should require users to store intermediate values in long-term memory because the memorization process would necessarily slow down computation. Therefore, we restrict our attention to space $\hat{m}$ streaming algorithms and do not include any primitive operation like $\mathbf{MemorizeValue}$. \\

\noindent{\bf Example: } The function $f\circ \sigma(i_1,\ldots,i_5) = \sigma\paren{i_1}+\ldots+\sigma\paren{i_5}$ requires $9$ primitive operations (five $\mathbf{Recall}$ operations and four $\mathbf{Add}$ operations) and requires space $\hat{m} =3$ (e.g., we need one slot to store the current total, one slot to store the next value from the data stream and one free slot to execute a primitive operation).

Similar primitive operations have been studied by cognitive physchologists (e.g., \cite{sternberg2004memory}). The time $\gamma_H$ it takes a human user $H$ to execute one primitive operation will typically improve with practice (e.g., \cite{hitch1978role}). We note that we allow this computation speed constant $\gamma_H$ to vary from user to user in the same way that two computers might operate at slightly different speeds. 
We conjecture that, after training, a human user $H$ with a moderate mathematical background will be able to evaluate a $(P,\stepst,3)$-computable function in $\timet \leq \stepst$ seconds --- the first author of this paper found that (after some practice) he could  evaluate $(P,9,3)$-computable functions in $7.5$-seconds ($\gamma_H \leq 1$). 

\begin{conjecture} \label{conj:humanComputable}
Let $P = \left\{\mathbf{Add},\mathbf{Recall},\mathbf{TableLookup} \right\}$. For each human user $H$ there is a small constant $\gamma_H > 0$ such that any  $(P,\stepst,3)$-computable function $f$ will be $\timet$-human computable for $H$ with $\timet = \gamma_H\stepst$. 
\end{conjecture}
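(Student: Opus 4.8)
The statement is a conjecture about human cognitive performance rather than a mathematical theorem, so strictly speaking there is no formal proof to give: the honest plan is to support it by a combination of a cognitive-model argument and a controlled empirical study. I can nonetheless sketch the structure of the justification I would aim for.

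The model argument proceeds as follows. A $(P,\stepst,3)$-computable function $f$ comes, by Definition~\ref{def:HumanComputable}, with an explicit space-$3$ streaming algorithm $\mathcal{A}$ that evaluates $f$ using exactly $\stepst$ primitives from $P$. The plan is to argue that a trained user $H$ can faithfully simulate $\mathcal{A}$ one step at a time. The crucial enabling fact is the space bound: three memory slots sit comfortably inside the $7\pm 2$ chunks a person can hold in working memory, so at no point must the user commit an intermediate value to (slow, unreliable) long-term memory. This is precisely why $\mathbf{MemorizeValue}$ was excluded from $P$. Under this assumption the total evaluation time decomposes as a sum over the $\stepst$ executed primitives, and if each primitive costs a bounded constant, then $\timet = \gamma_H \stepst$.

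The empirical part would train $H$ on the three primitives until performance stabilizes, measure the per-operation latency of each primitive type separately, and then test composite $(P,\stepst,3)$-computable functions over a range of values of $\stepst$, regressing measured $\timet$ against $\stepst$. The conjecture predicts a linear fit through the origin with slope $\gamma_H$; I would take $\gamma_H$ to be the \emph{largest} post-training single-operation latency across the three primitives, so that the prediction reads $\timet \le \gamma_H \stepst$.

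The main obstacle is that the statement charges a single uniform constant $\gamma_H$ to every primitive, whereas $\mathbf{Recall}$, $\mathbf{Add}$, and $\mathbf{TableLookup}$ almost certainly have different latencies, and latency may depend on practice and on the position of the operation within the stream. The hard part is therefore establishing genuine linearity: ruling out superlinear overhead from control/sequencing or from contention for the three working-memory slots. The cleanest way to rescue the statement is exactly the worst-case convention above, so that replacing the true weighted sum $\sum_{j=1}^{\stepst} \gamma_H^{(j)}$ by $\gamma_H \stepst$ can only inflate the estimate; what then remains to be defended empirically is merely that no per-operation cost \emph{grows} with $\stepst$, which the space-$3$ bound makes plausible. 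Secondary obstacles---learning curves, fatigue on long computations, and error rates that force recomputation---would have to be controlled for in the study design, and would bear on the reliability clause (``reliably evaluate'') of Requirement~\ref{req:humanComputable} rather than on the linear-time clause itself.
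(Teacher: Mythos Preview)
Your recognition that this is a conjecture about human cognition, not a mathematical theorem, is exactly right and matches the paper's own stance: the paper does not attempt a proof at all. Its sole support is anecdotal---the main author reports being able to evaluate $(P,9,3)$-computable functions in about $7.5$ seconds, suggesting $\gamma_H \leq 1$ for users with moderate mathematical background---and it explicitly notes that no large-scale empirical study has been conducted.

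Your proposal therefore goes well beyond the paper. The cognitive-model argument you give (space~$3$ fits within the $7\pm 2$ working-memory bound, so the user simulates the streaming algorithm step by step with bounded per-step cost) is the intended intuition behind the conjecture and is consistent with the surrounding discussion in the paper, but the paper does not spell it out as an argument. Your proposed empirical design---training to stabilization, per-primitive latency measurement, regression of $\timet$ against $\stepst$, taking $\gamma_H$ as the worst-case primitive latency---is a genuine methodological contribution absent from the paper. Your caveats about non-uniform primitive costs, sequencing overhead, fatigue, and error rates are likewise more careful than anything the paper says; the paper simply posits linearity and moves on.
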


\subsection{Password Unforgeability}
In the password forgeability game the adversary attempts to guess the user's password for a randomly selected account after he has seen the user's passwords at $m/\challengelength$ other randomly selected accounts. We say that a scheme is UF-RCA (Unforgeability against Random Challenge Attacks) secure if any probabilistic polynomial time adversary fails to guess the user's password with high probability. In the password forgeability game we select the secret mapping $\sigma:[n]\rightarrow \mathbb{Z}_d$ uniformly at random along with challenges $C_1,\ldots, C_{m+\challengelength} \sim X_k$. The adversary is given the function $f:\mathbb{Z}_d^k\rightarrow \mathbb{Z}_d$ and is shown the challenges $C_1,\ldots, C_{m+\challengelength}$ as well as the values $f\paren{\sigma\paren{C_i}}$ for $i \in \{1,\ldots, m\}$. The game ends when the adversary $\Adversary$ outputs a guess $\left\langle \guess{1},\ldots,\guess{\challengelength} \right\rangle \in \mathbb{Z}_d^\challengelength$ for the value of $\left\langle f\paren{\sigma\paren{C_{m+1}}},\ldots,f\paren{\sigma\paren{C_{m+\challengelength}}}\right\rangle$. We say that the adversary wins if he correctly guesses the responses to all of the challenges $C_{m+1},\ldots,C_{m+\challengelength}$, and we use  $\mathbf{Wins}\paren{\Adversary,n,m,\challengelength}$ to denote the event that the adversary wins the game (e.g., $\forall i \in \{1,\ldots,\challengelength\}. \guess{i} = f\paren{\sigma\paren{C_{m+i}}}$ ). We are interested in understanding how many example single digit challenge-response pairs the adversary needs to see before he can start breaking the user's passwords.

\begin{definition} \label{def:security} (Security) We say that a function $f:\mathbb{Z}_d^k\rightarrow \mathbb{Z}_d$ is $\UFRCASecure{n}{m}{\challengelength}{\delta}$ if for every probabilistic polynomial time (in $n,m$) adversary $\mathcal{A}$ we have
 $ \Pr\left[\mathbf{Wins}\paren{\mathcal{A},n,m,\challengelength} \right] \leq \delta$, where the randomness is taken over the selection of the secret mapping $\sigma \sim \mathbb{Z}_d^n$, the challenges $C_1,\ldots, C_{m+\challengelength}$ as well as the adversary\rq{}s coins.
\end{definition}

\paragraph{Discussion} Our security model is different from the security model of Blocki et al. \cite{NaturallyRehearsingPasswords} in which the adversary gets to adaptively select which accounts to compromise and which account to attack. While our security model may seem weaker at first glance because the adversary does not get to select which account to compromise/attack, we observe that the password management schemes of  Blocki et al. \cite{NaturallyRehearsingPasswords} are only secure against one to three adaptive breaches. By contrast, our goal is to design human computable password schemes that satisfy $\mathbf{UF}-\mathbf{RCA}$ security for large values of $m$ (e.g. $1000$), which means that it is reasonable to believe that the user has at most $m/\challengelength$ password protected accounts. If the user has at most $m/\challengelength$ accounts then union bounds imply that an adaptive adversary --- who gets to compromise all but one account --- will not be able to forge the password at any remaining account with probability greater than $m\delta/\challengelength$ (typically, $m \ll \challengelength/\delta$)\footnote{We assume in our analysis that the adversary does not get to pick the challenges $C$ that the user will solve. }. 

\subsection{Security Parameters of $f$}
Given a function $f:\mathbb{Z}_d^k\rightarrow \mathbb{Z}_d$ we define the function $Q^f:\mathbb{Z}_d^{k+1}\rightarrow\{\pm 1\}$ s.t. $Q^{f}\paren{x,i} = 1$ if $f(x) = i$; otherwise  $Q^{f}\paren{x,i} = -1$. We use $Q^f_\sigma$ to define a distribution over $X_k \times \mathbb{Z}_d$ (challenge-response pairs) as follows:
$\Pr_{ Q^{f}_\sigma}\left[ C,i\right] \doteq \frac{Q^{f}\paren{\sigma\paren{C},i}+1}{2\left|X_k \right|}$. 
Intuitively, $Q^f_\sigma$ is the uniform distribution over challenge response pairs $(C,j)$ s.t. $f\paren{\sigma\paren{C}} = j$. We also use 
 $Q^{f,j}:\mathbb{Z}_d^{k}\rightarrow\{\pm 1\}$ ($Q^{f,j}\paren{x} = Q^f\paren{x,j}$) to define a distribution over $X_k$. $\Pr_{Q^{f,j}_\sigma}[C] = \frac{Q^{f,j}(f(\sigma(C)))+1}{2\left|\{C' \in X_k: f(\sigma(C'))=j \right|} =  \Pr_{ Q^{f}_\sigma}\left[ (C,i)~\vline ~ i=j\right]$. We write the Fourier decomposition of a function $Q:\mathbb{Z}_d^k\rightarrow \{\pm 1\}$ as follows
 \[Q(x) = \sum_{\alpha \in \mathbb{Z}_d^{k}} \hat{Q}_\alpha \cdot \chi_\alpha\paren{x} \ ~ , \mbox{where the basis functions are~~~}  \chi_\alpha\paren{x} \doteq \exp\paren{\frac{-2\pi\sqrt{-1} \paren{x\cdot \alpha} }{d}} \ . \] 
We say that a function $Q$ has degree $\ell$ if $\ell = \max \left\{H\paren{\alpha}~\vline~\alpha\in\mathbb{Z}_d^k \wedge \hat{Q}_{\alpha} \neq 0 \right\}$ --- equivalently if $Q(x) = \sum_{i} Q_i(x)$ can be expressed as a sum of functions where each function $Q_i:\mathbb{Z}_d^k\rightarrow \mathbb{R}$ depends on at most $\ell$ variables.

\begin{definition} \label{def:distributionalComplexity}
We use $r(Q) \doteq \min\left\{H\paren{\alpha}~\vline ~\exists \alpha \in \mathbb{Z}_d^k. \hat{Q}_\alpha \neq 0 \wedge \alpha \neq \vec{0} \right\}$ to denote the distributional complexity of $Q$, and we use $r(f) = \min\left\{r\paren{Q^{f,j}}~\vline~j\in \mathbb{Z}_d \right\}$ to denote the distributional complexity of $f$. We use $$g(f) \doteq \min\left\{\ell \in \mathbb{N}\cup\{0\}~\vline~\exists \alpha \in \mathbb{Z}_d^\ell, S \subseteq [k], \hat{d} \in \mathbb{Z}_d. \mbox{s.t $|S|=\ell$ and $f_{|S,\alpha}$ is a linear function $\mod{\hat{d}}$} \right\} \ ,$$ to denote the minimum number of variables that must be fixed to make $f$ a linear function. Here, $f_{|S,\alpha}:\mathbb{Z}_d^{k-\ell}\rightarrow \mathbb{Z}_d$ denotes the function $f$ after fixing the variables at the indices specified by $S$ to $\alpha$. Finally, we use $s(f) \doteq \min\{r(f)/2, g(f)+1\}$ as our composite security measure.
\end{definition}

We conjecture that a polynomial time adversary will need to see $m = n^{s(f)}$ challenge-response pairs before he can approximately recover the secret mapping $\sigma$. We call this conjecture about the hardness of random planted constraint satisfiability problems RP-CSP (Conjecture \ref{conj:hardnessConjecture}). In support of RP-CSP we prove that any statistical algorithm needs to see at least $m=\tilde{\Omega}\paren{n^{r(f)/2}}$ challenge response pairs to (approximately) recover the secret mapping $\sigma$ and we observe that a polynomial time adversary would need to see $m=O\paren{n^{g(f)+1}}$ challenge-response pairs to recover $\sigma$ using Gaussian Elimination. In Section \ref{sec:SecurityAnalysis} we show that the human computable password scheme will be UF-RCA secure provided that RP-CSP holds and that $f$ satisfies a few moderate properties (e.g., the output of $f$ is evenly distributed).

\begin{conjecture}[RP-CSP] \label{conj:hardnessConjecture}
For every  probabilistic polynomial time adversary $\Adversary$ and every $\epsilon,\epsilon' >0$ there is an integer $N$ s.t. for all $n > N$, $m \leq n^{\min \left\{r(f)/2,g(f)+1-\epsilon' \right\}}$ we have
$ \Pr\left[ \mathbf{Success}\paren{\Adversary,n,m,\epsilon} \right] \leq \mu(n) $,
where $\mathbf{Success}\paren{\Adversary,n,m,\epsilon}$ denotes the event that $\Adversary$ finds a mapping $\sigma'$ that is $\epsilon$-correlated with $\sigma$ given $m$ randomly selected challenge response pairs $\paren{C_1,f\paren{\sigma\paren{C_1}}},\ldots,\paren{C_m,f\paren{\sigma\paren{C_m}}}$ and  $\mu(n)$ is a negligible function. The probability is over the selection of the random mapping $\sigma$, the challenges $C_1,\ldots,C_m$ and the random coins of the adversary.
\end{conjecture}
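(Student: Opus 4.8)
The final statement is labeled a \emph{conjecture} (RP-CSP), and for good reason: an unconditional proof against \emph{every} probabilistic polynomial time $\Adversary$ would show that this average-case planted constraint satisfaction problem lies outside $\mathbf{BPP}$, which is beyond current techniques and would entail average-case circuit lower bounds we cannot presently prove. The only route that yields an actual \emph{proof} of the stated quantifier ``for every PPT $\Adversary$'' is therefore a \emph{conditional} one: reduce RP-CSP to a standard average-case hardness assumption. The plan is to fix as the base assumption the hardness of distinguishing planted from uniform instances of the associated $\mathbb{Z}_d$-alphabet predicate CSP (the non-binary analogue of Goldreich's PRG conjecture and of the Feldman et al. \cite{feldman2013complexity} planted-satisfiability assumption), and then to build a chain of reductions establishing that any $\Adversary$ witnessing $\mathbf{Success}\paren{\Adversary,n,m,\epsilon}$ in the regime $m \leq n^{\min\left\{r(f)/2,g(f)+1-\epsilon'\right\}}$ contradicts that assumption.

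The reduction would proceed in three steps. First I would reduce the search task ``find $\sigma'$ that is $\epsilon$-correlated with $\sigma$'' to a distinguishing task, via a search-to-decision / self-reducibility argument on the planted CSP, so that the assumption can be stated in its cleanest indistinguishability form; the approximate (rather than exact) recovery requirement and the freedom in $\epsilon$ must be carried through this step carefully. Second I would treat the two regimes defining $s(f) = \min\{r(f)/2,\, g(f)+1\}$ separately. For $m \leq n^{r(f)/2}$, the Fourier/hypercontractivity control of low-degree correlations (already developed in the paper to bound the statistical dimension) should show that the sampled pairs are indistinguishable from pairs carrying no usable correlation with $\sigma$, so a successful distinguisher violates the base assumption. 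For $m \leq n^{g(f)+1-\epsilon'}$, I would argue that the only correlations strong enough to survive above the statistical threshold are those exposed by the linear structure of $f$ after fixing $g(f)$ variables, and that $n^{g(f)+1-\epsilon'}$ samples are too few for any single linearizing restriction to collect enough clauses to assemble a consistent linear system over $\sigma$ --- a counting/dimension bound that formalizes the Gaussian-elimination barrier. Taking the minimum of the two regimes then bounds $\Pr\left[\mathbf{Success}\paren{\Adversary,n,m,\epsilon}\right]$ by a negligible $\mu(n)$.

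The hard part --- and the reason this remains a conjecture rather than a theorem --- is combining the two regimes into a single reduction that rules out \emph{arbitrary} polynomial-time strategies. There is currently no proof that the only useful attacks are ``statistical plus linear''; a clever $\Adversary$ could in principle interleave partial linear structure with nonlinear low-degree correlations in a way that neither the statistical-dimension lower bound nor the Gaussian-elimination counting argument individually excludes. Establishing that $\min\{r(f)/2, g(f)+1\}$ is genuinely tight against all PPT adversaries, rather than merely against each known attack class in isolation, is precisely the missing ingredient. This is why the honest fallback --- which I would present as the provable core supporting the conjecture --- is the unconditional statistical-query lower bound of $\tilde{\Omega}\paren{n^{r(f)/2}}$ for the first regime, the explicit accounting of Gaussian elimination for the second, and the experimental CSP-solver evidence, none of which amounts to the full PPT statement but together make it plausible.
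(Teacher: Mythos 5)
You have correctly recognized that RP-CSP is stated as a conjecture which the paper does not (and does not claim to) prove, and your ``honest fallback'' coincides exactly with the paper's own supporting evidence: the unconditional statistical-algorithm lower bound of $\tilde{\Omega}\paren{n^{r(f)/2}}$ samples (Theorem \ref{thm:SecurityLowerBound}), the accounting showing that Gaussian-elimination attacks require $m=\tilde{\Omega}\paren{n^{1+g(f)}}$ challenge-response pairs (Fact \ref{fact:GaussianSecurity} and Remark \ref{remark:security}), and the CSP-solver experiments and public challenges (Appendices \ref{apdx:CSPSolver} and \ref{sec:challenge}). Your speculative conditional-reduction program goes beyond what the paper attempts, but since you yourself pinpoint why it cannot currently be completed --- no argument rules out adversaries interleaving linear structure with low-degree statistical correlations --- your overall assessment matches the paper's treatment of the statement as a hardness assumption supported by evidence rather than a theorem.
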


% !TEX root = humanComputablePasswords.tex
\section{Candidate Secure Human Computable Functions} \label{sec:Candidates}
In this section we present a family of candidate human computable functions. We consider the usability of these human computable password schemes in Section \ref{subsec:Usability}, and we analyze the security of our schemes in Section \ref{subsec:SecurityAnalysis}. 

We first introduce our family of candidate human computable functions (for all of our candidate human computable functions $f:\mathbb{Z}_d^k\rightarrow\mathbb{Z}_d$ we fix $d = 10$ because most humans are used to performing arithmetic operations on digits). Given integers $k_1 > 0$ and $k_2 > 0$ we define the function $f_{k_1,k_2}:\mathbb{Z}_{10}^{10+k_1+k_2}$ as follows 
\[f_{k_1,k_2}\paren{x_0,\ldots,x_{9+k_1+k_2}} = x_{j} + \sum_{i= 10+k_1}^{9+k_1+k_2} x_i \mod{10} \mbox{,~~~where~~} j = \paren{\sum_{i=10}^{9+k_1} x_i} \mod{10} \ . \]

\paragraph{Authentication Process} We briefly overview the authentication process ---  see Algorithms \ref{alg:GenStories} and \ref{alg:Authenticate} in Appendix \ref{apdx:AuthenticationProcess} for a more formal presentation of the authentication process. We assume that the mapping $\sigma:\{1,...,n\}\rightarrow \mathbb{Z}_{10}$ is generated by the user's local computer in secret. The user may be shown mnemonic helpers (see discussion below) to help memorize $\sigma$, but these mnemonic helpers are discarded immediately afterward. After the user has memorized $\sigma$ he can create a password $pw_i$ for an account $A_i$ as follows: the user's local computer generates $\challengelength$ random single-digit challenges $C_1^i,\ldots,C_\challengelength^i \in X_\challengelength$ and the user computes $pw_i = f\paren{\sigma\paren{C_1}},\ldots,f\paren{\sigma\paren{C_\challengelength}}$. The authentication server for account $A_i$ stores the cryptographic hash of $pw_i$, while the challenges $C_1^i,\ldots,C_\challengelength^i \in X_\challengelength$ are stored in public memory (e.g., on the user's local computer), which means that they can be viewed by the adversary as well as the legitimate user. To authenticate the user retrieves the public challenges $C_1^i,\ldots,C_\challengelength^i$ for account $A_i$ and computes $pw_i$. The server for $A_i$ verifies that the cryptographic hash of $pw_i$ matches its records. To protect users from offline attacks in the event of a server breach, the password $pw_i$ should be stored using a slow cryptographic hash function $\Hash$ like BCRYPT \cite{bcrypt}. %Servers could also use GOTCHAs \cite{blocki2013gotcha} or HOSPs \cite{canetti2006mitigating} for additional protection. 

\subsection{ Usability } \label{subsec:Usability} In our discussion of usability we focus on the time it would take a human user to compute a password once he has memorized the secret mapping $\sigma$. Other important considerations include the challenge of memorizing and rehearsing the secret mapping $\sigma$ to ensure that the user remembers the secret mapping $\sigma$ over time. 

\subsubsection{Computation Time} Given a challenge $C = \paren{c_0,\ldots,c_{9+k_1+k_2}} \in X_{10+k_1+k_2}$ we can compute $f_{k_1,k_2}\paren{\sigma\paren{C}}$ we compute $j =  \sum_{i=10}^{9+k_1} \sigma\paren{c_i} \mod{10}$ using $k_1-1$ $\mathbf{Add}$ operations and $k_1$ $\mathbf{Recall}$ operations. We then execute  $\mathbf{TableLookup}\paren{j,c_0,\ldots,c_9}$ to obtain $c_j$. Now we need $k_2$ $\mathbf{Add}$ operations and $k_2+1$ $\mathbf{Recall}$ operations to compute the final response $\sigma\paren{c_j} +  \sigma\paren{c_{10+k_1}} + \ldots + \sigma\paren{c_{9+k_1+k_2}}$. 

\begin{fact} \label{fact:HumanComputable}
Let $P = \left\{\mathbf{Add},\mathbf{Recall},\mathbf{TableLookup} \right\}$ then
$f_{k_1,k_2}\circ \sigma$ is $(P,2k_1+2k_2+1,3)$-computable.
\end{fact}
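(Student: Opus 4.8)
The plan is to exhibit an explicit streaming algorithm $\mathcal{A}$ that computes $f_{k_1,k_2}\circ\sigma$, and to verify directly that it uses $2k_1+2k_2+1$ operations from $P$ and only $\hat{m}=3$ slots of working memory, matching Definition~\ref{def:HumanComputable}. The algorithm is exactly the procedure sketched in the paragraph preceding the statement, organized into three phases that process the displayed challenge $C=\paren{c_0,\ldots,c_{9+k_1+k_2}}$ left to right. In Phase 1 the algorithm reads the block $c_{10},\ldots,c_{9+k_1}$, recalling each $\sigma\paren{c_i}$ and accumulating their sum modulo $10$ into a single register to obtain $j=\sum_{i=10}^{9+k_1}\sigma\paren{c_i}\bmod 10$. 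In Phase 2 it executes $\mathbf{TableLookup}\paren{j,c_0,\ldots,c_9}$ to retrieve the index $c_j$. In Phase 3 it recalls $\sigma\paren{c_j}$ and reads the block $c_{10+k_1},\ldots,c_{9+k_1+k_2}$, recalling each $\sigma\paren{c_i}$ and accumulating the running sum $\sigma\paren{c_j}+\sum_{i=10+k_1}^{9+k_1+k_2}\sigma\paren{c_i}\bmod 10$, which is the desired output $f_{k_1,k_2}\paren{\sigma\paren{C}}$.

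Next I would count operations by phase. Phase 1 issues $k_1$ $\mathbf{Recall}$ operations (one per index in the block) and $k_1-1$ $\mathbf{Add}$ operations (to fold $k_1$ summands into one total). Phase 2 issues a single $\mathbf{TableLookup}$. Phase 3 issues $k_2+1$ $\mathbf{Recall}$ operations (for $c_j$ together with the $k_2$ indices of the final block) and $k_2$ $\mathbf{Add}$ operations (to fold $k_2+1$ summands into one total). Summing gives $\paren{k_1+k_2+1}$ $\mathbf{Recall}$ operations, $\paren{k_1+k_2-1}$ $\mathbf{Add}$ operations, and one $\mathbf{TableLookup}$, for a total of $2k_1+2k_2+1$ operations from $P$, as claimed.

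For correctness, the only point to check is that $j$ computed in Phase 1 equals the index selected by the definition of $f_{k_1,k_2}$, and that Phase 3 accumulates precisely $x_j+\sum_{i=10+k_1}^{9+k_1+k_2}x_i$ with $x_\ell=\sigma\paren{c_\ell}$; both follow immediately by matching indices against the definition. The remaining obligation, and the one subtle point, is the space bound $\hat{m}=3$: throughout all three phases the algorithm maintains one register for the running accumulator, uses one slot to hold the single recalled value $\sigma\paren{c_i}$ currently being added, and keeps one free scratch slot in which each $\mathbf{Add}$ is carried out, exactly as in the worked example following Definition~\ref{def:HumanComputable}. The main thing to be careful about is the $\mathbf{TableLookup}$ in Phase 2, which needs access to $c_0,\ldots,c_9$ even though these positions were streamed past before $j$ was known; this does not violate the space bound because the challenge is displayed by the semi-trusted computer and $\mathbf{TableLookup}$ is a single primitive operation that reads the displayed table directly, so the ten table indices are never held in the user's working memory. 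Verifying that this reading discipline is consistent with the streaming model of Definition~\ref{def:HumanComputable} is the only genuine obstacle; the operation count and correctness are routine index bookkeeping.
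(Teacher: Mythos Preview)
Your proposal is correct and follows essentially the same approach as the paper: the paper's argument is precisely the three-phase count in the paragraph immediately preceding the statement ($k_1$ $\mathbf{Recall}$ and $k_1-1$ $\mathbf{Add}$ for $j$, one $\mathbf{TableLookup}$, then $k_2+1$ $\mathbf{Recall}$ and $k_2$ $\mathbf{Add}$ for the final sum). If anything, your treatment of the $\hat{m}=3$ space bound and the status of $\mathbf{TableLookup}$ with respect to the streaming model is more explicit than what the paper provides.
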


Fact \ref{fact:HumanComputable} and Conjecture \ref{conj:humanComputable} would imply that $f_{1,3}$ and $f_{2,2}$ are  $\timet$-human computable with $\timet = 9$ seconds for humans $H$ with computation constant $\gamma_H \leq 1$. The functions $f_{1,3}$ and $f_{2,2}$ were both $\timet$-human computable with $\timet = 7.5$ seconds for the main author of this paper. While the value of $\gamma_H$ might be larger for many human users who are less comfortable with mental arithmentic, we note we may have $\gamma_H \ll 1$ for many human users after training (e.g., see \url{https://youtu.be/_-2L6ZxFacg} for a particularly impressive demonstration of mental arithmetic by young children.).

%\paragraph{Discussion} To help the user compute the response to a single digit challenge $C$ more quickly the challenge can be displayed to the human in a more helpful manner. For example, if the user memorized a mapping $\sigma \in \mathbb{Z}_{10}^{26}$ from characters to digits then the challenge $C = (`E',$ $`F',`G',$ $`H',`I',`J',`K'$,$`L',`M',`N',`A',`B',`C',`D')$ $\in X_{14}$ might be displayed to the user as in Table \ref{tbl:challengeLayout}.  Now to compute $f_{2,2}\paren{\sigma\paren{`C'}}$ the user would execute the following steps (1) Recall $\sigma(`A')$ --- the number associated with the letter A, (2) Recall $\sigma(`B')$, (3) Compute $i = \sigma(`A')+\sigma(`B') \mod{10}$ --- without loss of generality suppose that $i = 8$, (4) Find the letter  at index $i$---`M' if $i=8$, (5) Recall $\sigma(`M')$ (6) Recall $\sigma(`C')$ (7) Compute $j = \sigma(`M')+\sigma(`C') \mod {10}$ (8) Recall $\sigma(`D')$ (9) Return $j+\sigma(`D') \mod{10}$.

\subsubsection{Memorizing and Rehearsing $\sigma$} Memorizing the secret mapping might be the most difficult part of our schemes. In practice, we envision that the user memorizes a mapping from $n$ objects (e.g., images) to digits. For example, if  $n=26$ and $d=10$ then the user might memorize a random mapping from characters to digits. The first author of this paper was able to memorize a mapping from $n=100$ images to digits in about 2 hours. We conjecture that the process could be further expedited using mnemonic helpers --- see discussion in the appendix.

After the user memorizes $\sigma$ he may need to rehearse parts of the mapping periodically to ensure that he does not forget it. One of the benefits of our human computable password schemes is that the user will get lots of practice rehearsing the secret mapping each time he computes a password. In fact users who authenticate frequently enough will not need to spend any extra time rehearsing the secret mapping as they will get sufficient natural practice to remember $\sigma$.

\cut{Blocki et al.\cite{NaturallyRehearsingPasswords} introduced a usability model to estimate how much extra effort that a user would need to spend rehearsing the mapping $\sigma$ (the results are summarized in table \ref{tab:Usability}). See Appendix \ref{apdx:Rehearsal} for an overview of the usability model of Blocki et al. \cite{NaturallyRehearsingPasswords}. Briefly, in their usability model a user rehearses the association $\paren{i,\sigma\paren{i}}$ {\em naturally} whenever he needs to recall the value of $\sigma\paren{i}$ while computing his password for any of his password protected accounts. If a user does not get sufficient natural rehearsal for the association $\paren{i,\sigma\paren{i}}$ then he will need to be reminded to rehearse this association so that he does not forget the value $\sigma\paren{i}$ --- they call this an extra rehearsal.

Blocki et al \cite{NaturallyRehearsingPasswords} quantify usability by calculating $\mathbb{E}\left[\TotalExtraRehearsals{365}\right]$, the expected number of extra rehearsals that the user will be required to do to remember the secret mapping $\sigma$  during the first year. Blocki et al \cite{NaturallyRehearsingPasswords}  provided a formula to compute $\mathbb{E}\left[\TotalExtraRehearsals{365}\right]$ (see Theorem \ref{thm:ExtraRehearsals} in Appendix \ref{apdx:Rehearsal}), which we used to obtain the results in table \ref{tab:Usability}. To evaluate this formula the usability model must include rehearsal requirements (e.g., to specify how often each association needs to be rehearsed) and a visitation schedule (e.g., to specify how often the user visits each of his accounts on average). We used the same parameters as Blocki et al \cite{NaturallyRehearsingPasswords} to obtain the results in table \ref{tab:Usability}. In particular, our rehearsal requirements are given by the Expanding Rehearsal Assumption \cite{NaturallyRehearsingPasswords} (e.g., it is sufficient to rehearse once during each the following time-intervals (days): $[1,2), [2,4), [4,8)$, and so on) and }

%\paragraph{Discussion} One of the advantages of our human computable passwords schemes is that memorization is essentially a one time cost for our Very Active, Typical and Occasional users. Once the user has memorized the mapping $\sigma:\{1,...,n\}\rightarrow\mathbb{Z}_d$ he will get sufficient natural rehearsal to maintain this memory. In fact, our schemes require the user to expend {\em less} extra effort rehearsing his secret mapping than the \sharedCues~password management scheme of Blocki et al. \cite{NaturallyRehearsingPasswords} (with the exception of SC-0 -- the least secure \sharedCues~ scheme). Intuitively, this is because human computable password schemes require to recall $\sigma(i)$ for multiple different values of $i$ to respond to each single-digit challenge $C$. To compute $f_{2,2}\paren{\sigma\paren{\{0,\ldots,13 \}}}$ the user would need to recall the values of $\sigma(10),\sigma(11),\sigma(12),\sigma(13)$ and $\sigma(j)$, where $j  = \sigma(10)+\sigma(11) \mod{10}$. If the user has $10$ digit passwords then he will naturally rehearse the value of $\sigma(i)$ for up to fifty different values of $i$ each time he computes one of his passwords. While the user needs to spend extra time computing his password each time he authenticates in our human computable password scheme, this extra computation time gives the user more opportunities to rehearse his secret mapping.

\subsection{Security Analysis} \label{subsec:SecurityAnalysis}

 Claim \ref{claim:CandidateFamilySecurityParameters} demonstrates that  $s\paren{f_{k_1,k_2}} = \min\left\{(k_2+1)/2,k_1+1 \right\}$. Intuitively, the security of our human computable password management scheme will increase with $k_1$ and $k_2$. However, the work that the user needs to do to respond to each single-digit challenge is proportional to $2k_1+2k_2+1$ (See Fact \ref{fact:HumanComputable}). 

\newcommand{\clmCandidateFamilySecurityParameters}{Let $0 \le k_1$ and  $k_2>0$ be given and let $f = f_{k_1,k_2}$ we have $g(f) = \min\{ k_1,10\}$, $r(f) = k_2+1$ and $s(f) = \min\left\{\frac{k_2+1}{2}, k_1+1, 11 \right\}$.}
\begin{claim}\label{claim:CandidateFamilySecurityParameters}
\clmCandidateFamilySecurityParameters
\end{claim}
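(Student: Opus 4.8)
The plan is to compute the two structural parameters $g(f)$ and $r(f)$ independently; the value $s(f)=\min\{r(f)/2,\,g(f)+1\}=\min\{(k_2+1)/2,\,k_1+1\}$ then follows by direct substitution into Definition~\ref{def:distributionalComplexity}. Write $f=f_{k_1,k_2}$ and split its $10+k_1+k_2$ inputs into the \emph{data} block $x_0,\dots,x_9$, the \emph{selector} block $x_{10},\dots,x_{9+k_1}$ (size $k_1$), and the \emph{summand} block $x_{10+k_1},\dots,x_{9+k_1+k_2}$ (size $k_2$). Since $f=x_j+\sum_{i=10+k_1}^{9+k_1+k_2}x_i \bmod 10$ with $j=\sum_{i=10}^{9+k_1}x_i \bmod 10$, the only source of nonlinearity is the data-indexed lookup $x_j$ (a multiplexer), while the summand block is already linear.

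First I would show $g(f)=k_1$. The bound $g(f)\le k_1$ is immediate: fixing all $k_1$ selectors pins $j$ to a constant $\ell$, leaving $f=x_\ell+\sum_i x_i \bmod 10$, which is linear $\bmod\,10$. For $g(f)\ge k_1$ I would show that no set $S$ with $|S|<k_1$ linearizes $f$ under any assignment $\alpha$. Because $|S|<k_1\le 10$, the restriction leaves at least one free selector $x_s$ and at least one free data variable $x_a$. Holding all other variables fixed and letting $x_s$ range over $\mathbb{Z}_{10}$, the index $j$ takes every value in $\{0,\dots,9\}$; in particular $j=a$ for exactly one value of $x_s$ and $j\ne a$ otherwise. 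Hence the discrete derivative $f(\dots,x_a+1,\dots)-f(\dots,x_a,\dots)$ equals $1$ when $j=a$ and $0$ when $j\ne a$. A function that is linear modulo any $\hat d$ has constant discrete derivatives, so this non-constant derivative contradicts linearity and yields $g(f)\ge k_1$.

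Next I would show $r(f)=k_2+1$ by analyzing the Fourier spectrum of $Q^{f,j}$. Using the indicator identity $\mathbf{1}[f(x)=j]=\tfrac1d\sum_{t=0}^{d-1}\omega^{t(f(x)-j)}$ for a primitive $d$-th root of unity $\omega$, one gets $Q^{f,j}(x)=\tfrac2d\sum_{t}\omega^{t(f(x)-j)}-1$, so every $\alpha\ne\vec 0$ in the support of $\hat Q^{f,j}$ comes from the spectrum of $x\mapsto\omega^{tf(x)}$ for some $t\ne 0$. Factoring $\omega^{tf(x)}=\omega^{tx_j}\prod_{i=10+k_1}^{9+k_1+k_2}\omega^{tx_i}$, the summand product is a pure character product, forcing $\alpha_i=t\ne 0$ on each of the $k_2$ summand coordinates (weight exactly $k_2$). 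Expanding $\omega^{tx_j}=\tfrac1d\sum_{\ell=0}^{9}\sum_{u}\omega^{u(\sum_{\text{sel}}x-\ell)}\omega^{tx_\ell}$ shows it has no $\vec 0$-frequency component for $t\ne 0$ (consistent with $\mathbb{E}_x[\omega^{tx_j}]=0$, as each data variable is uniform), so it adds weight at least $1$. Thus every nonzero $\hat Q^{f,j}_\alpha$ has $H(\alpha)\ge k_2+1$, i.e.\ $r(Q^{f,j})\ge k_2+1$ for all $j$. Conversely, the $u=0$ term for a single $\ell$ produces the frequency vector carrying $t$ on $x_\ell$, $t$ on every summand, and $0$ elsewhere; only this one value of $t$ contributes to that $\alpha$ (any other $t'$ would demand a different summand frequency), so $\hat Q^{f,j}_\alpha$ is a nonzero multiple of $\omega^{-tj}$. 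This attains weight $k_2+1$, giving $r(f)=\min_j r(Q^{f,j})=k_2+1$ and hence the claimed value of $s(f)$.

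The step I expect to be the main obstacle is the no-cancellation bookkeeping in the $r(f)$ argument: I must check that summing over the residue $t$, the lookup index $\ell$, and the inner index $u$ neither cancels the weight-$(k_2+1)$ coefficients nor creates spurious lower-weight terms, and that the weight count cleanly separates the summand block (forced weight $k_2$) from the multiplexer block (forced weight $\ge 1$). By comparison the $g(f)$ computation is routine once the multiplexer-derivative argument is set up; its only subtlety is confirming that nonlinearity persists for every admissible modulus $\hat d$, which the constant-derivative criterion handles uniformly.
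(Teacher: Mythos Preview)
Your proposal is correct and, in places, more complete than the paper's own argument.

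For $g(f)$, the paper proves the upper bound $g(f)\le k_1$ exactly as you do, and for the lower bound it simply asserts that ``if we don't fix all of $x_{10},\dots,x_{9+k_1}$ then the resulting function will not be linear.'' Your discrete-derivative argument makes this precise: with $|S|<k_1\le 10$ there is always a free selector $x_s$ and a free data variable $x_a$, and the derivative $\partial_{x_a}f\in\{0,1\}$ depends on $x_s$, which rules out linearity modulo any $\hat d\ge 2$.

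For $r(f)$ the approaches genuinely differ. The paper shows $r(f)\ge k_2+1$ by a conditional-probability argument: it claims that for every $\alpha$ with $H(\alpha)\le k_2$ and every $i,t$ one has $\Pr[f(x)=t\mid \alpha\cdot x\equiv i]=\tfrac{1}{10}$, and then computes $\hat Q^{f,t}_\alpha=\tfrac{1}{10}\sum_i e^{-2\pi i\sqrt{-1}/10}\,\mathbb{E}[Q^{f,t}\mid\alpha\cdot x\equiv i]=0$. You instead expand $\omega^{tf(x)}$ as a product of the summand character and the multiplexer $\omega^{tx_j}$, and read off that every nonzero frequency must carry weight $k_2$ on the summand block and (since $\mathbb{E}[\omega^{tx_j}]=0$) weight at least $1$ on the data/selector block. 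Your route has the advantage that it also yields the matching upper bound: the $u=0$ term of your multiplexer expansion produces, for each data index $\ell$, a Fourier coefficient supported on exactly $\{x_\ell\}\cup\{\text{summands}\}$ with value $\tfrac{2}{d^2}\omega^{-tj}\ne 0$, so $r(f)\le k_2+1$. The paper's appendix only proves $r(f)\ge k_2+1$ and relegates the upper bound to the informal biasing discussion in the main text. The ``no-cancellation'' worry you flag is handled by exactly the observation you make: the summand frequencies pin down $t$, and distinct $\ell$'s land on disjoint data coordinates, so no cross-terms interfere.
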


An intuitive way to see that $r\paren{f_{k_1,k_2}} > k_2$ is to observe that we cannot bias the output of $f_{k_1,k_2}$ by fixing $k_2$ variables. Fix the value of {\em any} $k_2$ variables and draw the values for the other $k_1+10$ variables uniformly at random from $\mathbb{Z}_{10}$. One of the $k_2+1$ variables in the sum $x_j+\sum_{i=10+k_1}^{9+k_1+k_2} x_i \mod{10}$ will not be fixed. Thus, the probability that the final output of $f_{k_1,k_2}\paren{x_0,\ldots,x_{9+k_1+k_2}}$ will be $r$ is exactly $1/10$ for each digit $r \in \mathbb{Z}_{10}$. Similarly, an intuitive way to see that $r\paren{f_{k_1,k_2}} \leq k_2+1$ is to observe that we can bias the value of $f_{k_1,k_2}\paren{x_0,\ldots,x_{9+k_1+k_2}}$ by fixing the value of $k_2+1$ variables. In particular if we fix the variables $x_0,x_{10+k_1},\ldots,x_{9+k_1+k_2}$ so that $0 = x_0+\sum_{i=10+k_1}^{9+k_1+k_2} x_i \mod{10}$  then the output of $f_{k_1,k_2}\paren{x_0,\ldots,x_{9+k_1+k_2}}$ is more likely to be $0$ than any other digit. The full proof of Claim \ref{claim:CandidateFamilySecurityParameters} can be found in Appendix \ref{apdx:ProofOfClaims}.

Theorem \ref{thm:UFRCASecure} states that our human computable password management scheme is UF-RCA secure as long as RP-CSP (Conjecture \ref{conj:hardnessConjecture}) holds. In Section \ref{sec:TechnicalResults} we provide strong evidence in support of RP-CSP. In particular, no statistical algorithm can approximately recover the secret mapping given $m = \tilde{O}\paren{n^{r(f)/2}}$ challenge-response pairs. To prove Theorem \ref{thm:UFRCASecure} we need to show that an adversary that breaks UF-RCA security for $f_{k_1,k_2}$ can be used to approximately recover the secret mapping $\sigma$. We prove a more general result in Section \ref{sec:SecurityAnalysis}.

\begin{theorem} \label{thm:UFRCASecure}
Let $\epsilon, \epsilon' > 0, \challengelength \geq 1$ be given. Under the RP-CSP conjecture (Conjecture \ref{conj:hardnessConjecture}) the human computable password scheme defined by $f_{k_1,k_2}$ is $\UFRCASecure{n}{m}{\challengelength}{\delta}$ for any $m \leq n^{\min\{\paren{k_2+1}/2,k_1+1-\epsilon' \}}-\challengelength$ and $\delta > \paren{\frac{1}{10}+\epsilon}^\challengelength$.
\end{theorem}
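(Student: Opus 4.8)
The plan is to prove the contrapositive by a reduction to the RP-CSP hardness assumption (Conjecture \ref{conj:hardnessConjecture}): I will show that a successful forger yields an efficient algorithm that approximately recovers $\sigma$, which RP-CSP forbids. Concretely, suppose toward a contradiction that $f_{k_1,k_2}$ is not $\UFRCASecure{n}{m}{t}{\delta}$ for the stated parameters, so there is a PPT adversary $\mathcal{A}$ with $\Pr[\mathbf{Wins}(\mathcal{A},n,m,t)] > \delta = (1/10+\epsilon)^t$ for infinitely many $n$. I will build a PPT algorithm $\mathcal{B}$ that, on input $m+t$ challenge-response pairs drawn from $Q^f_\sigma$, outputs a mapping $\sigma'$ that is $\epsilon''$-correlated with $\sigma$ with non-negligible probability. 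By Claim \ref{claim:CandidateFamilySecurityParameters} we have $g(f_{k_1,k_2})=k_1$ and $r(f_{k_1,k_2})=k_2+1$, so the hypothesis $m \le n^{\min\{(k_2+1)/2,\, k_1+1-\epsilon'\}}-t$ gives $m+t \le n^{\min\{r(f)/2,\, g(f)+1-\epsilon'\}}$; thus $\mathcal{B}$ operates inside exactly the regime where RP-CSP asserts recovery is hard, and its success is the desired contradiction. This also explains the ``$-t$'' in the hypothesis: the reduction spends $t$ extra labelled pairs.

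The first ingredient is that $f_{k_1,k_2}$ has uniformly distributed output: for uniform $\sigma$ and $C \sim X_k$, the coordinate $x_{9+k_1+k_2}$ is a summand whose index is distinct from all other clause indices, hence $\sigma(c_{9+k_1+k_2})$ is a free, uniform, independent digit, making the mod-$10$ sum uniform (this is the same observation used to argue $r(f)>k_2$ after Claim \ref{claim:CandidateFamilySecurityParameters}). Consequently blind guessing wins a single challenge with probability exactly $1/10$ and the whole game with probability $(1/10)^t$, so $\mathcal{A}$'s advantage over $(1/10+\epsilon)^t$ is meaningful. The second ingredient is to extract from $\mathcal{A}$ a single-challenge predictor $\mathcal{P}$ whose accuracy is bounded away from $1/10$. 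Writing $E_i$ for the event that $\mathcal{A}$'s $i$-th guess is correct and applying the chain rule $\Pr[\bigcap_i E_i]=\prod_{i=1}^{t}\Pr[E_i \mid E_1,\ldots,E_{i-1}] > (1/10+\epsilon)^t$, at least one conditional factor must exceed $1/10+\epsilon$, say at index $i^*$. I will realize this conditioning operationally: $\mathcal{B}$ feeds $\mathcal{A}$ the first $m$ pairs as revealed examples, and among the $t$ target slots it places one genuinely unknown probe challenge together with $i^*-1$ dummy target challenges whose responses $\mathcal{B}$ already knows (drawn from the $t$ spare labelled pairs), accepting $\mathcal{A}$'s answer on the probe only when $\mathcal{A}$ answers all the dummies correctly. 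Conditioned on acceptance, $\mathcal{P}$ predicts the probe with accuracy exceeding $1/10+\epsilon$, while the distribution $\mathcal{A}$ sees remains that of the real forgery game.

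Finally I will turn $\mathcal{P}$ into an approximate recovery procedure, exploiting the algebraic structure of $f_{k_1,k_2}$: for each index $i\in[n]$, $\mathcal{B}$ generates many challenges in which $i$ is placed so that the response depends (linearly, over $\mathbb{Z}_{10}$) on $\sigma(i)$ given the other, known-or-controlled coordinates, and aggregates $\mathcal{P}$'s outputs by a correlation/majority vote to estimate $\sigma(i)$; averaging over polynomially many independent challenges amplifies the small per-call advantage so that the recovered $\sigma'$ is $\epsilon''$-correlated with $\sigma$ with non-negligible probability. Since all of $\mathcal{B}$ runs in time polynomial in $n$ and $m$, it contradicts RP-CSP and the theorem follows, with the parameter matching supplied by Claim \ref{claim:CandidateFamilySecurityParameters}. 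The main obstacle is precisely this prediction-to-recovery step: converting a \emph{tiny} per-challenge advantage ($1/10+\epsilon$ versus $1/10$) into \emph{approximate} recovery requires showing that the advantage correlates detectably with individual coordinates of $\sigma$ despite the nonlinear table-lookup that selects $x_j$, and controlling the polynomial blow-up of amplification while keeping the simulated game statistically indistinguishable from the real one. I would therefore isolate and prove this as a general ``forging implies recovering'' lemma for the relevant class of functions (the subject of Section \ref{sec:SecurityAnalysis}), after which Theorem \ref{thm:UFRCASecure} reduces to bookkeeping with the security parameters of $f_{k_1,k_2}$.
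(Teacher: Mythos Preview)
Your high-level plan matches the paper's: the proof is indeed the contrapositive reduction via Theorem~\ref{thm:secFinal}, with Claim~\ref{claim:CandidateFamilySecurityParameters} supplying the parameter translation and the $t$ spare labelled pairs accounting for the ``$-t$'' in the hypothesis. Your predictor-extraction step is also essentially the paper's Claim~\ref{claim:secFinal}; the only cosmetic difference is that the paper picks the insertion index $i$ uniformly at random and bounds the \emph{average} conditional accuracy by $1/d+\epsilon$ (via $\sum p_i/t \ge (\prod p_i)^{1/t}$), whereas you argue existence of a single good $i^*$ and search over the $t$ choices. Either works.

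The one place where your sketch diverges from a workable argument is the prediction-to-recovery step. You propose, for each $i\in[n]$, to place $i$ in a position where the response depends linearly on $\sigma(i)$ ``given the other, known-or-controlled coordinates,'' and then majority-vote over many such challenges. But the other coordinates are only \emph{controlled as indices}; their $\sigma$-values are unknown to $\mathcal{B}$. Concretely, if $i$ sits in the last slot then the (correctly predicted) response is $g(\sigma(C^{-1}))+\sigma(i)\bmod 10$, and if you draw fresh $C^{-1}$ each time the offset $g(\sigma(C^{-1}))$ is a fresh unknown, so aggregating yields no signal about $\sigma(i)$ at all. There is no per-coordinate majority vote here.

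The paper's Theorem~\ref{thm:EvenDistributedPredict} repairs exactly this: it \emph{fixes one random} $C^{-1}\in X_{k-1}$ and varies only the last index $j$. Then the unknown offset $g(\sigma(C^{-1}))$ is the same across all $j$, so for any two correct labels $\ell_{C_j}-\ell_{C_i}\equiv \sigma(j)-\sigma(i)\pmod{10}$. One now guesses $\sigma'(i)\in\mathbb{Z}_{10}$ uniformly (success probability $1/d$) and sets $\sigma'(j)=\sigma'(i)+\ell_{C_j}-\ell_{C_i}$; a Markov step shows that with probability $\Omega(\epsilon)$ the fixed $C^{-1}$ is ``good'' enough that a $\frac{1}{d}+\frac{\epsilon}{2}$ fraction of the $j$'s receive correct labels, yielding $\epsilon/2$-correlation. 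This difference-cancellation trick, not amplification by repetition, is the missing idea in your recovery paragraph; once you use it, the rest of your outline goes through exactly as in the paper.
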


\begin{remark} \label{rmk:TightBounds}
In the Appendix we demonstrate that our security bounds are asymptotically tight. In particular, there is a statistical algorithm to break our human computable password schemes $(f_{k_1,k_2})$ which requires $m=\tilde{O}\paren{n^{\paren{k_2+1}/2}}$ to $\MSAMPLE$ to recover $\sigma$ (See Theorem \ref{thm:StatisticalUpperBound} in Section \ref{apdx:subsec:securityUpperBound}). We also demonstrate that there is a attack based on Gaussian Elimination that uses $m = \tilde{O}\paren{n^{k_1+1}}$ challenge-response pairs to recover $\sigma$.  
\end{remark} 
% \end{minipage}
% \hfill
%\end{figure*}
\subsubsection{Exact Security Bounds} We used the Constraint Satisfaction Problem solver from the Microsoft Solver Foundations library to attack our human computable password scheme\footnote{Thanks to David Wagner for suggesting the use of SAT solvers.}. In each instance we generated a random mapping $\sigma:[n]\rightarrow \mathbb{Z}_{10}$ and $m$ random challenge response pairs $\paren{C,f\paren{\sigma\paren{C}}}$ using the functions $f_{2,2}$ and $f_{1,3}$. We gave the CSP solver $2.5$ days to find $\sigma$ on a computer with a 2.83 GHz Intel Core2 Quad CPU and 4 GB of RAM. The full results of our experiments are in Appendix \ref{apdx:CSPSolver}. Briefly, the solver failed to find the random mapping in the following instances with $f = f_{2,2}$ and $f = f_{1,3}$: (1) $n=30$ and $m = 100$, (2) $n=50$ and $m=1,000$ and (3) $n = 100$ and $m= 10,000$.

\begin{remark}
While the theoretical security parameter for $f_{1,3}$ $\paren{s\paren{f_{1,3}} = 2}$ is slightly better than the security parameter for $f_{2,2}$ $\paren{s\paren{f_{2,2}} = 1.5}$, we conjecture that $f_{2,2}$ may be more secure for small values of $n$ (e.g., $n \leq 100$) because it is less vulnerable to attacks based on Gaussian Elimination. In particular, there is a polynomial time attack on $f_{1,3}$ based on Gaussian Elimination that requires at most $n^2$ examples to recover $\sigma$, while the same attack would require $n^3$ examples with $f_{2,2}$. Our CSP solver was not able to crack $\sigma\in\mathbb{Z}_{10}^{100}$ given $10,000 = 100^2$ challenge response pairs with $f_{2,2}$.
\end{remark}

\paragraph{Human Computable Password Challenge.}  We are challenging the security and cryptography community to break our human computable password scheme for instances that our CSP solver failed to crack (see Appendix \ref{sec:challenge} for more details about the challenge). Briefly, for each challenge we selected a random secret mapping $\sigma \in \mathbb{Z}_{10}^n$, and published (1) $m$ single digit challenge-response pairs $\paren{C_1,f\paren{\sigma\paren{C_1}}}$,$\ldots$, $ \paren{C_m,f\paren{\sigma\paren{C_m}}}$, where each clause $C_i$ is chosen uniformly at random from $X_k$, and (2) $20$ length--$\challengelength=10$ password challenges $\vec{C}_1,\ldots,\vec{C}_{20} \in \paren{X_k}^{10}$. The goal of each challenge is to correctly guess one of the secret passwords $p_i = f\paren{\sigma\paren{\vec{C}_i}}$ for some $i \in [20]$. The challenges can be found at \url{http://www.cs.cmu.edu/~jblocki/HumanComputablePasswordsChallenge/challenge.htm}. There is a $\$20$ prize associated with each individual challenge (total: $\$360$). We remark that these challenges remain unsolved even after they were presented during the rump sessions at a cryptography conference and a security conference\cite{HumanComputablePasswordChallenge}. \\

%\subsection{Candidate Scheme 1}

% !TEX root = humanComputablePasswords.tex

\section{Statistical Adversaries and Lower Bounds} \label{sec:TechnicalResults}

Our main technical result (Theorem \ref{thm:SecurityLowerBound}) is a lower bound on the number of single digit challenge-response pairs that a statistical algorithm needs to see to (approximately) recover the secret mapping $\sigma$. Our results are quite general and may be of independent interest. Given {\em any} function $f:\mathbb{Z}_d^k\rightarrow \mathbb{Z}_d$ we prove that {\em any} statistical algorithm needs $\tilde{\Omega}\paren{n^{r(f)/2}}$ examples before it can find a secret mapping  $\sigma' \in \mathbb{Z}_d^n$ such that $\sigma'$ is $\epsilon$-correlated with $\sigma$.
 We first introduce statistical algorithms in Section \ref{subsec:StatisticalAlgorithm} before stating our main lower bound for statistical algorithms in Section \ref{subsec:StatisticalDimensionLowerBounds}. We also provide a high level overview of our proof in Section \ref{subsec:StatisticalDimensionLowerBounds}.   

\subsection{Statistical Algorithms} \label{subsec:StatisticalAlgorithm}
A statistical algorithm is an algorithm that solves a distributional search problem $\Z$. In our case the distributional search problem $\Z_{\close,f}$ is to find a mapping $\tau$ that is $\close$-correlated with the secret mapping $\sigma$ given access to $m$ samples from $ Q^f_\sigma$ -- the distribution over challenge response pairs induced by $\sigma$ and $f$. A statistical algorithm can access the input distribution $ Q^f_\sigma$ by querying the $\MSAMPLE$ oracle or by querying the $\VSTAT$ oracle (Definition \ref{def:StatOracle}).

\begin{definition}\cite{feldman2013complexity} [$\MSAMPLE(L)$ oracle and $\VSTAT$ oracle] \label{def:StatOracle}
  Let $D$ be the input distribution over the domain $X$.  Given any function $h: X \rightarrow \{0,1,\ldots,L-1\}$,
  $\MSAMPLE(L)$ takes a random sample $x$ from $D$ and returns $h(x)$. For an integer parameter $T > 0$ and any query function $h:X \rightarrow \{0,1\}$, $\VSTAT\paren{T}$ returns a value $v \in \left[ p-\tau,p+\tau\right]$ where $p = \mathbb{E}_{x\sim D}\left[ h(x)\right]$ and $\tau = \max\left\{\frac{1}{T},\sqrt{\frac{p(1-p)}{T}} \right\}$. 
\end{definition}

In our context the domain $X = X_k\times \mathbb{Z}_d$ is the set of all challenge response pairs and the distribution $D = Q^f_\sigma$ is the uniform distribution over challenge-response pairs induced by $\sigma$ and $f$. Feldman et al. \cite{feldman2013complexity}  used the notion of statistical dimension (Definition \ref{def:sdima} )  to lower bound the number of oracle queries necessary to solve a distributional search problem (Theorem \ref{thm:avgvstat-random}). Before we can present the definition of statistical dimension we need to introduce the {\em discrimination norm}. Intuitively, if the discrimination norm is small then a statistical algorithm will (whp) not be able to distinguish between honest samples $(C,f(\sigma(C))$ and samples from reference distribution $T$ over $X_k \times \mathbb{Z}_d$ which is completely independent of $\sigma$ \footnote{Observe that this implies that a statistical algorithm cannot find the secret $\sigma$. In particular, because the distribution $T$ is independent of the secret mapping $\sigma$ samples from $T$ will not leak any information about $\sigma$. }. We define our reference distribution as follows:
\[ \Pr_{T}\left[(C,i)\right] = \frac{\Pr_{x \sim \mathbb{Z}_d^k}\left[f(x) = i \right]}{|X_k|} \ .\]
 Now given a set $\D' \subseteq \mathbb{Z}_d^n$ of secret mappings the discrimination norm of $\D'$ is denoted by $\dc(\D')$ and defined as follows:  
\begin{align*}
\dc(\D') \doteq \max_{h, \|h\|=1} \left\{ \mathbb{E}_{\sigma \sim \D'}\left[\left| \Delta\paren{h,\sigma} \right| \right] \right\} \ ,
\end{align*}
where $h:X_k \times \mathbb{Z}_d\rightarrow \mathbb{R}$, $\|h\| \doteq \sqrt{\mathbb{E}_{(C,i) \sim X_k \times \mathbb{Z}_d}\left[h^2\paren{C,i} \right]}$ and \[ \Delta\paren{h,\sigma} \doteq  \mathbb{E}_{C\sim X_k}\left[h\paren{C,f\paren{\sigma\paren{C}}}  \right]-\mathbb{E}_{(C,i)\sim T}\left[ h\paren{C,i}\right] \ .\] 

\begin{definition} \cite{feldman2013complexity}\footnote{For the sake of simplicity we define the discrimination norm and the statistical dimension using our particular distributional search problem $\Z_{\epsilon,f}$. Our definition is equivalent to the definition in \cite{feldman2013complexity} once we fix the reference distribution $T$.  }. \label{def:sdima}
  For $\kappa>0$, $\eta >0$, $\epsilon >0$, let $d'$ be the largest integer such that for any mapping $\sigma \in \mathbb{Z}_d^n$ the set $\D_\sigma = \mathbb{Z}_d^n \setminus \left\{\sigma' \in \mathbb{Z}_d^n  ~\vline~\mbox{ $\sigma'$ is $\close$-correlated with $\sigma$ } \right\}$ has size at least $\paren{1-\eta} \cdot \left|\mathbb{Z}_d^n \right|$ and for any subset $\D' \subseteq \D_\sigma$ where $|\D'| \ge |\D_{\sigma}|/d'$, we have $\dc(\D') \leq \kappa$. The \textbf{statistical dimension} with discrimination norm $\kappa$ and error parameter $\eta$ is $d'$ and denoted by $\SDN(\Z_{\close,f},\kappa,\eta)$.
\end{definition}

Feldman et al. \cite{feldman2013complexity} proved the following lower bound on the number of $\MSAMPLE$ and $\VSTAT$ queries needed to solve a distributional search problem. Intuitively, Theorem \ref{thm:avgvstat-random} implies that many queries are needed to solve a distributional search problem with high statistical dimension. In Section \ref{subsec:StatisticalDimensionLowerBounds} we argue that the statistical dimension our distributional search problem (finding $\sigma'$ that is $\close$-correlated with the secret mapping $\sigma$ given $m$ samples from the distribution $Q^f_\sigma$) is high.

\newcommand{\thmAverageStatRandom}{For $\kappa > 0$ and $\eta\in (0,1)$ let $d' = \SDN(\Z_{\close,f},\kappa,\eta)$ be the statistical dimension of the distributional search problem $\Z_{\close,f}$. Any randomized statistical algorithm that, given access to a $\VSTAT\paren{\frac{1}{3\kappa^2}}$ oracle (resp. $\MSAMPLE\paren{L}$) for the distribution $Q_{\sigma}^f$ for a secret mapping $\sigma$ chosen randomly and uniformly from $\mathbb{Z}_d^n$, succeeds in finding a mapping $\tau \in \mathbb{Z}_d^n$ that is $\epsilon$-correlated with $\sigma$ with probability $\Lambda > \eta$ over the choice of distribution and internal randomness requires at least $\frac{\Lambda-\eta}{1-\eta}d'$ (resp. $\Omega\paren{\frac{1}{L}\min\left\{\frac{d'\paren{\Lambda-\eta}}{1-\eta}, \frac{\paren{\Lambda-\eta}^2}{\kappa^2} \right\}}$) calls to the oracle. }
\begin{theorem}
\cite[Theorems 10 and 12]{feldman2013complexity}
\label{thm:avgvstat-random}
\thmAverageStatRandom
\end{theorem}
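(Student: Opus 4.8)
The plan is to prove the bound through a simulation-and-counting argument: I exhibit a single adversarial oracle whose answers are simultaneously consistent with the distributions $Q_\sigma^f$ for an overwhelming fraction of secret mappings $\sigma$, and then argue that no algorithm making few queries can separate enough of them. First I would reduce to deterministic algorithms, fixing the internal coins to the value that maximizes the success probability (this only helps the adversary and preserves $\Lambda$ over the random choice of $\sigma$). Next I define the adversarial oracle that answers every Boolean query $h:X_k\times\mathbb{Z}_d\to\{0,1\}$ with the reference value $p_h=\mathbb{E}_{(C,i)\sim X_k\times\mathbb{Z}_d}[h(C,i)]$, i.e. the mean of $h$ under the uniform distribution rather than under $Q_\sigma^f$. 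Since this oracle is independent of $\sigma$, a deterministic statistical algorithm traverses a single fixed query path $h_1,\dots,h_q$ and produces a single fixed output $\tau^\ast$.

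The key step is a per-query elimination bound. For each query let $A_i=\{\sigma\in\mathbb{Z}_d^n : |\Delta(h_i,\sigma)|>\tau_i\}$ be the set of mappings for which $p_{h_i}$ is an illegal $\VSTAT(1/(3\kappa^2))$ answer for $Q_\sigma^f$; here $\Delta(h_i,\sigma)=\mathbb{E}_{Q_\sigma^f}[h_i]-p_{h_i}$ is exactly the quantity appearing in the discrimination norm. I would first show $\tau_i\ge\kappa\|h_i\|$: taking $T=1/(3\kappa^2)$ and, without loss of generality, $p_{h_i}\le 1/2$ (replace $h$ by $1-h$, which flips the sign of $\Delta$ but changes neither $|\Delta|$ nor the tolerance), the tolerance satisfies $\tau_i\ge\sqrt{p_{h_i}(1-p_{h_i})/T}=\kappa\sqrt{3p_{h_i}(1-p_{h_i})}\ge\kappa\sqrt{p_{h_i}}=\kappa\|h_i\|$. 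Consequently every $\sigma\in A_i$ has $|\Delta(h_i,\sigma)|>\kappa\|h_i\|$, so if $A_i\cap\D_{\tau^\ast}$ were larger than $|\D_{\tau^\ast}|/d'$ it would be a subset of $\D_{\tau^\ast}$ of forbidden size with $\dc(A_i\cap\D_{\tau^\ast})>\kappa$, contradicting the definition of $d'=\SDN(\Z_{\close,f},\kappa,\eta)$. Hence $|A_i\cap\D_{\tau^\ast}|\le|\D_{\tau^\ast}|/d'$.

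Combining the pieces, write the success set $Z=\{\sigma:\tau^\ast$ is $\close$-correlated with $\sigma\}$. Mappings in $Z$ lie outside $\D_{\tau^\ast}$ by definition, a set of measure at most $\eta$; every other mapping on which the algorithm succeeds must make some query answer illegal and so lies in $\bigcup_i A_i$, whence the success probability over uniform $\sigma$ is at most $\eta+q/d'$. Rearranging gives the $\VSTAT$ bound $q\ge\frac{\Lambda-\eta}{1-\eta}d'$, the factor $1/(1-\eta)$ coming from measuring success relative to the far set $\D_{\tau^\ast}$. For the $\MSAMPLE(L)$ bound I would simulate each $L$-valued sample query by its $L$ indicator components $h=\mathbf{1}[\cdot=j]$ and estimate each $\Pr_{Q_\sigma^f}[\cdot=j]$ through the same $\VSTAT$ machinery; the statistical-dimension elimination then yields the first term $\frac{d'(\Lambda-\eta)}{1-\eta}$, while a Chernoff bound controlling how rarely a fresh sample actually lands in a low-probability bin yields the variance term $\frac{(\Lambda-\eta)^2}{\kappa^2}$, and the factor $\frac1L$ reflects that each sample reveals only one of $L$ outcomes; the minimum records whichever constraint binds first.

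The hard part will be the step $\tau_i\ge\kappa\|h_i\|$, because the $\VSTAT$ tolerance is defined with respect to the true bias $p_{h,\sigma}=\mathbb{E}_{Q_\sigma^f}[h]=p_h+\Delta(h,\sigma)$ rather than the reference bias $p_h$. When $\Delta(h,\sigma)$ is large the two biases, and hence $\sqrt{p_{h,\sigma}(1-p_{h,\sigma})}$ versus $\sqrt{p_h(1-p_h)}$, can differ substantially, so the clean inequality must be replaced by a case analysis that still recovers $|\Delta(h,\sigma)|=\Omega(\kappa\|h\|)$ up to the constant absorbed into $T=1/(3\kappa^2)$. Calibrating this factor of $3$, together with the analogous concentration bookkeeping in the $\MSAMPLE$ reduction, is the delicate part, whereas the counting argument is routine once the per-query bound is in hand.
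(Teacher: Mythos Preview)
The paper does not prove this theorem at all: it is quoted verbatim from Feldman, Perkins and Vempala (the citation \cite[Theorems 10 and 12]{feldman2013complexity}) and used as a black box to derive Theorem~\ref{thm:SecurityLowerBound} from the statistical-dimension estimate in Theorem~\ref{thm:StatisticalDimension}. There is therefore nothing in the present paper to compare your proposal against.

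That said, your $\VSTAT$ argument is the Feldman et al.\ argument, and the outline is sound: fix coins, answer every query with the reference mean $p_h$, and use the definition of $\SDN$ to cap $|A_i\cap\D_{\tau^\ast}|$ by $|\D_{\tau^\ast}|/d'$. Your derivation of the $(\Lambda-\eta)/(1-\eta)$ factor is also right once you carry out the algebra on $1-\Lambda\ge(1-\eta)(1-q/d')$. You correctly flag the one genuine subtlety, namely that the tolerance $\tau$ in $\VSTAT$ is computed at the \emph{true} bias $p_{h,\sigma}$ rather than the reference bias $p_h$; Feldman et al.\ handle this by a short case split (the $1/T$ floor versus the $\sqrt{p(1-p)/T}$ term) that recovers $|\Delta(h,\sigma)|\ge\kappa\|h\|$ up to the constant absorbed into $T=1/(3\kappa^2)$.

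Your $\MSAMPLE(L)$ sketch is the weaker part. The bound in Feldman et al.\ is \emph{not} obtained by simulating each $L$-valued sample via $L$ indicator $\VSTAT$ queries; rather, one simulates the sample oracle by drawing from the reference distribution and uses a hybrid/coupling argument to show that for all but $|\D_{\tau^\ast}|/d'$ of the $\sigma$'s the simulated transcript is statistically close to the real one. The two terms in the $\min$ arise from two different failure modes of that coupling (many low-mass queries versus one high-discrimination query), with the $1/L$ coming from a union bound over the $L$ possible outcomes of each call. Your description (``Chernoff bound controlling how rarely a fresh sample lands in a low-probability bin'') gestures at the right phenomenon but is not a workable reduction as stated.
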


 As Feldman et al. \cite{feldman2013complexity} observe, almost all known algorithmic techniques can be modeled within the statistical query framework. In particular, techniques like Expectation Maximization\cite{dempster1977maximum}, local search, MCMC optimization\cite{gelfand1990sampling}, first and second order methods for convex optimization, PCA, ICA, k-means can be modeled as a statistical algorithm even with $L=2$ --- see \cite{blum2005practical} and \cite{chu2007map} for proofs. One issue is that a statistical simulation might need polynomially more samples. However, for $L > 2$ we can think of our queries to $\MSAMPLE(L)$ as evaluating $L$ disjoint functions on a random sample. Indeed, Feldman et al. \cite{feldman2013complexity} demonstrate that there is a statistical algorithm for binary planted satisfiability problems using $\tilde{O}\paren{n^{r(f)/2}}$ calls to $\MSAMPLE\paren{n^{\lceil r(f)/2\rceil}}$.  

\begin{remark} \label{rmk:OtherOracles}
We can also use the statistical dimension to lower bound the number of queries that an algorithm would need to make to other types of statistical oracles to solve a distributional search problem. For example, we could also consider an oracle $\MVSTAT(L,T)$ that takes a query $h:X\rightarrow\{0,\ldots,L-1\}$ and a set $\mathcal{S}$ of subsets of $\{0,\ldots,L-1\}$ and returns a vector $v \in \mathbb{R}^L$ s.t for every $Z \in \mathcal{S}$
\[\left| \sum_{i \in Z} v[i]-p_Z\right| \leq \max\left\{\frac{1}{T},\sqrt{\frac{p_Z\paren{1-p_Z}}{T}} \right\}, \]
 where $p_Z = \Pr_{x\sim D}\left[h(x) \in Z\right]$ and the cost of the query is $\left|\mathcal{S}\right|$. Feldman et al. \cite[Theorem 7]{feldman2013complexity} proved lower bounds similar to Theorem \ref{thm:avgvstat-random} for the $\MVSTAT$ oracle. In this paper we focus on the $\MSAMPLE$ and $\VSTAT$ oracles for simplicity of presentation.
\end{remark}

\subsection{Statistical Dimension Lower Bounds} \label{subsec:StatisticalDimensionLowerBounds}
We are now ready to state our main technical result\footnote{We remark that for our particular family of human computable functions $f_{k_1,k_2}$ we could get a theorem similar to Theorem \ref{thm:SecurityLowerBound} by selecting $\sigma \sim \{0,5\}^n$ and appealing directly to results of Feldman et al. \cite{feldman2013complexity}. However, this theorem would be weaker than Theorem \ref{thm:SecurityLowerBound} as it would only imply that a statistical algorithm cannot find an assignment $\sigma'$ that is $\frac{1}{2}-\frac{1}{10}+\epsilon$-correlated with $\sigma$ for $\epsilon>0$. In contrast, our theorem implies that we cannot find $\sigma'$ that is $\epsilon$-correlated.}.

\newcommand{\thmSecurityLowerBound}{Let $\sigma \in \mathbb{Z}_d^n$ denote a secret mapping chosen uniformly at random, let $Q^f_\sigma$ be the distribution over $X_k \times \mathbb{Z}_d$ induced by a function $f:\mathbb{Z}_d^k \rightarrow \mathbb{Z}_d$ with distributional complexity $r = r(f)$. Any randomized statistical algorithm that finds an assignment $\tau$ such that $\tau$ is $\paren{\sqrt{\frac{-2\ln \paren{ \eta/2}}{n}}}$-correlated with $\sigma$ with probability at least $\Lambda > \eta$ over the choice of $\sigma$ and the internal randomness of the algorithm needs at least $m$ calls to the  $\MSAMPLE(L)$ oracle (resp. $\VSTAT\paren{\frac{n^r}{2\paren{\log n}^{2r}}}$ oracle) with $m\cdot L \geq c_1 \paren{\frac{n}{\log n}}^r$ (resp. $m \geq n^{c_1 \log n}$) for a constant $c_1 = \Omega_{k,1/\paren{\Lambda-\eta}}(1)$ which depends only on the values $k$ and $\Lambda-\eta$.  In particular if we set $L = \paren{\frac{n}{\log n}}^{r/2}$ then our algorithms needs at least $m \geq c_1 \paren{\frac{n}{\log n}}^{r/2}$ calls to $\MSAMPLE(L)$.}
\begin{theorem} \label{thm:SecurityLowerBound}
\thmSecurityLowerBound
\end{theorem}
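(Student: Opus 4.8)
The plan is to lower bound the statistical dimension $\SDN\paren{\Z_{\close,f},\kappa,\eta}$ of the search problem and then invoke Theorem \ref{thm:avgvstat-random}. By Definition \ref{def:sdima} this requires two things. First, I check that for every $\sigma$ the set $\D_\sigma$ of mappings that are \emph{not} $\close$-correlated with $\sigma$ has measure at least $1-\eta$. Since a uniformly random $\sigma'$ satisfies $H\paren{\sigma,\sigma'}/n \approx \paren{d-1}/d$, a Hoeffding/Chernoff tail bound shows the probability that $\sigma'$ is $\close$-correlated is at most $\exp\paren{-\Omega\paren{\close^2 n}}$; requiring this to be small enough is exactly what determines the correlation radius $\close = \sqrt{-2\ln\paren{\eta/2}/n}$ in the statement (the precise constant coming from the tail bound used). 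Second, and this is the crux, I must show that every large subset $\D' \subseteq \D_\sigma$ with $|\D'| \ge |\D_\sigma|/d'$ has small discrimination norm $\dc\paren{\D'}\le\kappa$.

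The heart of the argument is a Fourier-analytic estimate of $\dc\paren{\D'}$. Fix $h$ with $\|h\|=1$ and expand $h\paren{C,\cdot}$ in the characters of $\mathbb{Z}_d$, writing $h\paren{C,i}=\sum_b \tilde h_b\paren{C}\,\omega^{bi}$ with $\omega=\exp\paren{2\pi\sqrt{-1}/d}$; Parseval gives $\sum_b\|\tilde h_b\|^2=1$. Because the reference response is uniform the $b=0$ term cancels, and
\[ \Delta\paren{h,\sigma}=\sum_{b\neq 0}\mathbb{E}_{C}\left[\tilde h_b\paren{C}\,g_b\paren{\sigma\paren{C}}\right], \qquad g_b\paren{x}\doteq\omega^{bf\paren{x}}=\tfrac12\sum_j Q^{f,j}\paren{x}\,\omega^{bj}. \]
The key structural fact is that each $g_b$ ($b\neq0$) inherits the Fourier support of the $Q^{f,j}$: using that $f$ has balanced output (so the weight-zero coefficient $\hat g_{b,\vec 0}=\tfrac12\sum_j\hat Q^{f,j}_{\vec 0}\,\omega^{bj}$ vanishes) together with $r\paren{f}=r$, we get $\hat g_{b,\alpha}=0$ whenever $0<H\paren{\alpha}<r$. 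Substituting $\chi_\alpha\paren{\sigma\paren{C}}=\chi_{\beta\paren{C,\alpha}}\paren{\sigma}$, where $\beta\paren{C,\alpha}\in\mathbb{Z}_d^n$ is supported on the coordinates of $C$ selected by $\alpha$ and has the same Hamming weight $H\paren{\alpha}$, shows that $\Delta\paren{h,\cdot}$, viewed as a function on $\mathbb{Z}_d^n$, is supported entirely on frequencies $\beta$ with $H\paren{\beta}\ge r$. I would then bound $\max_{\|h\|=1}\mathbb{E}_{\sigma\sim\mathbb{Z}_d^n}\left[\left|\Delta\paren{h,\sigma}\right|^2\right]=\max_{\|h\|=1}\sum_{H\paren{\beta}\ge r}\left|\hat\Delta_\beta\right|^2$ via the combinatorial count that, among the $\approx n^k$ tuples in $X_k$, only an $\approx n^{-w}$ fraction realize a fixed frequency $\beta$ of weight $w$ (the $w$ active coordinates must land on $\beta$'s support with matching values, while $C$'s entries are distinct). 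Combining this with Parseval for $h$ and $\sum_{b,\alpha}|\hat g_{b,\alpha}|^2=O_d\paren{1}$, the level-$r$ term dominates and yields $\mathbb{E}_\sigma[|\Delta|^2]=O_{k,d}\paren{n^{-r}}$.

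It remains to move from this second moment under the uniform measure to the first moment under the restricted measure $\D'$, and this is where the general hypercontractivity theorem \cite{o2007analysis} enters. Since $\Delta\paren{h,\cdot}$ has $\chi$-degree at most the constant $k$, hypercontractivity over $\mathbb{Z}_d^n$ gives $\|\Delta\|_{2p}\le C_{k,d}\paren{2p-1}^{k/2}\|\Delta\|_2$ for every integer $p$. Writing $\mu=|\D'|/|\mathbb{Z}_d^n|\ge\paren{1-\eta}/d'$ and applying H\"older, $\mathbb{E}_{\sigma\sim\D'}\left[|\Delta|\right]\le\mu^{-1/2p}\|\Delta\|_{2p}$; choosing $p=\Theta\paren{\log d'}$ makes $\mu^{-1/2p}=O\paren{1}$ and leaves only a $\paren{\log d'}^{k/2}$ factor, so $\dc\paren{\D'}=O_{k,d}\paren{\paren{\log d'}^{k/2}n^{-r/2}}$. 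Taking $d'=n^{c_1\log n}$ makes this $\tilde{O}\paren{n^{-r/2}}$, and choosing $\kappa$ so that $1/\paren{3\kappa^2}=n^r/\paren{2\paren{\log n}^{2r}}$ gives $\dc\paren{\D'}\le\kappa$ with $\SDN\ge d'=n^{c_1\log n}$. Plugging these $\kappa,\eta,d'$ into Theorem \ref{thm:avgvstat-random} yields both claimed bounds at once: the $\VSTAT\paren{n^r/\paren{2\paren{\log n}^{2r}}}$ lower bound of $n^{c_1\log n}$ from the $\tfrac{\Lambda-\eta}{1-\eta}d'$ term, and the $\MSAMPLE(L)$ bound $m\cdot L\ge c_1\paren{n/\log n}^r$ from the $\Omega\paren{L^{-1}\min\{\cdots\}}$ term (whose minimum is $\paren{\Lambda-\eta}^2/\kappa^2\approx\paren{n/\log n}^r$, since $d'$ is superpolynomial), with the special case $L=\paren{n/\log n}^{r/2}$ giving $m\ge c_1\paren{n/\log n}^{r/2}$. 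The main obstacle is the second-moment estimate $\mathbb{E}_\sigma[|\Delta|^2]=O_{k,d}\paren{n^{-r}}$: carrying out the non-binary Fourier bookkeeping over $\mathbb{Z}_d$ and the tuple-counting carefully enough that the constants depend only on $k$ and $d$, and that the level-$r$ mass genuinely dominates the higher levels, is the delicate part that generalizes Feldman et al.\ \cite{feldman2013complexity} beyond binary predicates.
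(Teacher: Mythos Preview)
Your plan is the paper's plan: bound the statistical dimension via Fourier analysis and hypercontractivity over $\mathbb{Z}_d^n$, then plug into Theorem~\ref{thm:avgvstat-random}. Your Chernoff step for $|\D_\sigma|\ge(1-\eta)d^n$ and the use of $d'=n^{\Theta(\log n)}$ match the paper exactly, and your H\"older-plus-moment route $\mathbb{E}_{\D'}[|\Delta|]\le \mu^{-1/2p}\|\Delta\|_{2p}$ is an equivalent alternative to the paper's tail-bound-plus-integration route (Lemma~\ref{lemma:concentrationRestricted}). Your character expansion in the response variable, giving $g_b=\omega^{bf}$, is just a change of basis compared with the paper's $Q^{f,j}$ decomposition (Remark~\ref{remark:PlantedCSP}).

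There is one quantitative gap. You apply hypercontractivity to $\Delta(h,\cdot)$ as a single degree-$k$ polynomial, which produces the factor $(\log d')^{k/2}$. With $d'=n^{c_1\log n}$ this gives $\dc(\D')=O_{k,d}\!\big((\log n)^{k}\,n^{-r/2}\big)$, hence $1/(3\kappa^2)=\Theta\!\big(n^{r}/(\log n)^{2k}\big)$, not $n^{r}/(2(\log n)^{2r})$ as the theorem states; your sentence ``choosing $\kappa$ so that $1/(3\kappa^2)=n^r/(2(\log n)^{2r})$ gives $\dc(\D')\le\kappa$'' therefore does not follow when $k>r$. The paper avoids this by decomposing $\Delta$ level-by-level: Lemma~\ref{lemma:degree} writes $\Delta=\sum_{\ell=r}^{k}|X_\ell|^{-1}b_\ell$ with $b_\ell$ of degree exactly $\ell$, and Lemma~\ref{lemma:boundQueryh} applies hypercontractivity to each $b_\ell$ separately, yielding $(\ln d')^{\ell/2}/\sqrt{|X_\ell|}$ at level $\ell$. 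Since $|X_\ell|\sim n^{\ell}$, the $\ell=r$ term dominates and one gets $\dc(\D')=O_k\!\big((\ln d')^{r/2}n^{-r/2}\big)$ (Lemma~\ref{lemma:difference}), which is exactly what the stated $\VSTAT$ parameter needs. To prove the theorem as written you must do this level-wise splitting rather than treating $\Delta$ as degree $k$ in one shot; your own second-moment heuristic (``the level-$r$ mass genuinely dominates the higher levels'') is pointing at precisely this refinement.

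A smaller point: you invoke ``$f$ has balanced output'' to kill $\hat g_{b,\vec 0}$, but the theorem as stated assumes only that $r(f)=r$. The paper's level-by-level decomposition sidesteps this because the $\ell=0$ term in Lemma~\ref{lemma:degree} is exactly $\mathbb{E}_{U_k}[h]$ and cancels by construction; the passage from the per-$j$ problem to the full problem is then handled in Remark~\ref{remark:PlantedCSP}. If you keep your $g_b$ formulation you should either note that the $\alpha=\vec 0$ piece contributes only a $\sigma$-independent constant (which is harmless once you compare to the correct reference distribution), or restrict to balanced $f$ explicitly.
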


The proof of Theorem \ref{thm:SecurityLowerBound} follows from Theorems \ref{thm:StatisticalDimension} and \ref{thm:avgvstat-random}. Theorems \ref{thm:SecurityLowerBound} and \ref{thm:StatisticalDimension} generalize results of Feldman et al. \cite{feldman2013complexity} which only apply for binary predicates $f:\{0,1\}^k\rightarrow\{0,1\}$. An interested reader can find our proofs in Appendix \ref{apdx:StatisticalDimension}. At a high level our proof proceeds as follows: Given any function $h:X_k\times \mathbb{Z}_d \rightarrow \mathbb{R}$ we show that $\Delta\paren{\sigma, h}$ can be expressed in the following form:
$\Delta\paren{\sigma, h} = \sum_{\ell=r(f)}^k \frac{1}{\left|X_\ell\right|} b_\ell\paren{\sigma}$,
where $\left| X_\ell\right| = \Theta\paren{n^{\ell}}$ and each function $b_\ell$ has degree $\ell$ (Lemma \ref{lemma:degree}). We then use the general hypercontractivity theorem \cite[Theorem 10.23]{o2007analysis} to obtain the following concentration bound.

\newcommand{\lemmaConcentrationBoundRestricted}{Let $b:\mathbb{Z}_d^n\rightarrow \mathbb{R}$ be any function with degree at most $\ell$, and let $\D' \subseteq \mathbb{Z}_d^n$ be a set of assignments for which $d' = d^n/\left|\D'\right| \geq e^\ell$. Then $\mathbb{E}_{\sigma\sim \D'} \left[\left| b\paren{\sigma}\right| \right] \leq  2 (\ln d'/c_0)^{\ell/2} \|b \|_2$, where $c_0 = \ell\paren{\frac{1}{2ed}}$ and $\|b \|_2 = \sqrt{\mathbb{E}_{x\sim \mathbb{Z}_d^n }\left[b\paren{x}^2 \right]}$.}
\begin{lemma} \label{lemma:concentrationRestricted}
\lemmaConcentrationBoundRestricted
\end{lemma}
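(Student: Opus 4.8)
The plan is to combine a H\"older ``restriction to a dense subset'' step with the general hypercontractivity theorem and then optimize a single free exponent $q$. First I would reduce the average of $|b|$ over $\D'$ to an $L_q$-norm over the full uniform measure on $\mathbb{Z}_d^n$. Writing the density $\mu\paren{\D'} = \left|\D'\right|/d^n = 1/d'$ and applying H\"older's inequality with conjugate exponents $q$ and $q/\paren{q-1}$ to the product $\mathbf{1}_{\D'}\cdot|b|$ gives, for every $q \geq 2$,
\[ \mathbb{E}_{\sigma \sim \D'}\left[|b(\sigma)|\right] = d'\cdot \mathbb{E}_{\sigma\sim\mathbb{Z}_d^n}\left[\mathbf{1}_{\D'}(\sigma)\,|b(\sigma)|\right] \leq d'\cdot \mu\paren{\D'}^{1-1/q}\,\|b\|_q = \paren{d'}^{1/q}\|b\|_q \ . \]
This isolates all dependence on the set $\D'$ into the single factor $\paren{d'}^{1/q}$, leaving a norm of $b$ with respect to the ambient uniform distribution.

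Next I would control $\|b\|_q$ by $\|b\|_2$ using hypercontractivity for low-degree functions. Since the uniform distribution on $\mathbb{Z}_d$ has minimum atom probability $\lambda = 1/d$, the general hypercontractivity theorem \cite[Theorem 10.23]{o2007analysis}, applied to the degree-at-most-$\ell$ function $b$, yields a bound of the form $\|b\|_q \leq \paren{\frac{q-1}{\lambda}}^{\ell/2}\|b\|_2 = \paren{(q-1)d}^{\ell/2}\|b\|_2$. Combining with the H\"older step, for every $q\geq 2$,
\[ \mathbb{E}_{\sigma \sim \D'}\left[|b(\sigma)|\right] \leq \paren{d'}^{1/q}\paren{(q-1)d}^{\ell/2}\|b\|_2 \ . \]

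The final step is to optimize over $q$. Taking logarithms, the exponent is $\tfrac{\ln d'}{q} + \tfrac{\ell}{2}\ln\paren{(q-1)d}$, which is minimized near $q = 2\ln d'/\ell$; the hypothesis $d' \geq e^\ell$ is exactly what guarantees that this optimal choice satisfies $q \geq 2$ and is therefore admissible. Substituting this value makes $\paren{d'}^{1/q} = e^{\ell/2}$ and $\paren{(q-1)d}^{\ell/2} \leq \paren{2d\ln d'/\ell}^{\ell/2}$, so the product collapses to $\paren{2ed\ln d'/\ell}^{\ell/2}\|b\|_2 = \paren{\ln d'/c_0}^{\ell/2}\|b\|_2$ with $c_0 = \ell\paren{\tfrac{1}{2ed}}$, which recovers precisely the claimed $c_0$ and the $\ell/2$ exponent; the remaining constant factor in the statement is a matter of bookkeeping the exact hypercontractive constant and the slack $(q-1)\leq q$.

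The main obstacle I anticipate is getting the hypercontractivity constant right for the \emph{large} alphabet: unlike the Boolean cube, the uniform measure on $\mathbb{Z}_d$ incurs the penalty $1/\lambda = d$ per degree, so one must use the general-measure form of the theorem rather than the familiar $(q-1)^{\ell/2}$ Boolean bound, and verify it applies to real-valued $b$ expanded in the complex characters $\chi_\alpha$ (for instance by applying it to the real and imaginary parts separately, or by observing that the real-valued functions form a closed subclass preserved under degree truncation). A secondary point is ensuring the optimizing $q$ stays in the admissible range $[2,\infty)$ throughout the regime $d'\geq e^\ell$; this is handled precisely by the degree-dependent threshold $e^\ell$ in the hypothesis, and near the boundary (when $d'$ is close to $e^\ell$) one can simply fall back to $q=2$.
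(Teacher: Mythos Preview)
Your approach is correct and gives a valid proof of the lemma (up to the harmless prefactor discussed below), but it is genuinely different from the paper's argument. The paper invokes the \emph{tail-bound} form of general hypercontractivity, namely $\Pr_{x\sim\mathbb{Z}_d^n}[|b(x)|\ge t\|b\|_2]\le d^{-\ell}\exp(-c_0 t^{2/\ell})$, then multiplies by $d'$ to pass to $\D'$ and computes $\mathbb{E}_{\D'}[|b|]$ by layer-cake integration from a carefully chosen threshold $a$. You instead use H\"older to reduce $\mathbb{E}_{\D'}[|b|]$ to $(d')^{1/q}\|b\|_q$, apply the \emph{moment-bound} form of general hypercontractivity $\|b\|_q\le ((q-1)d)^{\ell/2}\|b\|_2$, and optimize the single parameter $q$. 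The two routes are Legendre-dual to each other, but yours is arguably cleaner: it avoids the incomplete-Gamma integral and the iterated integration-by-parts in the paper's computation, and it makes transparent why the threshold $d'\ge e^\ell$ is exactly what is needed (it keeps the optimizing $q=2\ln d'/\ell$ in the admissible range $q\ge 2$).

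One discrepancy worth flagging: your final bound is $(\ln d'/c_0)^{\ell/2}\|b\|_2$, whereas the lemma as stated carries an additional factor $2/d^{\ell}$. You will not recover that $d^{-\ell}$ from the H\"older-plus-moments route, because the moment inequality $\|b\|_q\le ((q-1)d)^{\ell/2}\|b\|_2$ already absorbs the alphabet size into the base. In fact the paper's own derivation has a slip at exactly this point: the leading term $a$ in the layer-cake bound is $(\ln d'/c_0)^{\ell/2}$, but in the next displayed line it silently becomes $(\ln d'/c_0)^{\ell/2}/d^{\ell}$. This does not affect anything downstream, since every use of the lemma (Lemmas~\ref{lemma:boundQueryh} and~\ref{lemma:difference}) immediately passes to $O_k(\cdot)$ and the $d^{-\ell}$ is swallowed by the constant; your version of the bound is entirely sufficient for those applications.
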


 We then use Lemma \ref{lemma:concentrationRestricted} to bound $\mathbf{E}_{\sigma \sim \D'}\left[\Delta\paren{\sigma, h} \right]$ for any set $\D' \subseteq \mathbb{Z}_d^n$ such that $\left|\D'\right|  = \left| \mathbb{Z}_d^k\right|/d'$ (Lemma \ref{lemma:difference}). This leads to the following bound on $\dc(\D') = O_k\paren{\paren{\ln d'/n}^{r(f)/2}}$. 

\newcommand{\thmStatisticalDimension}{There exists a constant $c_Q > 0$ such that for any $\close > 1/\sqrt{n}$ and $q \geq n$ we have \[
\SDN\paren{\Z_{\close,f},\frac{c_Q\paren{\log q}^{r/2}}{n^{r/2}}, 2e^{-n\cdot \close^2/2}} \geq q \ ,\]
where $r=r(f)$ is the distributional complexity of $f$.}

\begin{theorem} \label{thm:StatisticalDimension}
\thmStatisticalDimension
\end{theorem}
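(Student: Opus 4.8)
The plan is to verify Definition~\ref{def:sdima} directly with $d' = q$. That is, to prove $\SDN\paren{\Z_{\close,f},\kappa,\eta}\geq q$ for $\kappa = c_Q\paren{\log q}^{r/2}/n^{r/2}$ and $\eta = 2e^{-n\close^2/2}$, I fix an arbitrary secret mapping $\sigma\in\mathbb{Z}_d^n$ and check its two requirements. The first, that the ``far'' set $\D_\sigma$ has size at least $\paren{1-\eta}d^n$, is pure concentration. For fixed $\sigma$ and uniform $\sigma'$, the distance $H\paren{\sigma,\sigma'}$ is a sum of $n$ i.i.d.\ $\mathrm{Bernoulli}\paren{\tfrac{d-1}{d}}$ variables, and by Definition~\ref{def:EpsilonClose} the event that $\sigma'$ is $\close$-correlated with $\sigma$ is exactly $H\paren{\sigma,\sigma'}/n \leq \tfrac{d-1}{d}-\close$, a downward deviation of $\close$ from the mean. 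Hoeffding's inequality bounds its probability by $e^{-2n\close^2}\leq\eta$ (the hypothesis $\close > 1/\sqrt{n}$ keeps $\eta$ in the useful regime), so at most an $\eta$-fraction of mappings are $\close$-correlated with $\sigma$ and $\left|\D_\sigma\right|\geq\paren{1-\eta}d^n$.

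The substance is the second requirement: every $\D'\subseteq\D_\sigma$ with $\left|\D'\right|\geq\left|\D_\sigma\right|/q$ must satisfy $\dc\paren{\D'}\leq\kappa$. Fix such a $\D'$ and an arbitrary unit-norm test function $h$, $\|h\|=1$. I would start from the degree decomposition of Lemma~\ref{lemma:degree}, $\Delta\paren{\sigma,h}=\sum_{\ell=r}^{k}\tfrac{1}{\left|X_\ell\right|}b_\ell\paren{\sigma}$, where each $b_\ell$ has degree at most $\ell$ and $\left|X_\ell\right|=\Theta\paren{n^\ell}$. The decisive feature is that the sum begins at $\ell=r=r(f)$: by the definition of distributional complexity, every generalized-Fourier frequency of Hamming weight strictly between $0$ and $r$ contributes nothing, which is precisely what forces the eventual $n^{-r/2}$ scaling. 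Writing $d''\doteq d^n/\left|\D'\right|$ for the density deficit, the size bound $\left|\D'\right|\geq\paren{1-\eta}d^n/q$ gives $d''\leq q/\paren{1-\eta}=O\paren{q}$, hence $\ln d'' = O\paren{\log q}$; and for the extremal (smallest) admissible $\D'$ we have $d''=\Theta\paren{q}\geq n\geq e^\ell$ for every $\ell\leq k$ and $n$ large, which is exactly the hypothesis of Lemma~\ref{lemma:concentrationRestricted}.

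Applying Lemma~\ref{lemma:concentrationRestricted} term by term yields $\mathbb{E}_{\sigma\sim\D'}\left[\left|b_\ell\paren{\sigma}\right|\right]\leq \tfrac{2\paren{\ln d''/c_0}^{\ell/2}}{d^\ell}\|b_\ell\|_2$, and the triangle inequality then gives the estimate packaged as Lemma~\ref{lemma:difference}, namely a bound on $\mathbb{E}_{\sigma\sim\D'}\left[\left|\Delta\paren{\sigma,h}\right|\right]$ by $\sum_{\ell=r}^k \tfrac{1}{\left|X_\ell\right|}\cdot\tfrac{2\paren{\ln d''/c_0}^{\ell/2}}{d^\ell}\|b_\ell\|_2$. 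The remaining work is to control $\|b_\ell\|_2$ against $\|h\|=1$; I expect this to reduce to Parseval on the basis $\chi_\alpha$, with the averaging over the $\Theta\paren{n^\ell}$ positions of a clause contributing $\|b_\ell\|_2 = O\paren{n^{\ell/2}}$. Substituting this together with $\left|X_\ell\right|=\Theta\paren{n^\ell}$ and $\ln d''=O\paren{\log q}$, each summand is $O_k\paren{\paren{\log q}^{\ell/2}/n^{\ell/2}}$. In the relevant regime $\log q = O(n)$ these terms are non-increasing in $\ell$, so the lowest-degree term $\ell=r$ dominates the sum (which has at most $k$ terms) and $\mathbb{E}_{\sigma\sim\D'}\left[\left|\Delta\paren{\sigma,h}\right|\right]=O_k\paren{\paren{\log q}^{r/2}/n^{r/2}}$. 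Very large subsets $\D'$ with $d''<e^\ell$, where Lemma~\ref{lemma:concentrationRestricted} does not apply, are handled by the cruder Cauchy--Schwarz bound $\mathbb{E}_{\sigma\sim\D'}\left[\left|b_\ell\right|\right]\leq\sqrt{d''}\,\|b_\ell\|_2$, which is smaller still. Maximizing over unit-norm $h$ bounds $\dc\paren{\D'}$ by $c_Q\paren{\log q}^{r/2}/n^{r/2}=\kappa$ for a suitable $c_Q=c_Q\paren{k,d}$, completing the verification.

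I expect the genuine difficulty to sit upstream in the two cited lemmas rather than in this assembly. Establishing Lemma~\ref{lemma:degree} requires expanding $\Delta$ over the complex basis $\chi_\alpha\paren{x}=\exp\paren{-2\pi\sqrt{-1}\,\paren{x\cdot\alpha}/d}$ (in place of the real Walsh basis available to Feldman et al.\ for $d=2$), grouping terms by the Hamming weight of the frequency $\alpha$, and showing that the definition of $r(f)$ annihilates every intermediate-weight contribution. Lemma~\ref{lemma:concentrationRestricted} is where the general hypercontractivity theorem enters, and the dependence of the parameter $c_0=\ell/\paren{2ed}$ on the alphabet size $d$ is the main new wrinkle relative to the binary case. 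Granting those two lemmas, the proof of Theorem~\ref{thm:StatisticalDimension} itself is the bookkeeping above: controlling $d''=O(q)$, checking the hypercontractivity hypothesis, and confirming that the $\ell=r$ term dominates.
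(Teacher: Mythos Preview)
Your proposal is correct and follows essentially the same route as the paper: Hoeffding/Chernoff for the size of $\D_\sigma$, then Lemma~\ref{lemma:difference} (which already packages Lemmas~\ref{lemma:degree}, \ref{lemma:concentrationRestricted}, and the Parseval bound you anticipate via Lemma~\ref{lemma:boundQueryh} and Fact~\ref{fact:hBound}) to control $\dc(\D')$. The one step you do not spell out, handled in the paper by Remark~\ref{remark:PlantedCSP}, is the routine decomposition of a test function $h:X_k\times\mathbb{Z}_d\to\mathbb{R}$ into the $d$ slices $h^j:X_k\to\mathbb{R}$ so that Lemma~\ref{lemma:degree} (which is stated for $\Delta^j$) applies.
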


\paragraph{Discussion} We view Theorem \ref{thm:SecurityLowerBound} as strong evidence for RP-CSP (Conjecture \ref{conj:hardnessConjecture}) because almost all known algorithmic techniques can be modeled within the statistical query framework\cite{blum2005practical,chu2007map}. Thus, Theorem \ref{thm:SecurityLowerBound} rules out most known attacks that an adversary might mount. It also implies that many popular heuristic based SAT solvers (e.g., DPLL\cite{Davis:1960:CPQ:321033.321034}) will not be able to recover $\sigma$ in polynomial time. While Theorem \ref{thm:SecurityLowerBound} does not rule our attacks based on Gaussian Elimination we consider this class of attacks separately. We need $m=\tilde{O}\paren{n^{g(f)+1}}$ examples to extract $O(n)$ linear constraints and solve for $\sigma$ (see Appendix \ref{subsec:GuassianElimination}). However, our composite security parameter $s(f)\geq g(f)+1$ accounts for attacks based on Gaussian Elimination.

% !TEX root = humanComputablePasswords.tex

\section{Security Analysis} \label{sec:SecurityAnalysis}
In the last section we presented evidence in support of RP-CSP (Conjecture \ref{conj:hardnessConjecture}) by showing that any statistical adversary needs $m = \tilde{\Omega}\paren{n^{r(f)/2}}$ examples to (approximately) recover $\sigma$. However, RP-CSP only says that it is hard to (approximately) recover the secret mapping $\sigma$, not that it is hard to forge passwords. As an example consider the following NP-hard problem from learning theory: find a 2-term DNF that is consistent with the labels in a given dataset. Just because 2-DNF is hard to learn in the {\em proper} learning model does not mean that it is NP-hard to learn a good classifier for 2-DNF. Indeed, if we allow our learning algorithm to output a linear classifier instead of a 2-term DNF then 2-DNF is easy to learn \cite{kearns1994introduction}. Could an adversary win our password security game without properly learning the secret mapping? 

Theorem \ref{thm:secFinal}, our main result in this section, implies that the answer is no. Informally, Theorem \ref{thm:secFinal} states that any adversary that breaks UF-RCA security of our human computable password scheme $f_{k_1,k_2}$ can also (approximately) recover the secret mapping $\sigma$. This implies that our human computable password scheme is UF-RCA secure as long as RP-CSP holds. Of course, for some functions it is very easy to predict challenge-response pairs without learning $\sigma$. For example, if $f$ is the constant function --- or any function highly correlated with the constant function --- then it is easy to predict the value of $f\paren{\sigma\paren{C}}$. However, any function that is highly correlated with a constant function is a poor choice for a human computable passwords scheme. We argue that any adversary that can win the  password game can be converted into an adversary that properly learns $\sigma$ provided that the output of function $f$ is evenly distributed (Definition \ref{def:unpredictableFunction}). 

\begin{definition} \label{def:unpredictableFunction}
We say that the output of a function $f:\mathbb{Z}_d^k\rightarrow \mathbb{Z}_d$ is {\em evenly distributed} if there exists a function  $g:\mathbf{Z}_d^{k-1}\rightarrow \mathbb{Z}_d$  such that $f\paren{x_1,\ldots,x_k} = g\paren{x_1,\ldots,x_{k-1}} + x_k \mod{d}$. 
\end{definition}

Clearly, our family $f_{k_1,k_2}$ has evenly distributed output. To see this we simply set $g = f_{k_1,k_2-1}$. We are now ready to state our main result from this section.

\newcommand{\thmSecurityFinal}{Suppose that $f$ has evenly distributed output, but that $f$ is not $\UFRCASecure{n}{m}{\challengelength}{\delta}$ for $\delta > \paren{\frac{1}{d}+\epsilon}^\challengelength$. Then there is a probabilistic polynomial time algorithm (in $n$, $m$, $\challengelength$ and $1/\epsilon$) that extracts a string $\sigma' \in \mathbb{Z}_d^n$ that is $\epsilon/8$-correlated with $\sigma$ with probability at least $\frac{\epsilon^3}{(8d)^2}$ after seeing $m+\challengelength$ example challenge response pairs.}
\begin{theorem} \label{thm:secFinal}
\thmSecurityFinal
\end{theorem}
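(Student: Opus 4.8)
The plan is to convert the forging adversary $\Adversary$ into an algorithm that recovers $\sigma$ up to a global additive shift, and then to guess the shift. I would proceed in three stages, and I expect the first to be the delicate one.

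First, extract a single-challenge predictor from the length-$t$ forger. Since $\Adversary$ wins with probability exceeding $\paren{\frac{1}{d}+\epsilon}^t$ over a random $\sigma$, a reverse-Markov argument first isolates a noticeable fraction of secrets on which the conditional win probability is still comparable to $\paren{\frac{1}{d}+\epsilon}^t$. Writing $P_s$ for the probability that $\Adversary$ is simultaneously correct on its first $s$ challenges, the telescoping identity $\prod_{s=1}^{t}\paren{P_s/P_{s-1}} = P_t > \paren{\frac{1}{d}+\epsilon}^t$ forces some level $s^\ast$ with $P_{s^\ast}/P_{s^\ast-1} > \frac{1}{d}+\epsilon$. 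I would then build a predictor $B$ for a fresh challenge $C$ by placing $C$ in slot $s^\ast$, filling slots $1,\dots,s^\ast-1$ with genuine labeled pairs drawn from the $m+t$ examples we are given, running $\Adversary$, and accepting its slot-$s^\ast$ guess only when the $s^\ast-1$ check slots match their known responses. The tuple fed to $\Adversary$ is then distributed exactly as in the game, and conditioned on acceptance $B$ is correct with probability at least $\frac{1}{d}+\epsilon$.

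Second, I would exploit the evenly-distributed structure (Definition \ref{def:unpredictableFunction}) to turn $B$ into a recovery procedure. Because $f\paren{x_1,\dots,x_k} = g\paren{x_1,\dots,x_{k-1}} + x_k \bmod d$, for any prefix $\pi$ the term $g\paren{\sigma\paren{\pi}}$ cancels between two challenges that differ only in their last coordinate, so $f\paren{\sigma\paren{\pi,v}} - f\paren{\sigma\paren{\pi,u}} = \sigma\paren{v}-\sigma\paren{u} \bmod d$. I would fix a labeled anchor $\paren{\paren{\pi^\ast,u},y}$ with $y = g\paren{\sigma\paren{\pi^\ast}}+\sigma\paren{u}$ known, and for every target $v$ set $\sigma'\paren{v} = B\paren{\pi^\ast,v} - y + a$, where $a$ is a guess for $\sigma\paren{u}$. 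When $a=\sigma\paren{u}$ this equals $\sigma\paren{v}$ plus the prediction error of $B$, hence is correct exactly when $B$ is correct on $\paren{\pi^\ast,v}$; since $\paren{\pi^\ast,v}$ for a random anchor and random $v$ is uniform over $X_k$, the expected fraction of coordinates on which $\sigma'$ agrees with $\sigma$ is at least $\frac{1}{d}+\epsilon$, so a single labeled anchor lets me relabel all of $[n]$ at once.

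Third, I would account for the losses to reach the stated bound. The guess $a=\sigma\paren{u}$ is correct with probability $1/d$; a reverse-Markov step over the choice of anchor keeps the per-coordinate advantage above $\frac{1}{d}+\epsilon/2$ on an $\Omega(\epsilon)$ fraction of anchors; and a final Markov/concentration argument on the agreement fraction shows it exceeds $\frac{1}{d}+\epsilon/8$ with probability $\Omega(\epsilon)$ once the mean advantage is positive. Multiplying the $1/d$ for the shift guess, the $\Omega(\epsilon)$ for a good anchor, the $1/d$ and constant slack from extracting $B$, and the $\Omega(\epsilon)$ from concentration yields $\epsilon^3/(8d)^2$, with the output $\epsilon/8$-correlated with $\sigma$. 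The hard part is the first stage: making the single-challenge advantage and the running time of $B$ independent of $t$. The danger is that the acceptance probability $P_{s^\ast-1}$ could be as small as $\paren{\frac{1}{d}+\epsilon}^{t}$, forcing $B$ to retry exponentially often in $t$; ruling this out requires using that the $t$ fresh challenges are independent and that $f$ has uniform output, so that $\Adversary$ cannot make its per-slot correctness arbitrarily correlated and some early level $s^\ast$ already enjoys both good conditional accuracy and non-negligible acceptance. This is the step I expect to require the most care.
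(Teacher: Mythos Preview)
Your three-stage plan matches the paper's proof almost exactly: Stage~2 is precisely Theorem~\ref{thm:EvenDistributedPredict} (fix a random prefix $C^{-1}\in X_{k-1}$ and an anchor $i$, guess $\sigma'(i)$, and set $\sigma'(j)=\sigma'(i)+\ell_{C_j}-\ell_{C_i}$), and Stage~3 is the same chain of reverse-Markov reductions the paper uses to get the $\epsilon^3/(8d)^2$ success probability. Stage~1 also uses the same telescoping identity $\prod_{i} p_i>\paren{\tfrac1d+\epsilon}^t$.

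The one point where you diverge from the paper is the resolution of the efficiency concern you flag at the end. You try to argue that \emph{some} level $s^\ast$ has both $p_{s^\ast}>\tfrac1d+\epsilon$ and non-negligible acceptance $P_{s^\ast-1}$, appealing to independence of the $t$ challenges and uniformity of $f$'s output. That argument does not go through as stated: the adversary sees the entire tuple $(C_1',\ldots,C_t')$ before answering, so nothing prevents it from making its per-slot correctness events arbitrarily correlated (e.g., deliberately spoiling slot~1 unless some global predicate on the tuple holds), and one can indeed construct $p_1,\ldots,p_t$ with $\prod p_i>\paren{\tfrac1d+\epsilon}^t$ where every index with $p_i>\tfrac1d+\epsilon$ has $P_{i-1}$ exponentially small in $t$.

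The paper sidesteps this entirely. Rather than fixing $s^\ast$, it draws $i$ uniformly from $[t]$ and uses AM--GM on the $p_i$'s to get that the \emph{average} conditional accuracy is at least $\tfrac1d+\epsilon$ (this is Claim~\ref{claim:secFinal}). To guarantee a label for every clause $C$ in time polynomial in $t$, it tries the positions in a random permutation order and takes the first non-$\bot$ answer; since position $i=1$ never rejects (there are no earlier slots to check), this terminates in at most $t$ attempts. So the fix you were looking for is structural rather than probabilistic: you do not need any level to have large acceptance probability, because averaging over levels and falling back to level~1 already suffices.
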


The proof of Theorem \ref{thm:secFinal} is in Appendix \ref{apdx:securityProofs}. We overview the proof here. The proof of Theorem \ref{thm:secFinal} uses Theorem \ref{thm:EvenDistributedPredict} as a subroutine. Theorem \ref{thm:EvenDistributedPredict} shows that we can, with reasonable probability, find a mapping $\sigma'$ that is correlated with $\sigma$ given predictions of $f\paren{\sigma\paren{C}}$ for each clause as long as the probability that each prediction is accurate is slightly better than a random guess (e.g., $\frac{1}{d}+\delta$). The proof of Theorem \ref{thm:EvenDistributedPredict} is in Appendix \ref{apdx:securityProofs}. 

\newcommand{\ThmEvenDistributedPredict}{Let $f$ be a function with evenly distributed output (Definition \ref{def:unpredictableFunction}), let $\sigma \sim \mathbb{Z}_d^n$ denote the secret mapping, let $\epsilon > 0$ be any constant and suppose that for every $C \in X_k$ we are given labels $\ell_C \in \mathbb{Z}_d$ s.t. 
$\Pr_{C \sim X_k}\left[f\paren{\sigma\paren{C}} = \ell_C \right] \geq \frac{1}{d} + \epsilon$.
There is a polynomial time algorithm (in $n$, $m$,$1/\epsilon$) that finds a mapping $\sigma' \in \mathbb{Z}_d^n$ such that $\sigma'$ is $\epsilon/2$-correlated with $\sigma$ with probability at least $\frac{\epsilon}{2d^2}$}

\begin{theorem}\label{thm:EvenDistributedPredict}
\ThmEvenDistributedPredict
\end{theorem}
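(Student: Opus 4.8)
The plan is to exploit the additive structure forced by Definition \ref{def:unpredictableFunction}. Since $f$ has evenly distributed output, I write $f\paren{x_1,\ldots,x_k} = g\paren{x_1,\ldots,x_{k-1}} + x_k \bmod d$. The crucial consequence is that if I fix the first $k-1$ coordinates of a clause to a prefix $P = \paren{c_1,\ldots,c_{k-1}}$ and let the last coordinate range over indices $i$, then
\[ f\paren{\sigma\paren{P,i}} = g\paren{\sigma\paren{P}} + \sigma\paren{i} \bmod d = c_P + \sigma\paren{i} \bmod d, \]
where $c_P \doteq g\paren{\sigma\paren{P}}$ is a single value in $\mathbb{Z}_d$ that does not depend on $i$. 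Thus a single ``good'' prefix recovers the \emph{entire} mapping $\sigma$ up to one global additive shift: whenever the supplied label $\ell_{\paren{P,i}}$ happens to be correct, $\ell_{\paren{P,i}} - c_P = \sigma\paren{i}$.

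This motivates the following algorithm, which I would analyze. Pick one prefix $P \sim X_{k-1}$ uniformly at random and one shift $c \sim \mathbb{Z}_d$ uniformly at random, and output the mapping $\sigma'$ defined by $\sigma'\paren{i} = \ell_{\paren{P,i}} - c \bmod d$ for every index $i$ not appearing in $P$ (the at most $k-1$ indices occurring in $P$ are set arbitrarily). This runs in time polynomial in $n$ since it queries only the $n$ labels $\ell_{\paren{P,i}}$.

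For the analysis I would first observe that drawing $P \sim X_{k-1}$ and then $i$ uniformly from the indices not in $P$ produces a uniformly random clause of $X_k$; hence, writing $A_i\paren{P}$ for the indicator that $\ell_{\paren{P,i}} = f\paren{\sigma\paren{P,i}}$, the hypothesis gives $\mathbb{E}_{P}\left[\frac{1}{n-k+1}\sum_{i\notin P} A_i\paren{P}\right] \geq \frac1d + \epsilon$. Letting $X\paren{P}$ denote this inner average, an averaging argument (using $X\paren{P}\le 1$) shows that with probability at least $\epsilon/2$ over $P$ we have $X\paren{P} \geq \frac1d + \frac{\epsilon}{2}$, i.e. at least a $\paren{\frac1d+\frac\epsilon2}$-fraction of the valid indices are predicted correctly. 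Conditioning on such a good prefix, with probability $1/d$ the random shift satisfies $c = c_P$, and then $\sigma'\paren{i} = \sigma\paren{i}$ on every correctly predicted index. On these events $\sigma'$ agrees with $\sigma$ on at least $\paren{\frac1d+\frac\epsilon2}\paren{n-k+1}$ coordinates, so $H\paren{\sigma',\sigma}/n \leq \frac{d-1}{d} - \frac{\epsilon}{2} + O\paren{k/n}$, which is $\frac\epsilon2$-correlation in the sense of Definition \ref{def:EpsilonClose} for $n$ large. Since $c$ is drawn independently of $P$, the overall success probability is at least $\frac{\epsilon}{2}\cdot\frac1d = \frac{\epsilon}{2d} \geq \frac{\epsilon}{2d^2}$.

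The one genuine obstacle is the unknown additive constant $c_P = g\paren{\sigma\paren{P}}$: because the label $\ell_{\paren{P,i}}$ pins down $\sigma\paren{i}$ only relative to this prefix-dependent offset, and I have no way to verify a candidate $\sigma'$ against the true mapping, I cannot compute $c_P$ and must instead guess it, which is exactly what costs the factor $1/d$. Everything else is a routine conversion of the average-case accuracy hypothesis into a statement about a single random prefix, together with bookkeeping for the $O\paren{k}$ boundary indices lying in $P$ and the discrepancy between $n$ and $n-k+1$, all of which are lower-order for constant $k,\epsilon$ and large $n$.
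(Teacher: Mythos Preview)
Your proposal is correct and follows essentially the same approach as the paper: pick a random prefix $P\in X_{k-1}$, use the identity $f(\sigma(P,i))=g(\sigma(P))+\sigma(i)$ to read off $\sigma(i)$ from $\ell_{(P,i)}$ up to a single unknown offset, and combine an averaging argument over prefixes with a random guess of that offset. The paper's proof differs only cosmetically: instead of guessing $c_P$ directly, it picks a random anchor index $i$, guesses $\sigma'(i)\sim\mathbb{Z}_d$, and sets $\sigma'(j)=\sigma'(i)+\ell_{C_j}-\ell_{C_i}$; its analysis then additionally conditions on the anchor label $\ell_{C_i}$ being correct, which costs an extra $1/d$ factor and yields $\epsilon/(2d^2)$ rather than your $\epsilon/(2d)$. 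Both arguments incur the same $O(k/n)$ slack from the indices lying in the prefix, which the paper likewise absorbs without comment.
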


The remaining challenge in the proof of Theorem \ref{thm:secFinal} is to show that there is an efficient algorithm to extract predictions of $f\paren{\sigma\paren{C}}$ given blackbox access to an adversary $\Adversary$ that breaks UF-RCA security. However, just because the adversary $\Adversary$ gives the correct response to an entire password challenge $C_1,\ldots,C_\challengelength$ with probability greater than $\paren{\frac{1}{d}+\epsilon}^\challengelength$ it does not mean that the response to each individual challenge $C$ is correct with probability $\frac{1}{d}+\epsilon$. To obtain our predictions for individual clauses $C$ we draw $\challengelength$ extra example challenge response pairs $\paren{C_1',f\paren{\sigma\paren{C_1'}}},\ldots,\paren{C_\challengelength',f\paren{\sigma\paren{C_\challengelength'}}}$, which we use to check the adversary. To obtain the label for a clause $C$ we select a random index $i \in [\challengelength]$ and give $\Adversary$ the password challenge $C_1',\ldots,C_\challengelength'$, replacing $C_i'$ with $C$. If for some $j < i$ the label for clause $C_j'$ is not correct (e.g., $\neq f\paren{\sigma\paren{C_j'}}$) then we discard the label and try again. Claim \ref{claim:secFinal} in Appendix \ref{apdx:securityProofs} shows that this process will give us predictions for individual clauses that are accurate with probability at least $\frac{1}{d}+\epsilon$.

% !TEX root = humanComputablePasswords.tex

\section{Related Work} \label{sec:related}
The literature on passwords has grown rapidly over the past decade (e.g., see \cite{massey1994guessing,pliam2000incomparability,bonneau2012science,boztas1999entropies,usability:compositionPolicies,blocki2013optimizing}.) Perhaps  most related to our paper is the work of Blocki et al. \cite{NaturallyRehearsingPasswords,spacedRepetitionAndMnemonics} and Blum and Vempala~\cite{blum2015publishable} on developing usable and secure password management schemes. While the password management schemes proposed in these works are easier to use (e.g., involve less memorization and/or computation) than our human computable password scheme, these schemes only remain secure up to their information theoretic limit --- after a very small (e.g., $1$--$6$) number of breaches security guarantees start to break down. By contrast, our schemes remain secure after a large (e.g., $100$) number or breaches. 

In contrast to our work, password management software (e.g., PwdHash~ \cite{ross2005stronger} or KeePass~\cite{reichl2013keepass}) relies strong trust assumptions about the user's computational devices. The recent breach at LastPass\footnote{See \url{https://blog.lastpass.com/2015/06/lastpass-security-notice.html/} (Retrieved 9/1/2015).} highlights the potential danger of such strong assumptions.

Hopper and Blum \cite{hopper2001secure} designed a Human Identification Protocol based on noisy parity, a learning problem  that is believed to be hard\footnote{Subsequent work~\cite{juels2005authenticating,gilbert2005active,bringer2006hb,katz2006parallel} has  explored the use of the Hopper-Blum protocol for authentication on pervasive devices like smartcards. }. We emphasize a few fundamental differences between our work and the work of Hopper and Blum. First, a single digit challenge in their protocol consists of an $n$-digit vector $x \in \mathbb{Z}_{10}^n$ and the user responds with the $\mod{10}$ sum of the digits at $\ell \leq n$ secret locations (occasionally the user is supposed to respond with a random digit instead of the correct response so that the adversary cannot simply use Gaussian Elimination to find the secret locations). By contrast, a single digit challenge in our protocol consists of an ordered clause of length $k \ll n$. Second, their protocols allow for an $O\big(n^{\ell/2}\big)$-time attack called Meet-In-The-Middle~\cite{hopper2001secure} after the adversary has seen $\tilde{O}( \log {n \choose \ell})$ challenge-response pairs. Thus, it is critically important to select $\ell$ sufficiently large (e.g., $\ell = \Omega(\log(n))$) in the Hopper-Blum protocol to defend against this Meet-In-The-Middle attack. By contrast, we focus on computation of {\em very simple} functions over a {\em constant} number of variables so that a human can compute the response to each challenge quickly. In particular, we provide strong evidence that our scheme is secure against any polynomial time attacker even if the adversary has seen up to $O\left(n^{c\cdot k} \right)$ challenge-response pairs for some constant $c\ge 1$. Finally, computations in our protocols are deterministic. This is significant because humans are not good at consciously generating random numbers \cite{wagenaar1972generation,humanRandom:figurska2008humans,seventeenMostRandom} (e.g., noisy parity could be easy to learn when humans are providing source of noise)\footnote{Hopper and Blum also proposed a deterministic variant of their protocol called sum of $k$-mins, but this variant is {\em much} less secure. See additional discussion in the appendix.}.

Naor and Pinkas\cite{naor1997visual} proposed using visual cryptography\cite{naor1995visual} to address a related problem: how can a human verify that a message he received from a trusted server has not been tampered with by an adversary? Their protocol requires the human to carry a visual transparency (a shared secret between the human and the trusted server in the visual cryptography scheme), which he will use to verify that messages from the trusted server have not been altered.

A related goal in cryptography, constructing pseudorandom generators in {\sf $NC^0$}, was proposed by Goldreich~\cite{goldreich2000candidate} and by Cryan and Miltersen~\cite{cryan2001pseudorandom}. In Goldreich's construction we fix $C_1,\ldots,C_m \in [n]^k$ once and for all, and a binary predicate $P:\{0,1\}^k\rightarrow\{0,1\}$. The pseudorandom generator is a function $G:\{0,1\}^n\rightarrow\{0,1\}^m$, whose $i$'th bit $G(x)[i]$ is given by $P$ applied to the bits of $x$ specified by $C_i$. O'Donnel and Witmer gave evidence that the ``Tri-Sum-And'' predicate ($TSA\paren{x_1,\ldots,x_5} = x_1 + x_2 +x_3 + x_4x_5 \mod{2}$) provides near-optimal stretch. In particular, they showed that for $m=n^{1.5-\epsilon}$ Goldreich's construction with the TSA predicate is secure against subexponential-time attacks using SDP hierarchies. Our candidate human-computable password schemes use functions $f:\mathbb{Z}_{10}^k \rightarrow \mathbb{Z}_{10}$ instead of binary predicates. While our candidate functions are contained in {\sf $NC^0$}, we note that an arbitrary function in {\sf $NC^0$} is not necessarily human computable.

On a technical level our statistical dimension lower bounds extend work of
Feldman et al. \cite{feldman2013complexity}, who considered the problem of finding a planted solution in a random {\em binary} planted constraint satisfiability problem. We extend their analysis to handle non-binary planted constraint satisfiability problems, and argue that our candidate human computable password schemes are secure. We will discuss this work in more detail later in the paper. 

%They showed that any statistical algorithm --- a class of algorithms that covers almost all known algorithmic techniques --- needs to see at least $\tilde{\Omega}\paren{n^{r/2}}$ random clauses to efficiently identify the planted solution when the distribution over clauses are $(r-1)$-wise independent\footnote{We note that after we have seen $O(n \log n)$ random clauses the planted solution is --- with high probability --- the only solution which satisfies all of the random clauses. The results of Feldman et al. \cite{feldman2013complexity} are evidence that we need $\tilde{\Omega}\paren{n^{r/2}}$ examples to find the planted solution efficiently. Feldman et al. \cite{feldman2013complexity} also showed that Goldreich's PRG with the TSA predicate is secure against statistical adversaries for $m = n^{r\paren{TSA}-\epsilon}$, where $r\paren{TSA} = 1.5$.}. Feldman et al. \cite{feldman2013complexity} also demonstrate that $\tilde{O}\paren{n^{r/2}}$ clauses are sufficient. We extend their analysis to handle non-binary planted satisfiability problems, and argue that our candidate human computable password schemes are secure. 

% !TEX root = humanComputablePasswords.tex
\section{Discussion} \label{sec:OpenQuestions}

\subsection{Improving Response Time}
The easiest way to improve response time is to decrease $\challengelength$ --the number of single-digit challenges that the user needs to solve. However, if the user wants to ensure that each of his passwords are strong enough to resist offline dictionary attacks then he would need to select a larger value of $\challengelength$ (e.g., $\challengelength \geq 10$). Fortunately, there is a natural way to circumvent this problem. The user could save time by memorizing a mapping $w:\mathbb{Z}_{10} \rightarrow \left\{x ~\vline   ~\mbox{$x$ is one of 10,000 most common english words} \right\}$ and responding to each challenge $C$ with $w\paren{f\paren{\sigma\paren{C}}}$ -- the word corresponding to the digit $f\paren{\sigma\paren{C}}$. Now the user can create passwords strong enough to resist offline dictionary attacks by responding to just $3$--$5$ challenges. Even if the adversary learns the words in the user's set he won't be able to mount online attacks. Predicting $w\paren{f\paren{\sigma\paren{C}}}$ is at least as hard as predicting $f\paren{\sigma\paren{C}}$  even if the adversary knows the exact mapping $w$\footnote{In fact, it is quite likely that it is much harder for the adversary to predict $w\paren{f\paren{\sigma\paren{C}}}$  because the adversary will not see which word corresponds with each digit.}.

%\subsection{Human Computable Passwords Challenge}

\subsection{One-Time Challenges}

\paragraph{Malware} Consider the following scenario: the adversary infects the user's computer with a keylogger which is never detected over the user's lifetime. We claim that it is possible to protect the user in this extreme scenario using our scheme by generating multiple (e.g., $10^6$) one-time passwords for each of the user's accounts. When we initially generate the secret mapping $\sigma \sim \mathbb{Z}_d^{n}$ we could also generate cryptographic hashes for millions of one-time passwords $\Hash\paren{\vec{C},f_{k_1,k_2}\paren{\sigma\paren{\vec{C}}}}$.  While usability concerns make this approach infeasible in a traditional password scheme (it would be far too difficult for the user to memorize a million one-time passwords for each of his accounts), it may be feasible to do this using a human computable password scheme. In our human computable password scheme we could select $k_1$ and $k_2$ large enough that $s\paren{f_{k_1,k_2}} = \min\{k_1+1,(k_2+1)/2\} \geq 6$. Assuming that the user authenticates fewer than $10^6$ times over his lifetime a polynomial time adversary would never obtain enough challenge-response examples to learn $\sigma$. The drawback is that $f_{k_1,k_2}$ will take longer for a user to execute in his head. 

\paragraph{Secure Cryptography in a Panoptic World} Standard cryptographic algorithms could be easily broken in a panoptic world  where the user only has access to a semi-trusted computer (e.g., if a user asks a semi-trusted computer to sign a message $m$ using a secret key $sk$ stored on the hard drive then the computer will respond with the correct value $Sign\paren{sk,m}$, but the adversary will learn the values $m$ and $sk$). Our human computable password schemes could also be used to secure some cryptographic operations (e.g., signatures) in a panoptic world by leveraging recent breakthroughs in program obfuscation \cite{garg2013candidate}. The basic idea is to obfuscate a ``password locked" circuit $P_{\sigma,sk,r}$ that can sign messages under a secret key $sk$ --- we need a trusted setup phase for this step. The circuit $P_{\sigma,sk,r}$ will only sign a message $m$ if the user provides the correct response to a unique (pseudorandomly generated) challenge for $m$.

\cut{
 We do assume that the user's computer will provide correct responses to each user's query and that the computer maintains integrity of data stored on its hard drive.

 In particular it is not possible to store private keys on a computer.

We conjecture that the following candidate human computable password scheme $f_3$ could be used to provide security even in this extreme scenario
\[ f_3\paren{x_0,x_1,x_2,x_3,x_4,\ldots,x_{31}} = \paren{\sum_{i=21}^{31} x_i}+x_{\paren{ \sum_{i=10}^{20} x_i \mod{10}}} \mod{10} \ . \]
The drawback is that $f_3$ will take longer for a user to execute in his head. It requires the user to perform $23$ additions modulo $10$ compared with three in the previous schemes $f_1$ and $f_2$. The advantage is that the security parameters are quite strong (e.g.,  $g(f_3) = 11$, $r(f_3) = 12$ and $s(f_3) = 6$), which implies that a polynomial time adversary needs $m=\tilde{\Omega}\paren{n^6}$ challenge response pairs to recover the secret mapping. If $n=100$ then the adversary would need around $10^{12}$ challenge response pairs before he could break $\mathbf{UF}$-$\mathbf{RCA}$ security. Even if the adversary runs in time proportional to $n^{\sqrt{n}}$ and uses the attack from remark \ref{remark:security} he would still need $\tilde{\Omega}\paren{n^{1+5.5}}$ examples. If we make the reasonable assumption that a single user has at most $10^5$ accounts and never authenticates to any single account more than $10^6$ times over the course of his life then the adversary will never see enough examples to recover $\sigma$. 

}

\subsection{Open Questions}

\paragraph{Eliminating the Semi-Trusted Computer} Our current scheme relies on a semi-trusted computer to generate and store random public challenges. An adversary with full control over the user's computer might be able to extract the user's secret if he is able to see the user's responses to $O(n)$ adaptively selected password challenges. Can we eliminate the need for a semi-trusted computer? %One possibility would be to develop a human computable password scheme that can transform words into passwords. In this case we might be able eliminate the need for a semi-trusted computer by adopting the convention that the name of each site is the challenge (e.g., ``Amazon" would be the challenge for the user's Amazon.com).

\cut{\paragraph{Cryptography for Humans?} We presented a scheme that allows a human to create multiple passwords from a short secret by responding to public challenges. What other cryptographic functions could be performed by humans? Could we develop a human computable protocol for bit commitment or coin flipping? What about a human computable encryption scheme?}

\paragraph{Exact Security Bounds} While we provided asymptotic security proofs for our human computable password schemes, it is still important to understand how much effort an adversary would need to expend to crack the secret mapping for specific values of $n$ and $m$. Our attacks with a SAT solver (see Appendix \ref{apdx:CSPSolver}) indicate that the value $n=26$ is too small to provide UF-RCA security even with small values of $m$ (e.g., $m=50$). As $n$ increases the problem rapidly gets harder for our SAT solver (e.g., with $n=50$ and $m=1000$ the solver failed to find $\sigma$).  We also present a public challenge with specific values of $n$ and $m$ to encourage cryptography and security researchers to find other techniques to attack our scheme.

\cut{
\paragraph{Improving Usability}
Is it possible to improve usability by designing a human computable function $f:\mathbb{Z}_{10}^k \rightarrow \mathbb{Z}_{100}$? This could potentially allow the user to generate a secure length $t$ password after responding to only $t/2$ challenges. Our statistical dimension lower bounds also hold for functions $f:\mathbb{Z}_{d_1}^k\rightarrow\mathbb{Z}_{d_2}$. As before the challenge would be designing a function that is human computable and has strong security properties (e.g., $s(f)$ is large and $f$ is $\paren{\delta_1,\delta_2}$-hard to guess). \cut{We would also like to conduct user studies to see how long it takes a typical user to compute passwords using our challenge response schemes.}}

\bibliographystyle{alpha}
\bibliography{humanComputablePasswords}
\appendix

% !TEX root = humanComputablePasswords.tex
\section{Authentication Process} \label{apdx:AuthenticationProcess}
In this section of the appendix we illustrate our human computable password schemes graphically. In our examples, we use the function $f = f_{2,2}$. To compute the response $f\paren{\sigma\paren{C}}$ to a challenge $C = \{x_0,\ldots,x_{13}\}$ the user computes $f\paren{\sigma\paren{C}}=\sigma\paren{x_{\sigma\paren{x_{10}}+\sigma\paren{x_{11}}\mod{10}}} + \sigma\paren{x_{12}} + \sigma\paren{x_{13}} \mod{10}$

\paragraph{Memorizing a Random Mapping.} To begin using our human computable password schemes the user begins by memorizing a secret random mapping $\sigma:[n]\rightarrow \{0,\ldots,9\}$ from $n$ objects (e.g., letters, pictures) to digits. See Figure \ref{fig:HumanComputablePasswordsRandomMapping} for an example.

The computer can provide the user with mnemonics to help memorize the secret mapping $\sigma$ --- see Figures \ref{fig:MnemonicEagle2} and \ref{fig:MnemonicEagle6}. For example, if we wanted to help the user remember that $\sigma\paren{eagle} = 2$ we would show the user Figure \ref{fig:MnemonicEagle2}. We observe that a $10\times n$ table of mnemonic images would be sufficient to help the user memorize {\em any} random mapping $\sigma$. We stress that the computer will only save the original image (e.g., Figure \ref{fig:MnemonicEagle}). The mnemonic image (e.g., Figure \ref{fig:MnemonicEagle2} or \ref{fig:MnemonicEagle6}) would be discarded after the user memorizes $\sigma\paren{eagle}$.

\begin{figure}
\centering
\begin{subfigure}[b]{0.30\textwidth}
\includegraphics[scale=0.4]{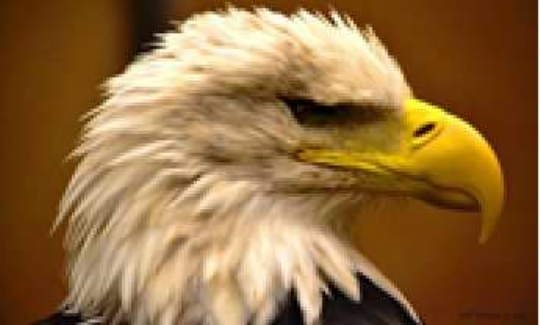}
\caption{Original photo (an eagle).\\}
\label{fig:MnemonicEagle}
\end{subfigure}
\begin{subfigure}[b]{0.30\textwidth}
\includegraphics[scale=0.4]{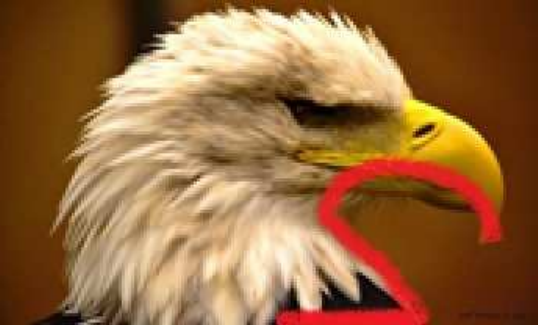}
\caption{Mnemonic to help the user remember \\ $\sigma\paren{eagle} = 2$.}
\label{fig:MnemonicEagle2}
\end{subfigure}
\begin{subfigure}[b]{0.30\textwidth}
\includegraphics[scale=0.4]{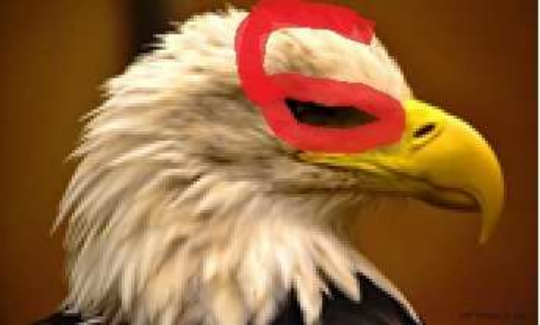}
\caption{Mnemonic to help the user remember \\ $\sigma\paren{eagle} = 6$.} \label{fig:MnemonicEagle6}
\end{subfigure}
\caption{Mnemonics to help memorize the secret mapping $\sigma$}
\end{figure}

\begin{figure}
\centering
\begin{subfigure}[b]{0.45\textwidth}
\includegraphics[scale=0.35]{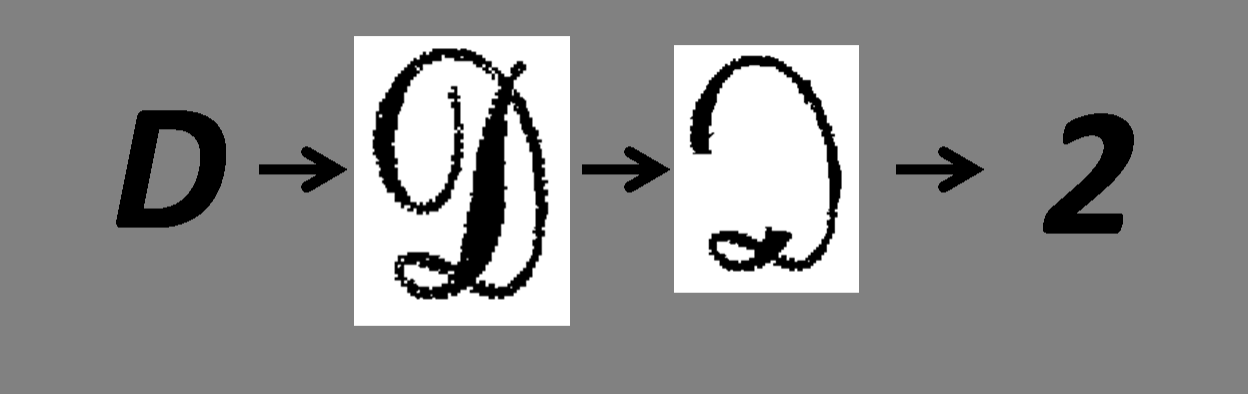}
\caption{$M_{D,2}$} \label{fig:MnemonicDto2}
\end{subfigure}
\begin{subfigure}[b]{0.45\textwidth}
\includegraphics[scale=0.35]{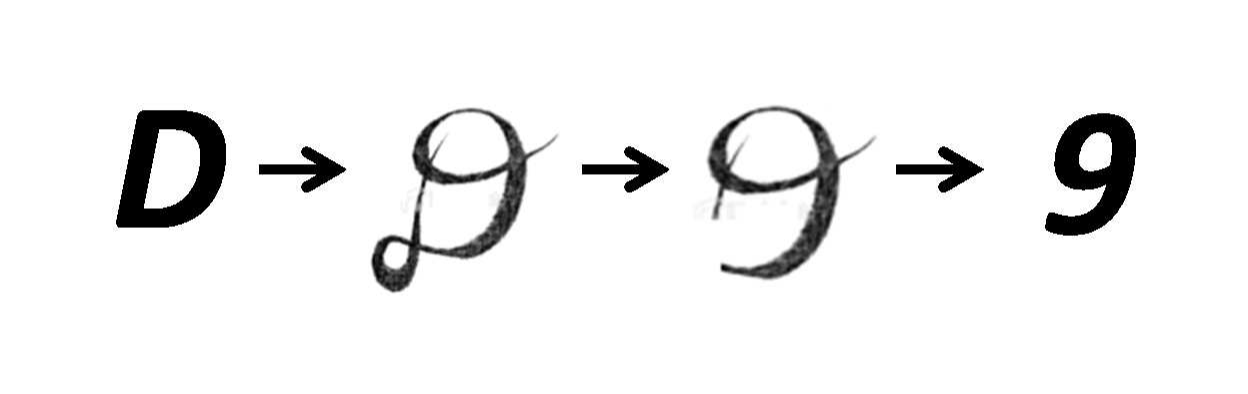}
\caption{$M_{D,9}$} \label{fig:MnemonicDto9}
\end{subfigure}
\caption{Mnemonics to help the user memorize the secret mapping $\sigma$}
\end{figure}

\begin{figure}
\centering
\includegraphics[scale=0.4]{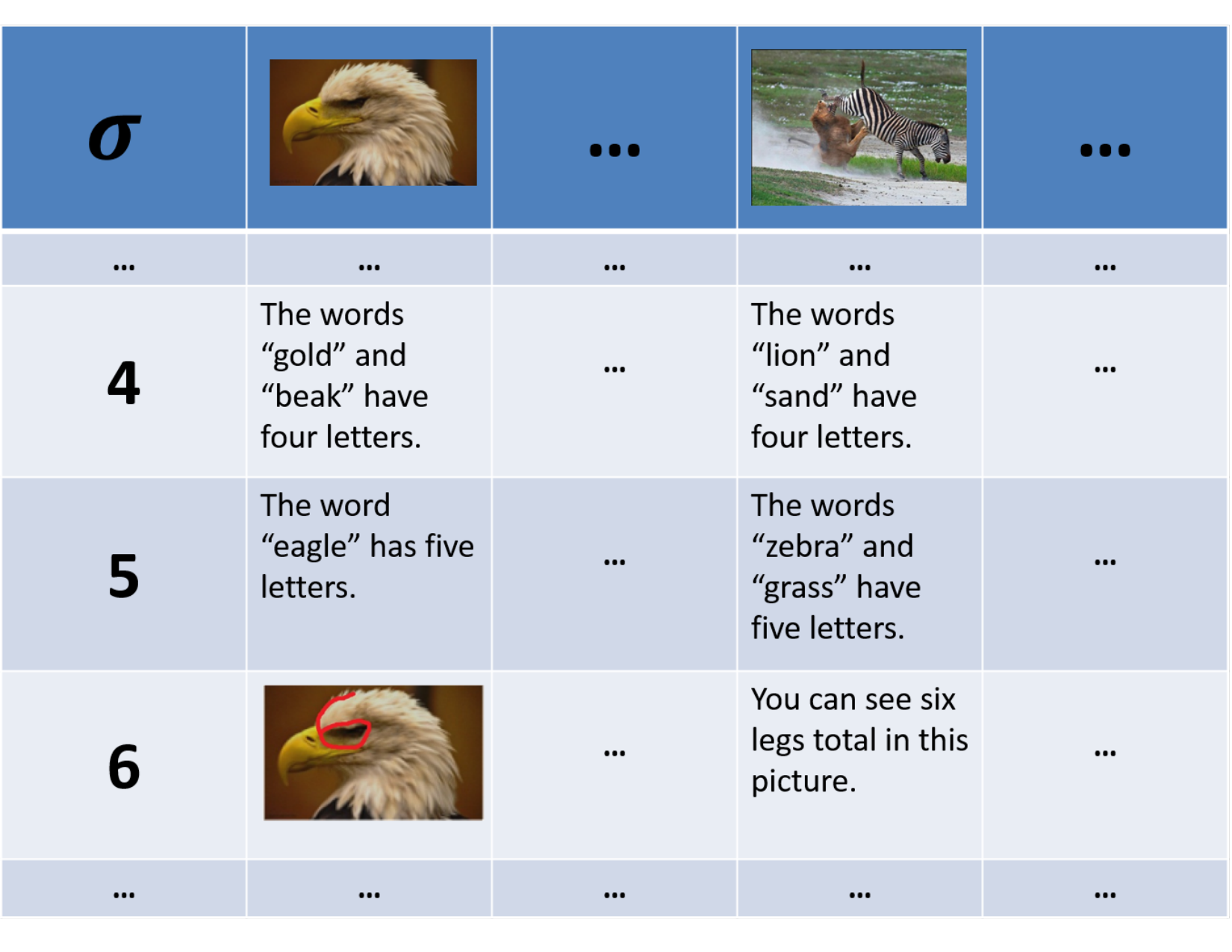}
\caption{Table of Mnemonic Helpers to Help Learn Any Secret Mapping}
\label{fig:MnemonicTable}
\end{figure}

\paragraph{Single-Digit Challenges.} In our scheme the user computes each of his passwords by responding to a sequence of single-digit challenges. For $f = f_{2,2}$ a single-digit challenge is a tuple $C \in [n]^{14}$ of fourteen objects. See Figure \ref{fig:HumanComputablePasswordsDigitChallenge} for an example. To compute the response $f\paren{\sigma\paren{C}}$ to a challenge $C = \{x_0,\ldots,x_{13}\}$ the user computes $f\paren{\sigma\paren{C}}=\sigma\paren{x_{\sigma\paren{x_{10}}+\sigma\paren{x_{11}}\mod{10}}} + \sigma\paren{x_{12}} + \sigma\paren{x_{13}} \mod{10}$. Observe that this computation involves just three addition operations modulo ten. See Figure \ref{fig:HumanComputablePasswordsDigitChallengeAndResponse} for an example. In this example the response to the challenge $C= \{x_0 =\mbox{burger},x_1=\mbox{eagle},\ldots,x_{10}=\mbox{lightning}, x_{11}=\mbox{dog},\allowbreak x_{12}=\mbox{man standing on world},\allowbreak x_{13}=\mbox{kangaroo}\}$ is \begin{eqnarray*}
f\paren{\sigma\paren{C}} &=& \sigma\paren{x_{\sigma\paren{x_{10}}+\sigma\paren{x_{11}}\mod{10}}} + \sigma\paren{x_{12}} + \sigma\paren{x_{13}} \mod{10} \\
& =& \sigma\paren{x_{\sigma\paren{\mbox{lightning}}+\sigma\paren{\mbox{dog}}\mod{10}}} \\ & &+   \sigma\paren{\mbox{man standing on world}} + \sigma\paren{\mbox{kangaroo}} \mod{10}   \\
& =& \sigma\paren{x_{9+3\mod{10}}} + \sigma\paren{\mbox{man standing on world}} + \sigma\paren{\mbox{kangaroo}} \mod{10}   \\
 &=& \sigma\paren{\mbox{minions}} + \sigma\paren{\mbox{man standing on world}} + \sigma\paren{\mbox{kangaroo}} \mod{10}  \\
&=& 7+4+5 \mod{10} = 6 \ . \end{eqnarray*}
 We stress that this computation is done entirely in the user's head. It takes the main author $7.5$ seconds on average to compute each response.

\begin{figure}
\centering
\includegraphics[scale=0.5]{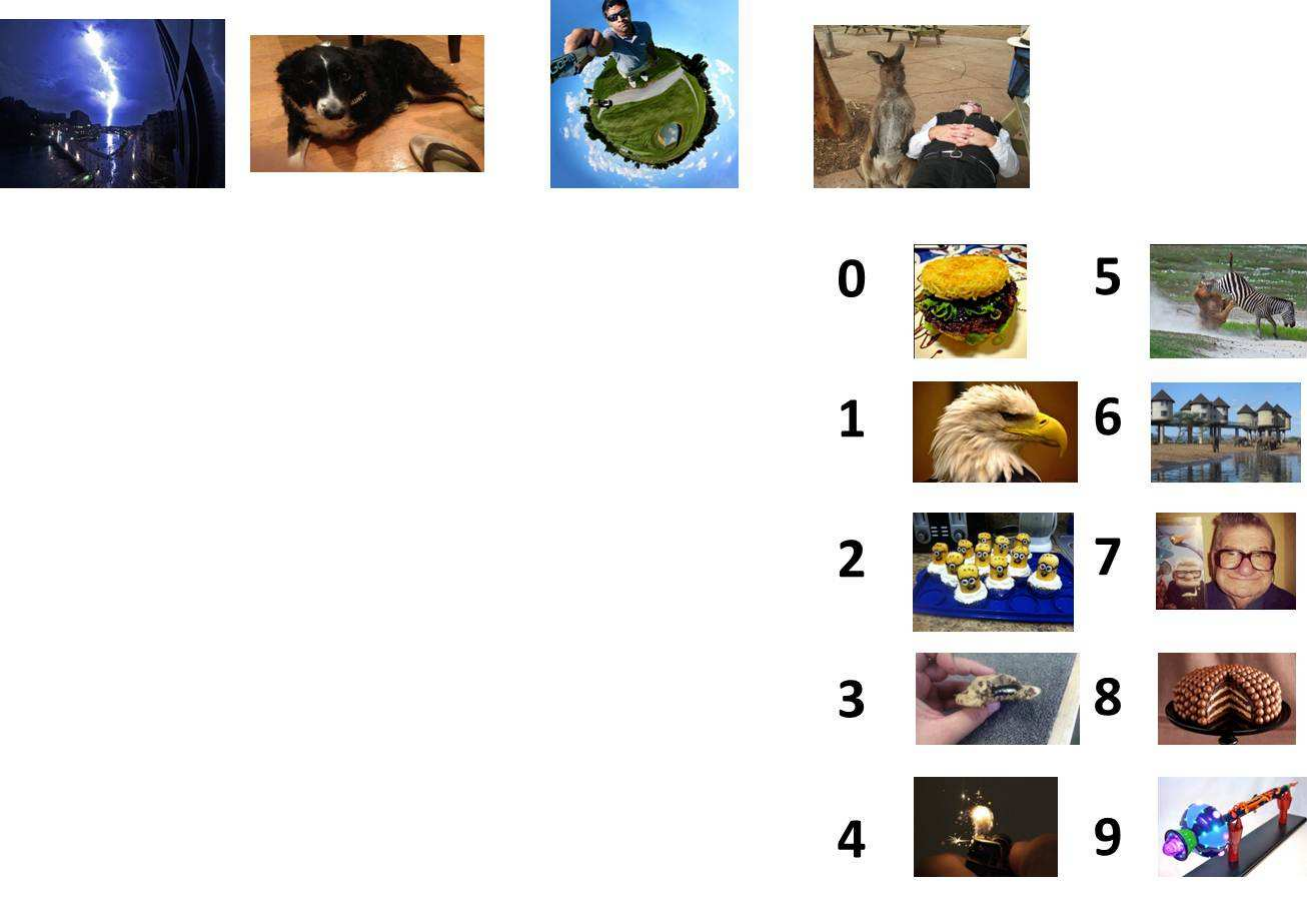}
\caption{A single-digit challenge}
\label{fig:HumanComputablePasswordsDigitChallenge}
\end{figure}

\paragraph{Creating an Account.} To help the user create an account the computer would first pick a sequence of single-digit challenges $C_1,\ldots, C_\challengelength$, where the security parameter is typically $\challengelength=10$, and would display the first challenge $C_1$ to the user --- see Figure \ref{fig:HumanComputablePasswordsLoginScreen1} for an example. To compute the first digit of his password the user would compute $f\paren{\sigma\paren{C_1}}$. After the user types in the first digit $f\paren{\sigma\paren{C_1}}$ of his password the computer will display the second challenge $C_2$ to the user --- see Figure \ref{fig:HumanComputablePasswordsLoginScreen2}. After the user creates his account the computer will store the challenges $C_1,\ldots,C_{10}$ in public memory. The password $pw = f\paren{\sigma\paren{C_1}}\ldots f\paren{\sigma\paren{C_\challengelength}}$ will not be stored on the user's local computer (the authentication server may store the cryptographic hash of $pw$).

\begin{figure}
\centering
\includegraphics[scale=0.5]{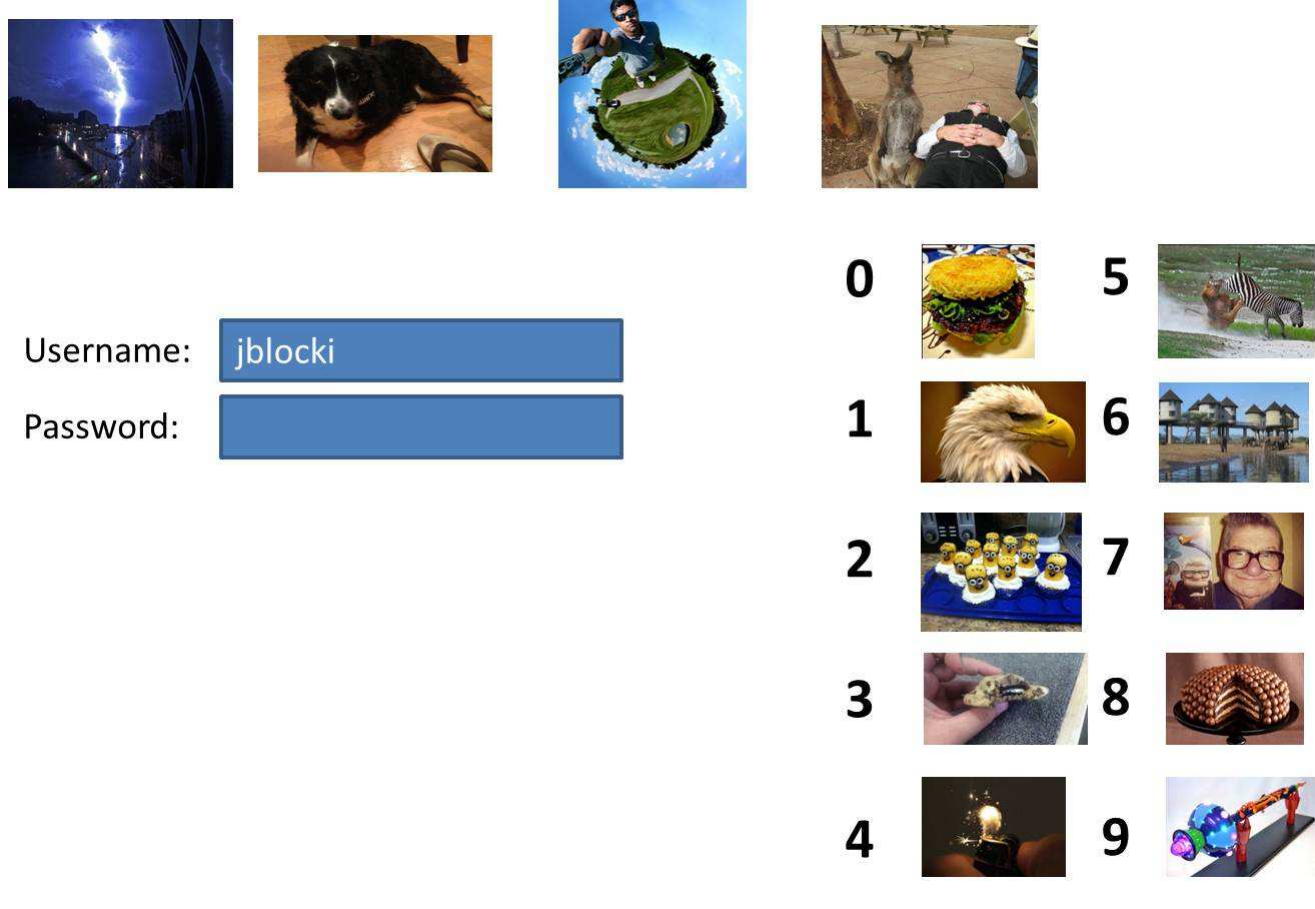}
\caption{Login Screen}
\label{fig:HumanComputablePasswordsLoginScreen1}
\end{figure}

\begin{figure}
\centering
\includegraphics[scale=0.5]{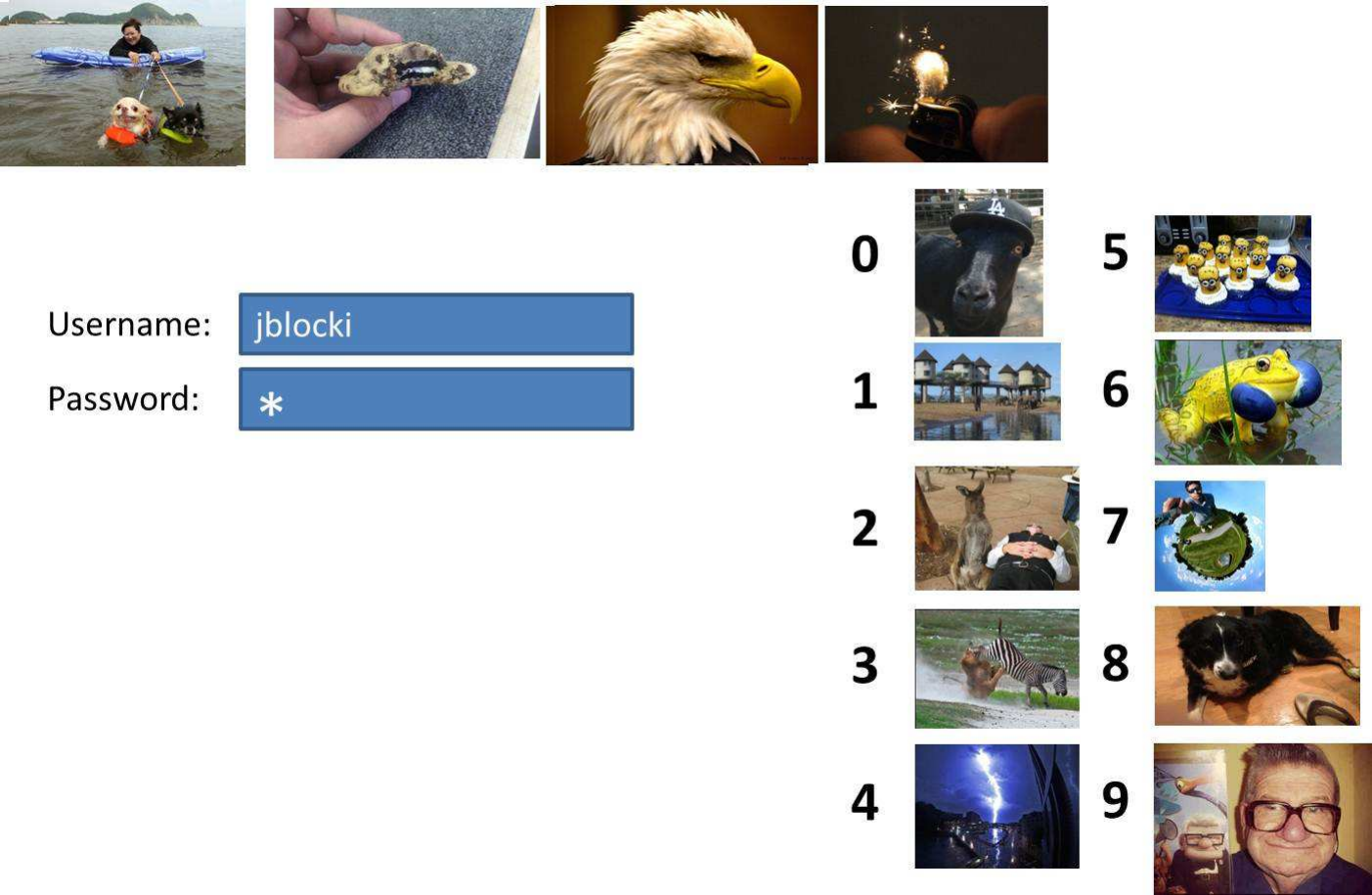}
\caption{Login Screen after the user responds to the first single-digit challenge}
\label{fig:HumanComputablePasswordsLoginScreen2}
\end{figure}

\paragraph{Authentication.} Authenticating is very similar to creating an account. To help the user recompute his password for an account the computer first looks up the challenges $C_1,\ldots,C_{\challengelength}$ which were stored in public memory, and the user authenticates by computing his password $pw = f\paren{\sigma\paren{C_1}}\ldots f\paren{\sigma\paren{C_\challengelength}}$. We stress that the single-digit challenges the user sees during authentication will be the same single-digit challenges that the user saw when he created the account. The authentication server verifies that the cryptographic hash of $pw$ matches its record.

\paragraph{Helping the user remember his secret mapping.} The computer keeps track of when the user rehearses each value of his secret mapping (e.g., $\paren{i,\sigma\paren{i}}$ for each $i \in [n]$), and reminds the user to rehearse any part of his secret mapping that he hasn't used in a long time. One advantage of our human computable password scheme (compared with the~\sharedCues~scheme of Blocki et al.\cite{NaturallyRehearsingPasswords}  ) is that most users will use each part of their secret mapping often enough that they will not need to be reminded to rehearse --- see discussion in Section \ref{subsec:MemorizeAndRehearse}. The disadvantage is that we require the user to spend extra effort computing his passwords each time he authenticates.

\subsection{Formal View}
We now present a formal overview of the authentication process. Algorithm \ref{alg:Memorize} outlines the initialization process in which the user memorizes a secret random mapping $\sigma$ generated by the user's computer, and Algorithm \ref{alg:GenStories} outlines the account creation process. In Algorithm \ref{alg:GenStories} the user  generates the password for an account $i$ by computing the response to a sequence of random challenges $C$ generated by the user's computer. The sequence of challenges are stored in public memory. We assume that all steps in algorithms \ref{alg:Memorize}, \ref{alg:GenStories} and \ref{alg:Authenticate} are executed on the user's local computer unless otherwise indicated. We also assume that the initialization phase (Algorithms \ref{alg:Memorize} and \ref{alg:GenStories}) is carried out in secret (e.g., we assume that the secret mapping is chosen in secret), but we do not assume that the challenges are kept secret.  We use {\bf (User)} to denote a step completed by the human user and we use {\bf (Server) } to denote a step completed by a third-party server. Algorithm \ref{alg:Authenticate} illustrates the authentication process. 

\begin{algorithm}[H]
\caption{$\mathbf{MemorizeMapping}$} \label{alg:Memorize}
%\begin{algorithmic}
\SetKwInOut{Input}{input}

\Input{ $I_1,...,I_n$, $d$, $b$ and $M_{i,j}$ for $i \in [n], j \in \{0,\ldots,d-1\}$.}
\emph{ Given $j \in \{0,\ldots, d-1\}$ and $i \in \{1,\ldots,n\}$ $M_{i,j}$ is a mnemonic to help the user associate image $I_i$ with the number $j$. $d$ is whatever base the user is familiar with (typically $d=10$), and the value $b$ contains random bits which are used to select the secret mapping $\sigma$. }\;

\BlankLine
%\State {\bf Input:} $n$, base $d$, random bits b, images $I_1,...,I_n$, and mnemonic helpers $M_{i,j}$ for $i \in [n], j \in \{0,\ldots,d-1\}$.
\emph{Generate and Memorize Secret Mapping}\;

%\Comment{\tt Generate and Memorize Secret Mapping}
\For{$i\leftarrow 1$ \KwTo $n$}{
\tcp{\tt Using random bits b}
$\sigma\paren{i} \sim \{0,\ldots,d-1\}$ \;
$M_i \leftarrow M_{i,\sigma(i)}$ \;
{\bf (User) } Using $M_i$ memorizes the association $\paren{I_i,\sigma\paren{i}}$ for $i \in [n]$. \;
}
%\end{algorithmic}
\end{algorithm}

\begin{algorithm}[H]
\caption{$\mathbf{CreateChallenge}$} \label{alg:GenStories}
\SetKwInOut{Input}{input}
\Input{ $n$, $i$, $\challengelength$, $d$, $b$ and $I_1,...,I_n$.}

\BlankLine
\emph{ Generate Password Challenge for account $A_i$. $t$ is a security parameter which specifies how many digits the password will contain.}\;

\For{$j=1 \to \challengelength$}{
$C^i_j \sim X_k$  \;  \tcp{\tt Using random bits b} 
}
$\vec{C}_i \leftarrow \left\langle C^i_1,\ldots,C^i_\challengelength\right\rangle$ \;
 Store record  $\paren{i,\vec{C}_i}$ \;
\For{$j=1 \to \challengelength$}{
		 Load images from $C_j^i$. \;
		Display $C_j^i$ for the user. \;
{\bf (User) } Computes $\guess{j} \leftarrow f\paren{\sigma\paren{C^j_i}}$ \;
}
 Send $\left\langle\guess{1},\ldots,\guess{\challengelength}\right\rangle =f\paren{\sigma\paren{\vec{C}_i}}$ to server $i$\;
\tcp{$\Hash$ is a strong cryptographic hash function} 
 {\bf (Server $i$)} Stores $h_i= \Hash\paren{\vec{C}_i,\left\langle\guess{1},\ldots,\guess{\challengelength}\right\rangle }$ \;

\end{algorithm}

\begin{algorithm}[H]
\caption{$\mathbf{Authenticate}$} \label{alg:Authenticate}

\SetKwInOut{Input}{input}
\Input{Security parameter $\challengelength$. Account $i \in [m]$. Challenges $\vec{C}_1,\ldots,\vec{C}_m$.}

$ \left\langle C^i_1,\ldots,C^i_\challengelength\right\rangle \leftarrow \vec{C}_i$ \;
\tcp{Display Single Digit Challenges}
\For{$j=1 \to \challengelength$}{
 {\bf (Semi-Trusted Computer)}  Load images from $C_j^i$.\;
		 {\bf (Semi-Trusted Computer)} Displays $C_j^i$ to the user. \;
 {\bf (User) } Computes $\guess{j} \leftarrow f\paren{\sigma\paren{C^j_i}}$. \;
}
 {\bf (Semi-Trusted Computer)} Sends $\left\langle \guess{1},\ldots,\guess{\challengelength}\right\rangle$ to the server for account $i$. \;
{\bf (Server)} Verifies that $\Hash\paren{\vec{C}_i,\left\langle \guess{1},\ldots,\guess{\challengelength}\right\rangle} = h_i $ \;

\end{algorithm}

\subsection{Memorizing and Rehearsing $\sigma$} \label{subsec:MemorizeAndRehearse}
After the user memorizes $\sigma$ he may need to rehearse parts of the mapping periodically to ensure that he does not forget it. How much effort does this require? Blocki et al.\cite{NaturallyRehearsingPasswords} introduced a usability model to estimate how much extra effort that a user would need to spend rehearsing the mapping $\sigma$. We used this model to obtain the predictions in Table \ref{tab:Usability}. 

Imagine that we had a program keep track of how often the user rehearsed each association $\paren{i,\sigma\paren{i}}$ and predict how much longer the user will safely remember the association $\paren{i,\sigma\paren{i}}$ without rehearsing again --- updating this prediction after each rehearsal. The user rehearses the association $\paren{i,\sigma\paren{i}}$ {\em naturally} whenever he needs to recall the value of $\sigma\paren{i}$ while computing the password for any of his password protected accounts. If the user is in danger of forgetting the value $\sigma\paren{i}$ then the program sends the user a reminder to rehearse (Blocki et al.\cite{NaturallyRehearsingPasswords} call this an {\em extra rehearsal}). Table \ref{tab:Usability} shows the value of  $\mathbb{E}\left[\TotalExtraRehearsals{365}\right]$, the expected number of extra rehearsals that the user will be required to do to remember the secret mapping $\sigma$  during the first year. This value will depend on how often the user rehearses $\sigma$ naturally. We consider four types of users: Active, Typical, Occasional and Infrequent. An Active user visits his accounts more frequently than a Infrequent user. See Appendix \ref{apdx:Rehearsal} for specific details about how Table \ref{tab:Usability} was computed.
\begin{table}
\parbox{.65\linewidth}{
\centering
\begin{tabular}{| l | l | l | l || l | l | l | }
\hline
 &\multicolumn{3}{c||}{Our Scheme $\paren{\sigma \in \mathbb{Z}_{10}^n}$} & \multicolumn{3}{|c|}{\sharedCues} \\
\hline
 User & $n=100$ & $n=50$ & $n=30$ & SC-0 & SC-1 & SC-2 \\
\hline
Very Active & $0.396$ & $0.001$ & $\approx 0$ & $\approx 0$ & $3.93$ & $7.54$ \\
\hline
Typical & $2.14$ & $0.039$ & $\approx 0$ & $\approx 0$ & $10.89$ & $19.89$ \\
\hline 
Occasional & $2.50$ & $0.053$ & $\approx 0$ & $\approx 0$ & $22.07$ & $34.23$ \\
\hline
Infrequent & $70.7$ & $22.3$ & $6.1 $ & $\approx 2.44$ & $119.77$ & $173.92$  \\
\hline
\end{tabular} 
\caption{$\mathbb{E}\left[\TotalExtraRehearsals{365}\right]$: Extra Rehearsals over the first year to remember $\sigma$ in our scheme with $f_{2,2}$ or $f_{1,3}$. Compared with  \sharedCues~schemes SC-0,SC-1 and SC-2\cite{NaturallyRehearsingPasswords}. }
\label{tab:Usability}
}
\hfill
\parbox{.23\linewidth}{
\begin{tabular}{| l | l || l | l | }
\hline
A & B & C & D \\
\hline \hline
0 & E & 5 & J   \\
1 & F & 6&K \\
2 & G & 7&L \\
3 & H & 8&M \\
4 & I & 9&N \\
\hline
\end{tabular}
\caption{Single-Digit Challenge Layout in Scheme 1} \label{tbl:challengeLayout}
}
\hfill
\end{table}

\paragraph{Discussion} One of the advantages of our human computable passwords schemes is that memorization is essentially a one time cost for our Very Active, Typical and Occasional users. Once the user has memorized the mapping $\sigma:\{1,...,n\}\rightarrow\mathbb{Z}_d$ he will get sufficient natural rehearsal to maintain this memory. In fact, our schemes require the user to expend {\em less} extra effort rehearsing his secret mapping than the \sharedCues~password management scheme of Blocki et al. \cite{NaturallyRehearsingPasswords} (with the exception of SC-0 -- the least secure \sharedCues~ scheme). Intuitively, this is because human computable password schemes require to recall $\sigma(i)$ for multiple different values of $i$ to respond to each single-digit challenge $C$. To compute $f_{2,2}\paren{\sigma\paren{\{0,\ldots,13 \}}}$ the user would need to recall the values of $\sigma(10),\sigma(11),\sigma(12),\sigma(13)$ and $\sigma(j)$, where $j  = \sigma(10)+\sigma(11) \mod{10}$. If the user has $10$ digit passwords then he will naturally rehearse the value of $\sigma(i)$ for up to fifty different values of $i$ each time he computes one of his passwords. While the user needs to spend extra time computing his password each time he authenticates in our human computable password scheme, this extra computation time gives the user more opportunities to rehearse his secret mapping.

% !TEX root = humanComputablePasswords.tex
\section{Human Computable Passwords Challenge} \label{sec:challenge}

\begin{table}[h]
\centering
\caption{Human Computable Password Challenges \\ $n$ --- Secret Length \\ $m$---$\#$ Challenge-Response Pairs}
\label{tab:challenge}
\begin{center}

\begin{tabular}{ | l | l | l || l | l| }
\hline
&\multicolumn{2}{c||}{Scheme 1 $\paren{f_{2,2}}$} & \multicolumn{2}{|c|}{Scheme 2 $\paren{f_{1,3}}$} \\
\hline
\hline
  $n$ & $m$ & Winner & $m$ & Winner \\
\hline
 \multirow{3}{*}{$100$ digits} & $1000$ & N/A  & $500$ & N/A \\
\cline{2-5} 
  & $500$ & N/A & $300$ & N/A  \\
\cline{2-5} 
  & $300$ & N/A & $200$ & N/A \\ 
\hline
 \multirow{3}{*}{$50$ digits} & $500$ & N/A & $300$ & N/A \\
\cline{2-5} 
 & $300$ & N/A & $150$ & N/A \\
\cline{2-5} 
 & $150$ & N/A & $100$ &N/A \\ 
\hline
 \multirow{3}{*}{$30$ digits} & $300$ & CSP Solver & $150$ &N/A \\
\cline{2-5} 
 & $100$ & N/A  &100  &N/A\\
\cline{2-5}
 & $50$ & N/A & 50 & N/A \\  \hline
\end{tabular}
\centering

\end{center}
\end{table}

While we provided asymptotic security bounds for our human computable password schemes in our context it is particularly important to understand the constant factors. In our context, it may be reasonable to assume that $n \leq 100$ (e.g., the user may be unwilling to memorize longer mappings). In this case it would be feasible for the adversary to execute an attack that takes time proportional to $10^{\sqrt{n}} \leq 10^{10}$.  We conjecture that in practice scheme 2 $(f_{1,3})$ is slightly weaker than scheme 1 $(f_{2,2})$ when $n \leq 100$ despite the fact that $s\paren{f_{2,2}} < s\paren{f_{1,3}}$ because of the attack described in Remark \ref{remark:security}. This attack requires $\tilde{O}\paren{n^{1+g\paren{f}/2}}$ examples, and the running time $O\paren{10^{\sqrt{n}}n^3}$ may be feasible for $n \leq 100$. To better understand the exact security bounds we created several public challenges for researchers to break our human computable password schemes under different parameters (see Table \ref{tab:challenge}). At this time these challenges remain unsolved even after they were presented during the rump sessions at a cryptography conference and a security conference\cite{HumanComputablePasswordChallenge}. The challenges can be found at \url{http://www.cs.cmu.edu/~jblocki/HumanComputablePasswordsChallenge/challenge.htm}. For each challenge we selected a random secret mapping $\sigma \in \mathbb{Z}_{10}^n$, and published (1) $m$ single digit challenge-response pairs $\paren{C_1,f\paren{\sigma\paren{C_1}}},\ldots, \paren{C_m,f\paren{\sigma\paren{C_m}}}$, where each clause $C_i$ is chosen uniformly at random from $X_k$, and (2) $20$ length---$10$ password challenges $\vec{C}_1,\ldots,\vec{C}_{20} \in \paren{X_k}^{10}$. The goal is to guess one of the secret passwords $p_i = f\paren{\sigma\paren{\vec{C}_i}}$ for some $i \in [20]$. \\

% !TEX root = humanComputablePasswords.tex
\section{CSP Solver Attacks} \label{apdx:CSPSolver}
Theorems \ref{thm:SecurityLowerBound} and \ref{thm:secFinal} provide asymptotic security bounds (e.g., an adversary needs to see $m = \tilde{\Omega}\paren{n^{s(f)}}$ challenge-response pairs to forge passwords). However, in our context $n$ is somewhat small (e.g., $n\leq 100$). Thus, it is also important to address the following question: how many challenge-response pairs does the adversary need to see before it becomes feasible for the adversary to recover the secret on a modern computer? To better understand the exact security bounds of our human computable password schemes we used a Constraint Satisfaction Problem (CSP) solver to attack our scheme. We also created several public challenges to break our candidate human computable password schemes (see Table \ref{tab:challenge}).

\paragraph{CSP Solver} Our computations were performed on a computer with a 2.83 GHz Intel Core2 Quad CPU and 4 GB of RAM. In each instance we generated a random mapping $\sigma:[n]\rightarrow \mathbb{Z}_{10}$ and $m$ random challenge response pairs $\paren{C,f\paren{\sigma\paren{C}}}$ using the functions $f_{2,2}$ and $f_{1,3}$. We used the Constraint Satisfaction Problem solver from the Microsoft Solver Foundations library to try to solve for $\sigma$\footnote{http://blogs.msdn.com/b/solverfoundation/ (Retrieved 9/15/2014). }. The results of this attack are shown in Tables \ref{tab:SATAttack} and \ref{tab:SATAttack2}. Due to limited computational resources we terminated each instance if the solver failed to find the secret mapping within $2.5$ days, and if our solver failed to find $\sigma$ in $2.5$ days on and instance $(n,m)$ we did not run the solver on strictly harder instances (e.g., $(n',m')$ with $n' \geq n$ and $m' \leq m$). We remark that our empirical results are consistent with the hypothesis that the time/space resources consumed by the CSP solver increase exponentially in $n$ (e.g., when we decrease $n$ from $30$ to $26$ with $m = 100$ examples the CSP solver can solve for $\sigma \in \mathbb{Z}_{10}^{26}$ in $40$ minutes, while the solver failed to find $\sigma \in \mathbb{Z}_{10}^{30}$ in $2.5$ days and we observe similar threshold behavior in other columns of the table. ).

\begin{table}[t]
\centering
\begin{tabular}{|c|c|c|c|c|c|c|}
\hline
& $m = 50$ & $m = 100$ & $m = 300$ & $m = 500$ & $m = 1000$ & $m = 10000$ \\
\hline
$n= 26$ & 23.5 hr & 40 min & 4.5 hr & 29 min & 10 min & 2 min \\
\hline
$n= 30$ & HARD & UNSOLVED & 2.33 hr & 35.5 min & 10 min & 20 s \\
\hline
$n= 50$ & HARD & HARD & HARD & HARD & UNSOLVED & 7 hr \\
\hline
$n= 100$ & HARD & HARD & HARD & HARD & HARD & UNSOLVED \\
\hline
\end{tabular}
\caption{CSP Solver Attack on $f_{2,2}$ \\ Key: UNSOLVED -- Solver failed to find solution in 2.5 days; HARD -- Instance is harder than an unsolved instance;  }
\label{tab:SATAttack}
\end{table}

\begin{table}[t]
\centering
\begin{tabular}{|c|c|c|c|c|c|c|}
\hline
        & $m = 50$ & $m = 100$ & $m = 300$ & $m = 500$ & $m = 1000$ & $m = 10000$ \\
\hline
$n= 26$ &  8.7 hr       &    53 min &   1.33 hr &   13.5 min&     6.3min &  2 min            \\
\hline
$n= 30$ &   HARD       &  UNSOLVED &    1 hr   &    41 min &    2 min   &   15 s           \\
\hline
$n= 50$ &    HARD      &    HARD       &   HARD        &   HARD        &   UNSOLVED         &     6.5 hr         \\
\hline
$n= 100$&    HARD      &    HARD       &       HARD    &    HARD       &     HARD       &         UNSOLVED     \\
\hline
\end{tabular}
\caption{CSP Solver Attack on $f_{1,3}$ \\ Key: UNSOLVED -- Solver failed to find solution in 2.5 days; HARD -- Instance is harder than an unsolved instance;  }
\label{tab:SATAttack2}
\end{table}

%\input{Appendix}

% !TEX root = humanComputablePasswords.tex
\section{Statistical Dimension} \label{apdx:StatisticalDimension}
At a high level our statistical dimension lower bounds closely mirror the lower bounds from \cite{feldman2013complexity} for binary predicates. For example, Lemmas \ref{lemma:degree}, \ref{lemma:concentrationHypercontractivity}, \ref{lemma:concentrationRestricted}, \ref{lemma:boundQueryh} and \ref{lemma:difference} are similar to Lemmas 2, 4, 5, 6 and 7 from \cite{feldman2013complexity} respectively. 

While the high level proof strategy is very similar, we stress that our lower bounds do requires new ideas because we are working with planted solutions $\sigma \in \mathbb{Z}_d^n$ instead of $\sigma \in \mathbb{Z}_d^n$. We use the basis functions $\chi_\alpha$ where for $\alpha \in \mathbb{Z}_d^n$ is \[\chi_\alpha\paren{x} = \exp\paren{\frac{-2\pi\sqrt{-1} \paren{x\cdot \alpha} }{d}} \ .\]
Note that if $d > 2$ then the Fourier coefficients $\hat{b}_\alpha$ of a function $b:\mathbb{Z}_d^k\rightarrow \mathbb{R}$ might include complex numbers. While we need to take care to deal with the possibility that a Fourier coefficients might be complex, we are still able to apply powerful tools from Fourier analysis. For example,  Parseval's identity \[\sum_{\alpha \in \mathbb{Z}_d^k} \left| \hat{b}_\alpha\right|^2 = \mathbb{E}_{x \sim \mathbb{Z}_d^k}\left[b\paren{x}^2\right]  \ ,\]
still applies and there are versions of the hypercontractivity theorem \cite[Chapter 10]{o2007analysis} that still apply even when Fourier coefficients are complex.

Another difference is that the reference distribution is defined over clauses and outputs $X_k \times \mathbb{Z}_d$ (instead of just  clauses $X_k$) because we are working with a function $f:\mathbb{Z}_d^n \rightarrow \mathbb{Z}_d$ with non-binary outputs. Some care is needed in finding the right reference distribution. Unlike~\cite{feldman2013complexity} we cannot just use the uniform distribution over $X_k \times \mathbb{Z}_d$ as a reference distribution --- instead the reference distribution inherently depends on the function $f$ (Of course it is must still be independent of $\sigma$). 

 %We first consider the following distributional search problem $\Z_{\close,f,j}$: find $\sigma'$ that is $\close$-correlated with $\sigma$ given $m$ randomly chosen challenge clauses from the distribution $Q^{f,j}_\sigma$ for $j \in \mathbb{Z}_d$. In this section we let $U_k$ denote the uniform distribution over $X_k$.

The following definition will be useful in our proofs. 

\begin{definition} 
Given a clause $C \in X_k$ and $S \subseteq [k]$ of size $\ell$, we let $C_{|S} \in X_\ell$ denote the clause of variables of $C$ at the positions with indices in $S$ (e.g., if $C = \paren{1,\ldots,k}$ and $S = \{1,5,k-2\}$ then $C_{|S} = \paren{1,5,k-2} \in X_3$). 
Given a function $h:X_k \times \mathbb{Z}_d \rightarrow \mathbb{R}$, a clause $C_\ell \in X_\ell$  and $i \in \mathbb{Z}_d$ and a set $S \subseteq [k]$ of size $\left|S\right|=\ell$ we define 
\[h_S^i\paren{C_\ell} = \frac{\left|X_\ell\right|}{\left|X_k \right|} \sum_{ C\in X_k, C_{|S} =C_\ell} h\paren{C,i}  \ .\]
\end{definition}

We first show that $\Delta\paren{\sigma,h}$  can be expressed in terms of the Fourier coefficients of $\hat{Q}$ as well as the functions $h_\ell$. In particular, given a function $h:X_k \times \mathbb{Z}_d \rightarrow \mathbb{R}$ we define the degree $\ell$ function $b_\ell:\mathbb{Z}_d^n\rightarrow \mathbb{C}$ as follows \[b_\ell\paren{\sigma} \doteq  \frac{1}{2} \sum_{\alpha \in \mathbb{Z}_d^\ell: H\paren{\alpha}=\ell} {k \choose \ell} \sum_{i=0}^{d-1}  \hat{Q}_{\alpha}^{f,i}   \sum_{C\in X_\ell}  \chi_\alpha\paren{\sigma\paren{C}}   h^i_\ell(C)    \ . \] 

 %Here, $S_\alpha = \{ i: \alpha[i] \neq 0\} \subset [k]$ denotes the subset of $\ell = H\paren{\alpha}$ indices at which $\alpha$ is non-zero. 

Lemma \ref{lemma:degree} states that \[\Delta\paren{\sigma,h} = \sum_{\ell=1}^k  \frac{1}{\left|X_\ell \right|} b_\ell\paren{\sigma} \ .\] This observation will be important later because we can use hypercontractivity to bound the expected value of $\left|b_\ell\paren{\sigma}\right|$. Notice that if $Q$ has distributional complexity $r$ and $\ell \leq r$ then $b_\ell\paren{\sigma} = 0$ because $\hat{Q}^{f,i}_\alpha = 0$ for all $i \in \mathbb{Z}_d$ and $\alpha \in \mathbb{Z}_d^k$ s.t. $1 \leq H\paren{\alpha} \leq r$. This means that first $r$ terms of the sum in Lemma \ref{lemma:degree} will be zero. 

\begin{lemma} \label{lemma:degree}
For every $\sigma \in \mathbb{Z}_d^k$, $j \in \mathbb{Z}_d$ and $h : X_k \rightarrow \mathbb{R}$ we have
$\Delta\paren{\sigma,h} = \sum_{\ell=1}^k  \frac{1}{\left|X_\ell \right|} b_\ell\paren{\sigma} $ .
\end{lemma}

\begin{proofof}{Lemma \ref{lemma:degree}}
Before calculating we first observe that for any $j \in \mathbb{Z}_d$ we have  \[\hat{Q}_{\vec{0}}^{f,j} = \mathbb{E}_{x\sim \mathbb{Z}_d^k}\left[Q^{f,j}\paren{x}\chi_\alpha\paren{x} \right]= \mathbb{E}_{x\sim \mathbb{Z}_d^k}\left[Q^{f,j}\paren{x} \right]=\sum_{x \in \mathbb{Z}_d^k} \frac{Q^{f,j}\paren{x}}{d^k} = \frac{2-d}{d} \ .\]
Given $\alpha \in \mathbb{Z}_d^k$ we also define $S_\alpha \subset [k]$ to be the set of indices $i$ s.t. $\alpha_i \ne 0$ --- $\left| S_\alpha \right| = H(\alpha)$.  Now we note that 
\begin{eqnarray*}
\Delta\paren{\sigma,h} &=& \mathbb{E}_{C\sim X_k}\left[h\paren{C,f\paren{\sigma\paren{C}}}  \right]-\mathbb{E}_{(C,i)\sim T}\left[ h\paren{C,i}\right] \\
&=& \sum_{C \in X_k} \left( \frac{h\big(C,f(\sigma(C))\big)}{\left|X_k\right|} - \sum_{i=0}^{d-1} \Pr_{T}[(C,i)] h(C,i) \right) \\ 
&=& \sum_{C \in X_k} \left( \frac{h\big(C,f(\sigma(C))\big)}{\left|X_k\right|} - \sum_{i=0}^{d-1} \frac{\Pr_{x\sim \mathbb{Z}_d^k}[f(x)=i]}{\left|X_k\right|} h(C,i) \right) \\
&=& \sum_{C \in X_k} \sum_{i=0}^{d-1}  \left(  \frac{h(C,i)\left(\frac{Q^{f,i}_\sigma(C)+1}{2}-\Pr_{x\sim \mathbb{Z}_d^k}[f(x)=i]\right)}{\left|X_k\right|}  \right) \\ 
&=& \sum_{C \in X_k} \sum_{i=0}^{d-1}  \left( \frac{h(C,i)}{\left|X_k\right|}\left( \frac{1}{2}-\Pr_{x\sim \mathbb{Z}_d^k}[f(x)=i]+ \sum_{\alpha \in \mathbb{Z}_d^k} \frac{\hat{Q}_{\alpha}^{f,i}\chi_\alpha\paren{\sigma\paren{C}}}{2} \right)  \right) \\ 
&=& \sum_{C \in X_k} \sum_{i=0}^{d-1}  \left( \frac{h(C,i)}{\left|X_k\right|}\left( \frac{1}{d}-\Pr_{x\sim \mathbb{Z}_d^k}[f(x)=i]+ \sum_{\alpha \in \mathbb{Z}_d^k: \alpha \ne \vec{0}} \frac{\hat{Q}_{\alpha}^{f,i}\chi_\alpha\paren{\sigma\paren{C}}}{2} \right)  \right) \\ 
&=& \sum_{C \in X_k} \sum_{i=0}^{d-1}  \left( \frac{h(C,i)}{\left|X_k\right|}\left(  \sum_{\alpha \in \mathbb{Z}_d^k: \alpha \ne \vec{0}} \frac{\hat{Q}_{\alpha}^{f,i}\chi_\alpha\paren{\sigma\paren{C}}}{2} \right)  \right) \\ 
&=& \sum_{C \in X_k} \sum_{i=0}^{d-1}  \left( \frac{h(C,i)}{\left|X_k\right|}\left(  \sum_{\ell=1}^k \sum_{\alpha \in \mathbb{Z}_d^k: H\paren{\alpha}=\ell}  \frac{\hat{Q}_{\alpha}^{f,i}\chi_\alpha\paren{\sigma\paren{C}}}{2} \right)  \right) \\ 
&=& \frac{1}{2\left|X_k\right|} \sum_{\ell=1}^k  \sum_{C \in X_k}   \sum_{i=0}^{d-1}  \left( h(C,i)\left( \sum_{\alpha \in \mathbb{Z}_d^k: H\paren{\alpha}=\ell}  \hat{Q}_{\alpha}^{f,i}\chi_\alpha\paren{\sigma\paren{C}} \right)  \right) \\
&=& \frac{1}{2\left|X_k\right|} \sum_{\ell=1}^k \sum_{\alpha \in \mathbb{Z}_d^k: H\paren{\alpha}=\ell}   \sum_{C \in X_k}   \sum_{i=0}^{d-1}  \left( h(C,i)\left(  \hat{Q}_{\alpha}^{f,i}\chi_\alpha\paren{\sigma\paren{C}} \right)  \right) \\
&=& \frac{1}{2\left|X_k\right|} \sum_{\ell=1}^k \sum_{\alpha \in \mathbb{Z}_d^k: H\paren{\alpha}=\ell} \sum_{i=0}^{d-1}  \hat{Q}_{\alpha}^{f,i}   \sum_{C_\ell \in X_\ell}   \sum_{C \in X_k, C_{|S_\alpha} = C_\ell} \chi_\alpha\paren{\sigma\paren{C}}   h^i(C)   \\
&=& \frac{1}{2\left|X_k\right|} \sum_{\ell=1}^k \sum_{\alpha \in \mathbb{Z}_d^\ell: H\paren{\alpha}=\ell} {k \choose \ell} \sum_{i=0}^{d-1}  \hat{Q}_{\alpha}^{f,i}   \sum_{C_\ell \in X_\ell}   \sum_{C \in X_k, C_{|S_\alpha} = C_\ell} \chi_\alpha\paren{\sigma\paren{C_\ell}}   h(C,i)   \\
&=&  \sum_{\ell=1}^k \frac{1}{2\left|X_\ell\right|} \sum_{\alpha \in \mathbb{Z}_d^\ell: H\paren{\alpha}=\ell} {k \choose \ell} \sum_{i=0}^{d-1}  \hat{Q}_{\alpha}^{f,i}   \sum_{C_\ell \in X_\ell}  \chi_\alpha\paren{\sigma\paren{C_\ell}}   h^i_{S_\alpha}(C)   \\
&=&  \sum_{\ell=1}^k \frac{1}{\left|X_\ell\right|} b_\ell(\sigma) \ .
\end{eqnarray*}
 \end{proofof}

The following lemma is similar to Lemma 4 from \cite{feldman2013complexity}. Lemma \ref{lemma:concentrationHypercontractivity} is based on the general hypercontractivity theorem \cite[Chapter 10]{o2007analysis} and applies to more general (non-boolean) functions.
\begin{lemma} \cite[Theorem 10.23]{o2007analysis} \label{lemma:concentrationHypercontractivity}
If $b:\mathbb{Z}_d^n\rightarrow \mathbb{R}$ has degree at most $\ell$ then for any $t \geq \paren{\sqrt{2ed}}^\ell$,
\[ \Pr_{x\sim \mathbb{Z}_d^n } \left[ \left|b(x) \right| \geq t \|b \|_2 \right] \leq \frac{1}{d^\ell} \exp\paren{-\frac{\ell}{2ed} t^{2/\ell} }\ ,\]
where $\|b \|_2 = \sqrt{\mathbb{E}_{x\sim \mathbb{Z}_d^n }\left[b\paren{x}^2 \right]}$
\end{lemma}

Lemma \ref{lemma:concentrationRestricted} and its proof are almost identical to Lemma 5 in \cite{feldman2013complexity}. We simply replace their concentration bounds with the concentration bounds in Lemma \ref{lemma:concentrationHypercontractivity}. We include the proof for completeness.

\begin{reminderlemma}{\ref{lemma:concentrationRestricted}}
\lemmaConcentrationBoundRestricted
\end{reminderlemma}

\begin{proofof}{Lemma \ref{lemma:concentrationRestricted}}
The set $\D'$ contains $1/d'$ fraction of points in $\mathbb{Z}_d^n$. Therefore,
\[ \Pr_{x\sim \D'} \left[ \left|b(x) \right| \geq t \|b \|_2 \right] \leq \frac{d'}{d^\ell} \exp\paren{-\frac{\ell}{2ed} t^{2/\ell} }\ ,\]
for any $t \geq \paren{\sqrt{2ed}}^\ell$. For any random variable $Y$ and value $a \in \mathbb{R}$,
\[ \mathbb{E}[Y] \leq a   +  \int_a^\infty \Pr[Y \geq t]\,dt \ .\]
We set $Y = \left|b\paren{\sigma} \right|/\|b \|_2$ and $a = \paren{ \frac{\ln d'}{c_0}}^{\ell/2}$.  Assuming that $a > \paren{\sqrt{2ed}}^\ell$ we get
\begin{align*}\frac{\mathbb{E}_{\sigma \sim \D'}[|b(\sigma)|]}{\|b\|_2}  &\leq \left(\frac{\ln d'}{c_0}\right)^{\ell/2} +  \int_{(\ln d'/c_0)^{\ell/2}}^\infty \frac{d'}{d^\ell} \cdot e^{-c_0 t^{2/\ell}} dt  \\ &=
(\ln d'/c_0)^{\ell/2} +  \frac{\ell \cdot d'}{2d^\ell \cdot c_0^{\ell/2}} \cdot \int_{\ln d'}^\infty e^{-z} z^{\ell/2-1} dz \ . 
\end{align*}
Let $u(i) \doteq \int_{\ln d'}^\infty e^{-z} z^{\ell/2 - i} dz$. Applying integration by parts we have
\begin{eqnarray*}
u(i) \doteq \int_{\ln d'}^\infty e^{-z} z^{\ell/2 - i} dz &=& \left.\left( e^{-z} z^{\ell/2-i+1} \right)\right|_{\ln d'}^\infty + \int_{\ln d'}^\infty e^{-z} z^{\ell/2 - i+1} dz - \left(\frac{\ell}{2}-i \right)\int_{\ln d'}^\infty e^{-z} z^{\ell/2 - i} dz \\ &=&
\left.\left( e^{-z} z^{\ell/2-i+1} \right)\right|_{\ln d'}^\infty + u(i-1) - \left(\frac{\ell}{2}-i \right)u(i)
  \ . 
\end{eqnarray*}

Thus, \[u(i-1) = \left(\frac{\ell}{2}-i+1\right) u(i) -\left.\left( e^{-z} z^{\ell/2-i+1} \right)\right|_{\ln d'}^\infty = \left(\frac{\ell}{2}-i+1\right) u(i) + \frac{\left(\ln d' \right)^{\frac{\ell}{2}-i+1 }}{d'} \ . \] 
Unrolling the recurrence for $T \geq 1$ we get \[u(1) = \frac{\left(\ln d' \right)^{\frac{\ell}{2} }}{d'} +  \sum_{i=1}^{T-1} \frac{\left(\ln d' \right)^{\frac{\ell}{2}-i }}{d'} \prod_{j=1}^{i}  \left(\frac{\ell}{2}-j +1 \right) + u\left(T+1\right)\prod_{j=1}^{T}  \left(\frac{\ell}{2}-j +1 \right)  \ .  \]
We also note that for $T = \lceil \frac{\ell}{2}+1 \rceil$ we have $u\left(T+1\right)\prod_{j=1}^{T}  \left(\frac{\ell}{2}-j +1 \right) \leq 0$. This follows because $u(T+1) \geq 0$ for any integer $T \geq 0$ and $\prod_{j=1}^{T} \left(\frac{\ell}{2}-j +1 \right)  \leq 0$. It follows that 
 \[u(1) \leq \frac{\left(\ln d' \right)^{\frac{\ell}{2} }}{d'} +  \sum_{i=1}^{\lceil \frac{\ell}{2}+1 \rceil} \frac{\left(\ln d' \right)^{\frac{\ell}{2}-i }}{d'} \prod_{j=1}^{i}  \left(\frac{\ell}{2}-j +1 \right) \leq  \frac{\left(\ln d' \right)^{\frac{\ell}{2} }}{d'}\left( 1+ \sum_{i=1}^{\lceil \frac{\ell}{2}+1 \rceil} \ell^{-i} \left( \frac{\ell}{2}\right)^i \right) \leq  \frac{2\left(\ln d' \right)^{\frac{\ell}{2} }}{d'}\   \]
 where we used the condition $d' \geq e^{\ell}$ to obtain the second to last inequality. Now we have
 
 \begin{align*}\frac{\mathbb{E}_{\sigma \sim \D'}[|b(\sigma)|]}{\|b\|_2}  &\leq 
(\ln d'/c_0)^{\ell/2} +  \frac{\ell \cdot d'}{2d^\ell \cdot c_0^{\ell/2}} \cdot u(1) \\
&\leq 
(\ln d'/c_0)^{\ell/2} +  \frac{\ell \cdot d'}{2d^\ell \cdot c_0^{\ell/2}} \cdot \left(\frac{2\left(\ln d' \right)^{\frac{\ell}{2} }}{d'} \right)  \\
&\leq 
2(\ln d'/c_0)^{\ell/2} \  . 
\end{align*}
\end{proofof}

\begin{lemma} \label{lemma:boundQueryh}
Let $\D' \subseteq \{0,\ldots,d-1\}^n$ be a set of assignments for which $d' = d^n/\left| \D'\right|$. Then
\[\mathbb{E}_{\sigma\sim\D'} \left[ \left|\frac{1}{\left|X_\ell\right|} b_\ell\paren{\sigma}\right|\right]   \leq  2\paren{{k \choose \ell} \sqrt{ \ell!d}}  \paren{\ln d'/c_0}^{\ell/2} \max_{\alpha \in \mathbb{Z}_d^\ell} \sum_{i=0}^{d-1} \frac{\|h_{S_\alpha}\paren{\sigma} \|_2}{\sqrt{\left|X_\ell \right|}  } \ .  \]
\end{lemma}
\begin{proof}
For simplicity of notation we set $b = b_\ell$. Our first goal will be to find the Fourier coefficients of \[b(\sigma) =  \frac{1}{2} \sum_{\alpha \in \mathbb{Z}_d^\ell: H\paren{\alpha}=\ell} {k \choose \ell} \sum_{i=0}^{d-1}  \hat{Q}_{\alpha}^{f,i}   \sum_{C \in X_\ell}  \chi_\alpha\paren{\sigma\paren{C}}   h^i_{S_\alpha}(C)    \ . \] 
Given $\alpha = (\alpha_1,\ldots,\alpha_\ell) \in \mathbb{Z}_d^\ell$ with $H(\alpha) = \ell$ and a clause $C = (c_1,...,c_\ell) \in X_\ell$ we define the projection $\alpha^C \in \mathbb{Z}_d^\ell$ of $\alpha$ onto $C$ to be the unique vector s.t. $\alpha^C_{c_i} = \alpha_i$ for each $i \leq \ell$ and $\alpha_j = 0$ for each $j \notin \{c_1,\ldots,c_\ell\}$ --- note that $H(\alpha^C) = \ell$ . 

Given $\alpha,\alpha ' \in \mathbb{Z}_d^k$ with $H(\alpha)=H(\alpha')=\ell$ and $C, C \in X_\ell$ we say that the pairs $\paren{\alpha,C_\ell}$ and $\paren{\alpha ', C_\ell '}$ are equivalent if and only if their projections are equal  $\alpha^C = \alpha'^{C'}$\footnote{For example, if $C = (1,2,5)$, $C = (1,5,2)$ and $\alpha = (4,5,6)$ and $\alpha' = (4,6,5)$ then the pairs $\paren{\alpha,C_\ell}$ and $\paren{\alpha ', C_\ell '}$ are equivalent. }. We can partition the set $\{ \alpha \in \mathbb{Z}_{d}^k: H\paren{\alpha} = \ell\} \times X_\ell$ into equivalence classes $E_1,\ldots,E_t$. If the pairs $\paren{\alpha,C_\ell}$ and $\paren{\alpha ', C_\ell '}$ are equivalent then we observe that the clauses $C$ and $C '$ must contain the same variables though perhaps in a different order. Furthermore, given an equivalence class $E_j$ such that $(\alpha,C) \in E_j$ we have $(\alpha',C) \notin E_j$ for any $\alpha' \neq \alpha \in \mathbb{Z}_d^k$. Thus each equivalence class has size $\ell!$ because there are $\ell!$ ways to reorder the $\ell$ variables in a clause $C$. We can rewrite $b(\sigma)$ as 
\[b(\sigma) = \frac{{k \choose \ell}}{2} \sum_{j=1}^t\sum_{(\alpha,C) \in E_j} \sum_{i=0}^{d-1} \hat{Q}_{\alpha}^{f,i} h^i_{S_\alpha}(C)   \chi_{\alpha}(\sigma(C)) \]
  Let $(\alpha,C) \in E_j$ then the Fourier coefficient of $\alpha^C$ is
\[\hat{b}_j = \hat{b}_{\alpha^C} =  \frac{{k \choose \ell}}{2}\sum_{(\alpha',C') \in E_j} \sum_{i=0}^{d-1} \hat{Q}_{\alpha}^{f,i} h^i_\ell(C') \  . \] 
We also note that $t = \frac{ \left| X_\ell \right|(d-1)^k}{\ell!}$.

Now we can apply Parseval's identity along with the Cauchy-Schwarz inequality to obtain
\begin{eqnarray*}
\mathbb{E}_{\sigma\sim\mathbb{Z}_d^n} \left[b\paren{\sigma} \overline{b\paren{\sigma}}\right] &=& \mathbb{E}_{\sigma\sim\mathbb{Z}_d^n} \left[\left|b\paren{\sigma}\right|^2\right]  \\ 
%&=&\mathbb{E}_{\sigma\sim\mathbb{Z}_d^n}\left[\left|\sum_{\substack{\alpha\in\mathbb{Z}_d^k\\H\paren{\alpha}=\ell}} \sum_{C_\ell \in X_\ell} {\hat{Q}_\alpha} {\chi_\alpha\paren{\sigma\paren{C_\ell}}} h_\ell \paren{C_\ell,S_\alpha} \right|^2 \right] \\
%&=& \mathbb{E}_{\sigma\sim\mathbb{Z}_d^n}\left[\left| \sum_{j=1}^t \sum_{\alpha,C_\ell \in E_j} {\hat{Q}_\alpha} {\chi_\alpha\paren{\sigma\paren{C_\ell}}} h_\ell \paren{C_\ell,S_\alpha} \right|^2  \right] \\
&=& \mathbb{E}_{\sigma\sim\mathbb{Z}_d^n}\left[ \sum_{j=1}^t \left|\hat{b}_j  \right|^2 \right] \\
&=& \frac{{k \choose \ell}^2}{4} \mathbb{E}_{\sigma\sim\mathbb{Z}_d^n}\left[ \sum_{j=1}^t \left|\sum_{(C,\alpha) \in E_j}\sum_{i=0}^{d-1} \hat{Q}_{\alpha}^{f,i} h^i_{S_{\alpha}}(C)   \right|^2 \right] \\
&\leq& \frac{{k \choose \ell}^2}{4} \mathbb{E}_{\sigma\sim\mathbb{Z}_d^n}\left[ \sum_{j=1}^t \left(\sum_{(C,\alpha) \in E_j}\sum_{i=0}^{d-1} \left| \hat{Q}_{\alpha}^{f,i} \right|^2 \right) \left(\sum_{(C,\alpha) \in E_j}\sum_{i=0}^{d-1} \left|h^i_{S_{\alpha}} (C)  \right|^2 \right)   \right] \\
&\leq& \frac{{k \choose \ell}^2}{4} \mathbb{E}_{\sigma\sim\mathbb{Z}_d^n}\left[ \sum_{j=1}^t \left(\ell!\max_{j \leq t, (C,\alpha) \in E_j } \sum_{i=0}^{d-1} \left| \hat{Q}_{\alpha}^{f,i} \right|^2 \right) \left(\sum_{(C,\alpha) \in E_j}\sum_{i=0}^{d-1} \left|h^i_{S_{\alpha}}(C)  \right|^2 \right)   \right] \\
&\leq& \frac{{k \choose \ell}^2}{4} \mathbb{E}_{\sigma\sim\mathbb{Z}_d^n}\left[  \left(\ell!\max_{j \leq t, (C,\alpha) \in E_j } \sum_{i=0}^{d-1} \left| \hat{Q}_{\alpha}^{f,i} \right|^2 \right) \left(\sum_{j=1}^t \sum_{(C,\alpha) \in E_j}\sum_{i=0}^{d-1} \left|h^i_{S_{\alpha}}(C)  \right|^2 \right)   \right] \\
&\leq& \frac{{k \choose \ell}^2}{4} \mathbb{E}_{\sigma\sim\mathbb{Z}_d^n}\left[  \left(\ell!\max_{j \leq t, (C,\alpha) \in E_j } \sum_{i=0}^{d-1} \left| \hat{Q}_{\alpha}^{f,i} \right|^2 \right) \left(\max_{\alpha \in \mathbb{Z}_d^\ell}\sum_{C \in X_\ell}\sum_{i=0}^{d-1} \left|h^i_{S_{\alpha}}(C)   \right|^2 \right)   \right] \\
&\leq& \frac{{k \choose \ell}^2}{4} \mathbb{E}_{\sigma\sim\mathbb{Z}_d^n}\left[  \left(\ell!\max_{j \leq t, (C,\alpha) \in E_j } \sum_{i=0}^{d-1} \left| \hat{Q}_{\alpha}^{f,i} \right|^2 \right) \left(\left|X_\ell\right|\max_{\alpha \in \mathbb{Z}_d^\ell} \sum_{i=0}^{d-1} \mathbb{E}_{C \sim X_\ell} \left[h^i_{S_\alpha}(C)^2   \right] \right)   \right] \\
&\leq& \frac{{k \choose \ell}^2 d \ell! \left|X_\ell\right|}{4} \mathbb{E}_{\sigma\sim\mathbb{Z}_d^n}\left[  \left(\max_{j \leq t, (C,\alpha) \in E_j } \sum_{i=0}^{d-1} \left| \hat{Q}_{\alpha}^{f,i} \right|^2 \right)   \right] \max_{\alpha \in \mathbb{Z}_d^\ell} \sum_{i=0}^{d-1} \| h_{S_\alpha}^i\|_2^2 \\
&\leq& \frac{{k \choose \ell}^2 d \ell! \left|X_\ell\right|}{4}   \max_{\alpha \in \mathbb{Z}_d^\ell} \sum_{i=0}^{d-1} \|h_{S_\alpha}^i\|_2^2 \ .
\end{eqnarray*} 
Before we can apply Lemma \ref{lemma:concentrationRestricted} we must address a technicality. The range of $b = b_\ell$ might include complex numbers, but Lemma \ref{lemma:concentrationRestricted} only applies to functions $b$ with range $\mathbb{R}$. For $c,d \in \mathbb{R}$ we adopt the notation  $Im\paren{c+d\sqrt{-1}} = d$ and $Re\paren{c+d\sqrt{-1}} = c$. We observe that 
\begin{eqnarray*}
\mathbb{E}_{\sigma\sim\mathbb{Z}_d^n} \left[b\paren{\sigma} \overline{b\paren{\sigma}}\right] &=& \mathbb{E}_{\sigma\sim\mathbb{Z}_d^n} \left[Re\paren{b\paren{\sigma}}^2+Im\paren{b\paren{\sigma}}^2 \right] \\
&=& \|Re\paren{b} \|^2_2 + \|Im\paren{b}\|^2_2 \ .
\end{eqnarray*}

We first observe that $Re\paren{b}$ and $Im\paren{b}$ are both degree $\ell$ functions because $b$ is a degree $\ell$ function.

%\[ Re\paren{b\paren{\sigma}} = \frac{1}{\left|X_\ell \right|} \sum_{ \alpha \in \mathbb{Z}_d^k: H\paren{\alpha}=\ell}    \sum_{C_\ell \in X_\ell} Re\paren{ \hat{Q}_\alpha \chi_\alpha \paren{\sigma\paren{C_\ell}}  h_\ell\paren{C_\ell}} \, \]
%and 
%\[Im\paren{b\paren{\sigma}} =   \frac{1}{\left|X_\ell \right|} \sum_{ \alpha \in \mathbb{Z}_d^k: H\paren{\alpha}=\ell}    \sum_{C_\ell \in X_\ell} Im\paren{ \hat{Q}_\alpha \chi_\alpha \paren{\sigma\paren{C_\ell}}  h_\ell\paren{C_\ell}}\ . \]

Now we can apply Lemma \ref{lemma:concentrationRestricted} to get
\begin{eqnarray*}
\mathbf{E}_{\sigma\sim\D'}\left[\left|Re\paren{b\paren{\sigma}} \right|\right] &\leq& \frac{2 \paren{\ln d'/c_0}^{\ell/2}}{d^\ell} \|Re(b) \|_2 \\ 
&\leq&  \frac{2 \paren{\ln d'/c_0}^{\ell/2}}{d^\ell} \sqrt{\mathbb{E}_{\sigma\sim\mathbb{Z}_d^n} \left[b\paren{\sigma} \overline{b\paren{\sigma}}\right] }  \\
&\leq&  \paren{{k \choose \ell}  \sqrt{\ell!d}}  \paren{\ln d'/c_0}^{\ell/2} \sqrt{\left|X_\ell \right|} \max_{\alpha \in \mathbb{Z}_d^\ell} \sum_{i=0}^d   \|h_{S_\alpha}\paren{\sigma} \|_2  \ . 
\end{eqnarray*}
A symmetric argument can be used to bound $\mathbf{E}_{\sigma\sim\D'}\left[Im\paren{b\paren{\sigma}} \right]$. 
Now because
\[\left| b\paren{\sigma} \right| \leq \left| Re\paren{b\paren{\sigma}} \right| + \left|Im\paren{b\paren{\sigma}} \right|  \  , \]
it follows that
\begin{eqnarray*}
\mathbf{E}_{\sigma\sim\D'}\left[\left|\frac{1}{\left|X_\ell\right|}b\paren{\sigma}\right| \right] &\leq&  2 \paren{{k \choose \ell} \sqrt{\ell!d}} \paren{\frac{1}{\left|X_\ell\right|}} \paren{\paren{\ln d'/c_0}^{\ell/2}   } \max_{\alpha \in \mathbb{Z}_d^\ell} \sum_{i=0}^{d-1} \|h_{S_\alpha}\paren{\sigma} \|_2  \sqrt{\left|X_\ell \right|} \\
 &\leq& 2\paren{{k \choose \ell} \sqrt{ \ell!d}}  \paren{\ln d'/c_0}^{\ell/2} \max_{\alpha \in \mathbb{Z}_d^\ell} \sum_{i=0}^{d-1} \frac{\|h_{S_\alpha}\paren{\sigma} \|_2}{\sqrt{\left|X_\ell \right|}  } \ .
\end{eqnarray*}

\end{proof}

We will use Fact \ref{fact:hBound} to prove Lemma \ref{lemma:difference}. The proof of Fact \ref{fact:hBound} is found in \cite[Lemma 7]{feldman2013complexity}. We include it here for completeness.
\begin{fact} \cite{feldman2013complexity} \label{fact:hBound}
If $h:X_k\times \mathbb{Z}_d \rightarrow \mathbb{R}$ satisfies $\|h\|_2^2 = 1$ then for any $i \in \mathbb{Z}_d$, $0 \leq \ell \leq k$ and $S\subseteq [k]$ of size $|S|=\ell$ we have $\sum_{i=0}^{d-1} \|h_S^i\|_2^2 \leq d$.
\end{fact}
\begin{proof}
First notice that for any $C_\ell$, $S\subseteq [k]$ s.t $\left|S\right| = \ell$
 \[ \left|\{C \in X_k ~\vline~ C_{|S} = C_\ell    \} \right| = \frac{\left|X_k\right|}{\left|X_\ell \right|} \ . \]
By applying the definition of $h_\ell$ along with the Cauchy-Schwartz inequality
\begin{eqnarray*}
\sum_{i=0}^{d-1} \|h_S^i\|_2^2 &=& \sum_{i=0}^{d-1} \mathbb{E}_{C_\ell\sim X_\ell}\left[h_S^i\paren{C_\ell}^2 \right] \\
&=& \paren{\frac{\left|X_\ell \right|}{\left|X_k \right|}}^2 \sum_{i=0}^{d-1} \mathbb{E}_{C_\ell\sim X_\ell}\left[ \paren{ \sum_{C\in X_k, C_{|S} =C_\ell} h\paren{C,i}       }^2 \right] \\
&\leq& \paren{\frac{\left|X_\ell \right|}{\left|X_k \right|}}^2 \sum_{i=0}^{d-1} \mathbb{E}_{C_\ell\sim X_\ell}\left[ \frac{\left|X_k\right|}{\left|X_\ell\right|} \paren{ \sum_{ C\in X_k, C_{|S} =C_\ell} h\paren{C,i}^2       } \right] \\
&\leq& \paren{\frac{\left|X_\ell \right|}{\left|X_k \right|}}  d \mathbb{E}_{C_\ell\sim X_\ell, i \sim \mathbb{Z}_d}\left[  \paren{ \sum_{C\in X_k, C_{|S} =C_\ell} h\paren{C,i}^2       } \right] \\
&=& d \mathbb{E}_{C\sim U_k}\left[h\paren{C}^2 \right] = d\|h\|_2^2 = d \ . 
\end{eqnarray*}
\end{proof}

\begin{lemma} \label{lemma:difference}
Let $r= r(f)$, let $\D' \subseteq 
\left\{0,\ldots,d-1\right\}^n$ be a set of secret mappings and let $d' = d^n/|\D'|$. Then
$\dc(\D') = O_k\left((\ln d'/n)^{r/2}\right)$
\end{lemma}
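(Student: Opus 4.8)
The plan is to follow the template of Feldman et al.\ (their Lemma~7), adapting it to the non-binary response alphabet $\mathbb{Z}_d$. Recall that $\dc(\D')$ maximizes $\mathbb{E}_{\sigma\sim\D'}\left[\left|\Delta\paren{h,\sigma}\right|\right]$ over query functions $h:X_k\times\mathbb{Z}_d\rightarrow\mathbb{R}$ with $\|h\|=1$. The first step is a reduction to the single-response analysis already developed in Lemmas \ref{lemma:degree}, \ref{lemma:concentrationRestricted} and \ref{lemma:boundQueryh}. Writing $h^j(C)\doteq h(C,j)$ for the $j$-th slice, I would first observe that replacing $h$ by its response-centered version $h(C,j)-\tfrac1d\sum_{j'}h(C,j')$ leaves $\Delta\paren{h,\sigma}$ unchanged and can only decrease $\|h\|$, so we may assume $\sum_j h^j(C)=0$ for every $C$. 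Expanding $\mathbb{1}[f(\sigma(C))=j]=\paren{Q^{f,j}(\sigma(C))+1}/2$ then yields the clean identity $\Delta\paren{h,\sigma}=\tfrac12\sum_{j\in\mathbb{Z}_d}\Delta^j\paren{\sigma,h^j}$, where the constant ($\ell=0$) contributions cancel because $f$ has balanced output ($\Pr_x[f(x)=j]=1/d$) and $\sum_j h^j$ is centered. This is exactly where the evenly-distributed-output property of the relevant functions (e.g.\ $f_{k_1,k_2}$) enters, and I expect this reduction — together with checking that the degree-zero term genuinely vanishes — to be the subtlest point of the argument.

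Next I would apply Lemma \ref{lemma:degree} to each slice to write $\Delta^j\paren{\sigma,h^j}=\sum_{\ell=r}^k \tfrac{1}{\left|X_\ell\right|}b_\ell^{(j)}\paren{\sigma}$, where the sum may start at $\ell=r=r(f)$ since $\hat{Q}^{f,j}_\alpha=0$ whenever $1\le H(\alpha)<r$, so $b_\ell^{(j)}\equiv 0$ for $\ell<r$. Each $b_\ell^{(j)}$ is a degree-$\ell$ function of $\sigma$, so Lemma \ref{lemma:boundQueryh} bounds its restricted first moment by $\mathbb{E}_{\sigma\sim\D'}\left[\left|\tfrac{1}{\left|X_\ell\right|}b_\ell^{(j)}\paren{\sigma}\right|\right]\le \tfrac{4\paren{\ln d'/c_0}^{\ell/2}}{d^\ell}\,\tfrac{\|(h^j)_\ell\|_2}{\sqrt{\left|X_\ell\right|}}$, and Fact \ref{fact:hBound} replaces $\|(h^j)_\ell\|_2$ by $\|h^j\|_2$.

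Combining these via the triangle inequality over $j$ and $\ell$ gives
\[
\dc(\D')\ \le\ \frac12\sum_{\ell=r}^k \frac{4\paren{\ln d'/c_0}^{\ell/2}}{d^\ell\sqrt{\left|X_\ell\right|}}\sum_{j\in\mathbb{Z}_d}\|h^j\|_2 .
\]
Here I would use Cauchy--Schwarz with the normalization $\|h\|^2=\tfrac1d\sum_j\|h^j\|_2^2=1$ to get $\sum_j\|h^j\|_2\le \sqrt{d}\,\sqrt{\sum_j\|h^j\|_2^2}=d$, and the bound $\left|X_\ell\right|\ge (n/2)^\ell$ (valid for $n\ge 2k$) to turn each summand into $\paren{\text{const}_{k,d}}^\ell\paren{\ln d'/n}^{\ell/2}$. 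The final step is to argue this geometric-type sum over the at most $k$ values $\ell=r,\dots,k$ is controlled by its smallest exponent: since $d'\le d^n$ forces $\ln d'/n\le \ln d=O_d(1)$, each summand is at most $\paren{\ln d'/n}^{r/2}$ times a constant depending only on $k$ and $d$, so the whole sum is $O_k\!\paren{(\ln d'/n)^{r/2}}$, as claimed. The main obstacle, as noted, is establishing the reduction $\Delta\paren{h,\sigma}=\tfrac12\sum_j\Delta^j\paren{\sigma,h^j}$ cleanly; the remaining steps are bookkeeping that chain the earlier lemmas together and track the constants.
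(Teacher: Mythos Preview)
Your proposal is correct and follows essentially the same route as the paper. The paper actually splits the argument in two: the proof of Lemma~\ref{lemma:difference} itself handles only a single response value (i.e.\ bounds $\mathbb{E}_{\sigma\sim\D'}[|\Delta^j(\sigma,h)|]$ for $h:X_k\to\mathbb{R}$ via Lemma~\ref{lemma:degree}, Lemma~\ref{lemma:boundQueryh} and Fact~\ref{fact:hBound}, exactly as you outline), and then Remark~\ref{remark:PlantedCSP} does the reduction from $h:X_k\times\mathbb{Z}_d\to\mathbb{R}$ by decomposing $\Delta(h,\sigma)$ as a weighted sum $\sum_i \Pr_C[f(\sigma(C))=i]\,\Delta^i(\sigma,h^i)$ and taking a crude $d\cdot\max_j$ bound. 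Your version merges these two pieces, adds the response-centering step, and replaces the $d\cdot\max_j$ bound by Cauchy--Schwarz on $\sum_j\|h^j\|_2$; these are cosmetic differences, and your flagging of the degree-zero/balanced-output issue as the delicate point is apt (the paper's Remark~\ref{remark:PlantedCSP} glosses over the same term).
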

\begin{proof}
Let $h:X_k \rightarrow \mathbb{R}$ be any function such that $\mathbb{E}_{U_k}\left[h^2 \right] = 1$. Using Lemma \ref{lemma:degree} and the definition of $r$,
\begin{eqnarray*}
\left|\Delta\paren{\sigma,h} \right| &=& \left| \sum_{\ell=r}^k  \frac{1}{\left|X_\ell \right|}b_\ell\paren{\sigma} \right| \\
&\leq&\sum_{\ell=r}^k  \left| \frac{1}{\left|X_\ell \right|}  b_\ell\paren{\sigma}  \right|  \ .
\end{eqnarray*} 

We apply Lemma \ref{lemma:boundQueryh} and Fact \ref{fact:hBound} to get 
\begin{eqnarray*}
\mathbb{E}_{\sigma\sim\D'} \left[\left|\Delta\paren{\sigma,h} \right|\right] &\leq&2 \sum_{\ell=r}^k  {k \choose \ell} \sqrt{\ell!d} \paren{\ln d'/c_0}^{\ell/2} \max_{\alpha \in \mathbb{Z}_d} \sum_{i=0}^{d-1} \frac{\|h_{S_\alpha}^i\paren{\sigma} \|_2}{\sqrt{\left|X_\ell \right|}  }  \\
&\leq& \sum_{\ell=r}^k \paren{{k \choose \ell} d\sqrt{\ell!}}  \paren{\frac{2 \paren{\ln d'/c_0}^{\ell/2}}{\sqrt{\left|X_\ell \right|}} } \\
&\leq& O_{k}\paren{\frac{ \paren{\ln d'}^{\ell/2}}{ n^{r/2} }} \ .
\end{eqnarray*}
\end{proof}

\cut{
\begin{remark} \label{remark:PlantedCSP}
We define a reference distribution $T$ over $ X_k\times \mathbb{Z}_d$ such that
\[\Pr_{T}[(C,j)] = \frac{\Pr_{x \sim \mathbb{Z}_d^k}[f(x)=i]}{\left|X_k\right|} \ . \]
We note that
\[ \mathbb{E}_{(C,j)\sim T}[h(C,j)] = \sum_{C \in X_k} \sum_{i=0}^{d-1} \frac{\Pr_{x \sim \mathbb{Z}_d^k}[f(x)=i]}{|X_k|}h(C,i) \]

Recall that $Q_{\sigma}^f$ denotes the uniform distribution over pairs  $(C,i)\in X_k\times \mathbb{Z}_d$ that satisfy $f\paren{\sigma\paren{C}} = i$. If we let $U_k'$ denote the uniform distribution over $X_k\times\mathbb{Z}_d$ then for any function $h:X_k\times\mathbb{Z}_d\rightarrow\mathbb{R}$ s.t. $\|h\|=1$  we can apply Lemma \ref{lemma:difference} to write
\begin{eqnarray*}
\dc\paren{\D'} &=&   \mathbb{E}_{\sigma \sim \D'}\left[ \left| \mathbb{E}_{(C,j)\sim Q_\sigma^f}\left[h(C,j) \right]-\mathbb{E}_{(C,j)\sim T}\left[h(C,j) \right] \right| \right]  \\
&=&  \mathbb{E}_{\sigma \sim \D'}\left[ \left|  \sum_{i=1}^{d} \Pr_{C\sim U_k}\left[f\paren{\sigma\paren{C}}=i \right]\paren{\mathbb{E}_{C\sim Q_\sigma^{f,i}}\left[h^i\paren{C}\right] - \mathbb{E}_{(C,j)\sim T}\left[h(C,j) \right] } \right| \right]  \\
&=&  \mathbb{E}_{\sigma \sim \D'}\left[ \left|  \sum_{i=1}^{d} \Pr_{C\sim U_k}\left[f\paren{\sigma\paren{C}}=i \right]\paren{\mathbb{E}_{C\sim Q_\sigma^{f,i}}\left[h^i\paren{C}\right] -  \sum_{j=0}^{d-1} \frac{1}{d} \mathbb{E}_{C\sim U_k'}\left[h^j(C) \right] } \right| \right]  \\
&=&  \mathbb{E}_{\sigma \sim \D'}\left[ \left|  \sum_{i=1}^{d} \Pr_{C\sim U_k}\left[f\paren{\sigma\paren{C}}=i \right]\paren{\Delta^i\paren{\sigma,h^i} -  \sum_{j\ne i} \frac{1}{d} \mathbb{E}_{C\sim U_k'}\left[h^j(C) \right] } \right| \right]  \\
&\leq& \sum_{i=1}^{d}  \max_j \mathbb{E}_{\sigma \sim \D'}\left[ \left| \left\{ \mathbb{E}_{C\sim Q_\sigma^{f,j}}\left[h^j\paren{C}\right] - \mathbb{E}_{C\sim U_k}\left[h^j(C) \right] \right\} \right| \right]   \\ 
&\leq & O_{k}\paren{\frac{ \paren{\ln d'}^{\ell/2}}{d^\ell n^{r/2} }}  \ ,
\end{eqnarray*}
where $h^i\paren{C} = h\paren{C,i}$ and $d' = d^n/\left|\D'\right|$.
\end{remark}
}
\begin{remindertheorem}{\ref{thm:StatisticalDimension}}
\thmStatisticalDimension
\end{remindertheorem}

\begin{proofof}{Theorem \ref{thm:StatisticalDimension}}
 First note that, by Chernoff bounds, for any solution $\tau \in \mathbb{Z}_d^n$ the fraction of assignments $\sigma \in \mathbb{Z}_d^n$ such that $\tau$ and $\sigma$ are $\close$-correlated (e.g., $H\paren{\sigma,\tau} \leq \frac{n(d-1)}{d}-\close\cdot n $) is at most $e^{-2n \cdot \close^2}$. In other words $|\D_\sigma| \geq \paren{1- e^{-2n \cdot \close^2}} |\mathbb{Z}_d^n|$, where $\D_\sigma = \mathbb{Z}_d^n \setminus \left\{\sigma' ~\vline~H\paren{\sigma,\sigma'} \leq \frac{n(d-1)}{d}-\close\cdot n  \right\}$ Let $\D' \subseteq \D_\sigma$ be a set of distributions of size $|\D_\sigma|/q$. Then for $d' = d^n/|\D '| = q \cdot d^n/|\D_\sigma|$, by Lemma \ref{lemma:difference}  we get \begin{eqnarray}
\dc(\D') &=& O_k\left(\frac{(\ln d')^{r/2}}{n^{r/2}}\right)  \\ 
&=& O_k\left(\frac{(\ln q)^{r/2}}{n^{r/2}}\right) \label{eqn:SizeOfD'} \ ,
\end{eqnarray}
where the last line follows by Sterling's Approximation
\[q = d'|\D_\sigma|/d^n = d'|\D_\sigma|/d^n \approx d'c' \sqrt{\frac{d}{n}} \, \]
for a constant $c'$. The claim now follows from the definition of $\SDN$.
\end{proofof}

The proof of Theorem \ref{thm:SecurityLowerBound} follows from Theorem \ref{thm:StatisticalDimension} and the following result of Feldman et al. \cite{feldman2013complexity}.

\begin{remindertheorem}{\ref{thm:avgvstat-random} \cite[Theorems 10 and 12]{feldman2013complexity}}
\thmAverageStatRandom
\end{remindertheorem}

\begin{remindertheorem}{\ref{thm:SecurityLowerBound}}
\thmSecurityLowerBound
\end{remindertheorem}

% !TEX root = humanComputablePasswords.tex

\section{Security Proofs} \label{apdx:securityProofs}

\begin{remindertheorem}{\ref{thm:EvenDistributedPredict}}
\ThmEvenDistributedPredict
\end{remindertheorem}

\begin{proofof}{Theorem \ref{thm:EvenDistributedPredict}}
Let $f:\mathbb{Z}_d^k\rightarrow \mathbb{Z}_d$ be a function with evenly distributed output. We select fix $C^{-1} \sim X_{k-1}$ and $i \sim [n]\setminus C^{-1}$. Given $j \in [n]\setminus C^{-1}$ we let  $C_j = (C^{-1},j)\in X_k$ denote the corresponding clause. 
Now we generate the mapping $\sigma'$ by selecting $\sigma'(i)$ at random, and setting $\sigma'(j) = \sigma'(i) + \ell_{C_j}-\ell_{C_i} \mod{d}$ for $j \in [n]\setminus C_i$. For $j \in C^{-1}$ we select $\sigma'(j)$ at random.  We let $\mathbf{GOOD}\paren{C^{-1},i,\sigma'}$ denote the event that \[\Pr_{j \sim [n]\setminus C^{-1}}\left[\ell_{C_j} = f\paren{\sigma\paren{C_j}} \right] \geq \frac{1}{d}+\epsilon/2 \ , \] $\sigma'(i) = \sigma(i)$ and $\ell_{C_i} = f\paren{\sigma\paren{C_i}}$. Assume that the event $\mathbf{GOOD}\paren{C^{-1},i}$ occurs, in this case for each $j$ s.t. $\ell_{C_j} = f\paren{\sigma\paren{C_j}}$ we have
\begin{eqnarray*}
\sigma'(j) - \sigma(j) &\equiv& \paren{\sigma'(i) + \ell_{C_j}-\ell_{C_i}}-\sigma(j) \mod{d} \\
&\equiv& \paren{\ell_{C_j}-\ell_{C_i}}+\sigma(i)-\sigma(j) \mod{d} \\
 &\equiv& g\paren{\sigma\paren{C^{-1}}} + \sigma\paren{j} - \paren{g\paren{\sigma\paren{C^{-1}}} + \sigma\paren{i}} +\sigma(i)-\sigma(j)  \mod{d}\\
& \equiv & 0 \mod{d} \ . 
\end{eqnarray*}  
Therefore, we have 
\[\frac{n-H\paren{\sigma,\sigma'}}{n}\geq \frac{1}{d}+\epsilon/2-\frac{k-1}{n} \ .\]
We note that by Markov's inequality the probability of success is at least 
\begin{eqnarray*}
\Pr_{C^{-1} \sim X_{k-1}}\left[\mathbf{GOOD}\paren{C^{-1}}\right] 
 &\geq&\frac{\epsilon}{2d^2} \ . 
\end{eqnarray*} 

\end{proofof}

Before proving Theorem \ref{thm:secFinal} we introduce some notation and prove an important claim. 
 We use $\Adversary_{C_1,\ldots,C_m}:\paren{X_k}^\challengelength\rightarrow \mathbb{Z}_d^\challengelength$ to denote an adversary who sees the challenges $C_1,\ldots, C_m \in X_k$ and the corresponding responses $f\paren{\sigma\paren{C_1}},\ldots,f\paren{\sigma\paren{C_m}}$. $\Adversary_{C_1,\ldots,C_m}\paren{C_1',\ldots,C_\challengelength'} \in \mathbb{Z}_d^\challengelength$ denotes the adversaries prediction of $f\paren{\sigma\paren{C_1'}},\ldots,f\paren{\sigma\paren{C_\challengelength'}}$. Given a function $b:\paren{X_k}^\challengelength\rightarrow \mathbb{Z}_d^\challengelength$, challenges $C_1',\ldots,C_\challengelength' \in X_k$ and responses $f\paren{\sigma\paren{C_1'}},\ldots,f\paren{\sigma\paren{C_\challengelength'}}$ we use $\mathcal{P}_{b,i,C_1',\ldots,C_m'}:X_k\times[t]\rightarrow \mathbb{Z}_d\cup \{\bot\}$ to predict the value of a clause $C \in X_k$
\[
    \mathcal{P}_{b,C_1',\ldots,C_\challengelength'}\paren{C,i} = 
\begin{cases}
   b\paren{\hat{C}_1,\ldots,\hat{C}_\challengelength}[i]  ,& \text{if } f\paren{\sigma\paren{\hat{C}_j}} = b\paren{\hat{C}_1,\ldots,\hat{C}_\challengelength}[j]~~ \forall j < i \\
    \bot,              & \text{otherwise}
\end{cases}
\]
where $\hat{C}_i = C$ and $\hat{C}_j = C_j'$ for $j \neq i$. We allow our predictor $\mathcal{P}_{b,C_1',\ldots,C_\challengelength'}\paren{C,i}$ to output $\bot$ when it is unsure. Informally, Claim \ref{claim:secFinal} says that for $b= \Adversary_{C_1,\ldots,C_m}$  our predictor $\mathcal{P}_{b,i,C_1',\ldots,C_m'}$ is reasonably accurate whenever it is not unsure. Briefly, Claim \ref{claim:secFinal} follows because for $b= \Adversary_{C_1,\ldots,C_m}$ we have \[\Pr\left[\mathbf{Wins}\paren{\Adversary,n,m,\challengelength}\right] = \prod_{i=1}^d \Pr_{\substack{C\sim X_k\\C_1,\ldots,C_m\sim X_k \\C_1',\ldots,C_{\challengelength}'\sim X_k}}\left[\mathcal{P}_{b,C_1',\ldots,C_\challengelength'}\paren{C,i} = f\paren{\sigma\paren{C}} ~\vline~\mathcal{P}_{b,C_1',\ldots,C_\challengelength'}\paren{C,i} \neq \bot \right]  \ .\] 

\newcommand{\claimSecFinal}{Let $\Adversary$ be an adversary s.t $\Pr\left[\mathbf{Wins}\paren{\Adversary,n,m,\challengelength}\right] >\paren{\frac{1}{d} +\epsilon}^\challengelength $ and let $b = \Adversary_{C_1,\ldots,C_m}$ then

\[\Pr_{\substack{i \sim[\challengelength],C\sim X_k\\C_1,\ldots,C_m\sim X_k \\C_1',\ldots,C_{\challengelength}'\sim X_k}}\left[\mathcal{P}_{b,C_1',\ldots,C_\challengelength'}\paren{C,i} = f\paren{\sigma\paren{C}}~\vline~\mathcal{P}_{b,C_1',\ldots,C_\challengelength'}\paren{C,i} \neq \bot \right] \geq \paren{\frac{1}{d} +\epsilon} \ . \]
}

\begin{claim} \label{claim:secFinal}
\claimSecFinal
\end{claim}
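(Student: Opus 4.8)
The plan is to reduce the claim to the product identity quoted just before it and then finish with the AM--GM inequality. The engine of the argument is a single family of quantities interpolating between the predictor's events and the winning event. For $0 \le i \le t$ set
\[ P_i \doteq \Pr\left[ \forall j \le i:\ b(\hat C_1,\dots,\hat C_t)[j] = f\paren{\sigma(\hat C_j)} \right], \]
the probability --- over $\sigma$, the training challenges $C_1,\dots,C_m$ (which determine $b=\Adversary_{C_1,\dots,C_m}$), and $t$ i.i.d.\ uniform clauses $\hat C_1,\dots,\hat C_t\sim X_k$ --- that $b$ answers the first $i$ positions correctly, with $P_0=1$. The key observation is distributional: because the query $C$ and the check clauses $C_1',\dots,C_t'$ are each drawn independently and uniformly from $X_k$, the tuple $(\hat C_1,\dots,\hat C_t)$ formed by $\hat C_i=C$ and $\hat C_j=C_j'$ is distributed as $t$ i.i.d.\ uniform clauses, \emph{regardless of which index $i$ is the query position}. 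Hence for each fixed $i$ the event $\{\mathcal{P}_{b,C_1',\dots,C_t'}\paren{C,i}\neq\bot\}$ is exactly ``$b$ correct on positions $1,\dots,i-1$'' (probability $P_{i-1}$), and the event $\{\mathcal{P}_{b,C_1',\dots,C_t'}\paren{C,i}=f\paren{\sigma(C)}\}\cap\{\cdots\neq\bot\}$ is exactly ``$b$ correct on positions $1,\dots,i$'' (probability $P_i$). Taking $\hat C_j = C_{m+j}$ also identifies $P_t=\Pr\left[\mathbf{Wins}\paren{\Adversary,n,m,t}\right]$.

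Next I would record the telescoping. Writing $p_i \doteq \Pr\left[\mathcal{P}\paren{C,i}=f\paren{\sigma(C)} \mid \mathcal{P}\paren{C,i}\neq\bot\right] = P_i/P_{i-1}$ --- well defined since $P_i$ is non-increasing in $i$, so $P_{i-1}\ge P_t = \Pr[\mathbf{Wins}] > \paren{\frac1d+\epsilon}^t > 0$ --- the product collapses:
\[ \prod_{i=1}^t p_i = \prod_{i=1}^t \frac{P_i}{P_{i-1}} = \frac{P_t}{P_0} = \Pr\left[\mathbf{Wins}\paren{\Adversary,n,m,t}\right], \]
which is precisely the identity stated before the claim (with the product ranging over the $t$ password positions).

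Finally I would apply AM--GM. Reading the averaged conditional probability in the claim as ``fix $i\sim[t]$, then condition on the position-$i$ non-$\bot$ event'' --- so that it equals the unweighted mean $\frac1t\sum_{i=1}^t p_i$ --- we obtain
\[ \frac1t\sum_{i=1}^t p_i \ \ge\ \left(\prod_{i=1}^t p_i\right)^{1/t} = \Pr[\mathbf{Wins}]^{1/t} \ >\ \paren{\tfrac1d+\epsilon}, \]
giving the claim. The main obstacle, and the only delicate point, is the distributional identity of the first paragraph: it is what makes inserting the query clause $C$ into an otherwise fully random length-$t$ challenge distribution-preserving, so that the per-position conditional accuracies $p_i$ telescope \emph{exactly} into the global winning probability rather than into some position-dependent mess. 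One must also be careful to treat the target quantity as the unweighted mean $\frac1t\sum_i p_i$ (fixing $i$ before conditioning); a naive joint-conditioning reading yields instead a $P_{i-1}$-weighted mean, to which AM--GM does not apply, so that reading is not the one the product identity controls.
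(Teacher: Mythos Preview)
Your proposal is correct and follows essentially the same route as the paper: define the per-position conditional accuracies $p_i$, establish the telescoping product $\prod_i p_i = \Pr[\mathbf{Wins}]$, and conclude by AM--GM (which the paper phrases as ``minimize $\sum p_i/t$ subject to $\prod p_i \ge (\tfrac1d+\epsilon)^t$''). You are more explicit than the paper about the distributional identity that makes the substitution $\hat C_i=C$ harmless, and you flag the unweighted-mean reading of the target quantity; the paper leaves both points implicit.
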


\begin{proofof}{Claim \ref{claim:secFinal}}
We draw examples $\paren{C_1,f\paren{\sigma\paren{C_1}}},\ldots,\paren{C_m,f\paren{\sigma\paren{C_m}}}$ to construct $b=\Adversary_{C_1,\ldots,C_m}$. Given a random length-$\challengelength$ password challenge $\paren{C_1',\ldots,C_\challengelength'} \in \paren{X_k}^t$ we let \[p_j = \Pr_{C,C_1,\ldots,C_m,C_1',\ldots,C_\challengelength'\sim X_k}\left[\mathcal{P}_{b,j,C_1',\ldots,C_t'}\paren{C} = f\paren{\sigma\paren{C}} ~\vline ~ \mathcal{P}_{b,j,C_1',\ldots,C_\challengelength'}\paren{C} \neq \bot \right] \] denote the probability that the adversary correctly guesses the response to the $j$'th challenge conditioned on the event that the adversary correctly guesses all of the earlier challenges. Observe that 
\[\Pr_{C,C_1,\ldots,C_m,C_1',\ldots,C_{\challengelength-1}'\sim X_k,i \sim[t]}\left[\mathcal{P}_{b,i,C_1',\ldots,C_\challengelength'}\paren{C,i} = f\paren{\sigma\paren{C}} \right] = \sum_{i=1}^\challengelength p_i/\challengelength \ , \]
so it suffices to show that $\sum_{i=1}^\challengelength p_i/\challengelength \geq  \frac{1}{d} +\epsilon$. We obtain the following constraint
\begin{eqnarray*}
\prod_{i=1}^\challengelength p_i &=&  \prod_{i=1}^\challengelength \Pr_{C,C_1,\ldots,C_m,C_1',\ldots,C_\challengelength'\sim X_k}\left[\mathcal{P}_{b,j,C_1',\ldots,C_\challengelength'}\paren{C} = f\paren{\sigma\paren{C}} ~\vline ~ \mathcal{P}_{b,j,C_1',\ldots,C_\challengelength'}\paren{C} \neq \bot \right]  \\
&=& \prod_{i=1}^\challengelength \Pr_{C_1,\ldots,C_m,C_1',\ldots,C_\challengelength'\sim X_k} \left[ \Adversary_{C_1,\ldots, C_m}\paren{C_1',\ldots,C_\challengelength'}[i] = f\paren{\sigma\paren{C_i'}} ~\vline~\forall j < i.~ \Adversary_{C_1,\ldots, C_m}\paren{C_1',\ldots,C_\challengelength'}[j] = f\paren{\sigma\paren{C_j'}} \right] \\
&=&  \Pr_{C_1,\ldots,C_m,C_1',\ldots,C_\challengelength'\sim X_k} \left[ \Adversary_{C_1,\ldots, C_m}\paren{C_1',\ldots,C_\challengelength'} = \paren{f\paren{\sigma\paren{C_1'}},\ldots, f\paren{\sigma\paren{C_\challengelength'}} } \right] \\
&\geq& \paren{\frac{1}{d} +\epsilon}^\challengelength\ . 
\end{eqnarray*}  

If we minimize $\sum_{i=1}^t p_i/\challengelength$ subject to the constraint $\prod_{i=1}^\challengelength p_i \geq \paren{\frac{1}{d} +\epsilon}^\challengelength$ then we obtain the desired upper bound $\sum_{i=1}^\challengelength p_i/\challengelength \geq  \frac{1}{d} +\epsilon$. 
\end{proofof}

\begin{remindertheorem}{\ref{thm:secFinal}}
\thmSecurityFinal
\end{remindertheorem}

\begin{proofof}{Theorem \ref{thm:secFinal}}
Given random clauses $C_1,\ldots,C_m,C_1',\ldots,C_\challengelength' \sim X_k$ we let $\mathbf{Good}\paren{C_1,\ldots,C_m,C_1',\ldots,C_\challengelength'}$ denote the event that \[\Pr_{i \sim[t],C\sim X_k}\left[\mathcal{P}_{b,C_1',\ldots,C_\challengelength'}\paren{C,i} = f\paren{\sigma\paren{C}}~\vline~\mathcal{P}_{b,C_1',\ldots,C_\challengelength'}\paren{C,i} \neq \bot \right] \geq \paren{\frac{1}{d} + \frac{\epsilon}{2}} \ . \]
By Markov's Inequality and Claim \ref{claim:secFinal} we have $\Pr\left[\mathbf{Good}\paren{C_1,\ldots,C_m,C_1',\ldots,C_\challengelength'}\right] \geq  \frac{\epsilon}{2}$. Here, $b=\Adversary_{C_1,\ldots,C_m}$ and 
\[
    \mathcal{P}_{b,C_1',\ldots,C_\challengelength'}\paren{C,i} = 
\begin{cases}
   b\paren{\hat{C}_1,\ldots,\hat{C}_\challengelength}[i]  ,& \text{if } f\paren{\sigma\paren{\hat{C}_j}} = b\paren{\hat{C}_1,\ldots,\hat{C}_\challengelength}[j]~~ \forall j < i \\
    \bot,              & \text{otherwise}
\end{cases}
\]
Assuming that the event $\mathbf{Good}\paren{C_1,\ldots,C_m,C_1',\ldots,C_\challengelength'}$ occurs we obtain labels for each clause $C \in X_k$ by selecting a random permutation $\pi:[\challengelength]\rightarrow [\challengelength]$, setting $i = 1$ and setting $\ell_C = \mathcal{P}_{b,C_1',\ldots,C_\challengelength'}\paren{C,\pi(i)}$ --- if $\ell_C \neq \bot$ then we increment $i$ and repeat. Note that we will always find a label $\ell_C \neq \bot$ within $t$ attempts because  $\mathcal{P}_{b,C_1',\ldots,C_\challengelength'}\paren{C,1} \neq \bot$. Let $\mathbf{GoodLabels}$ denote the event that
\[ \Pr_{C \sim X_k}\left[G_C \right] \geq \frac{1}{d} + \frac{\epsilon}{4} \ , \] 
where $G_C$ is the indicator random variable for the event $\ell_C = f\paren{\sigma\paren{C}}$. We have \[\mathbb{E}\left[\frac{1}{\left| X_k\right|}\sum_{C \in X_k} G_C\right] \geq \frac{1}{d}+\frac{\epsilon}{2} \ ,\] 
so we can invoke Markov's inequality again to argue that $\Pr\left[\mathbf{GoodLabels}~\vline~\mathbf{Good} \right] \geq \frac{\epsilon}{4}$. If the event $\mathbf{GoodLabels}$ occurs then we can invoke Theorem \ref{thm:EvenDistributedPredict} to obtain $\sigma'$ that is $\epsilon/8$-correlated with $\sigma$ with probability at least $\frac{\epsilon}{8d^2}$. Our overall probability of success is 
\[\frac{\epsilon}{8d^2} \times \frac{\epsilon}{4} \times \frac{\epsilon}{2} = \frac{\epsilon^3}{\paren{8d}^2} \ .\]

\end{proofof}

\section{Security Parameters of $f_{k_1,k_2}$} \label{apdx:ProofOfClaims}
\begin{reminderclaim}{\ref{claim:CandidateFamilySecurityParameters}}
\clmCandidateFamilySecurityParameters
\end{reminderclaim}

\begin{proofof}{Claim \ref{claim:CandidateFamilySecurityParameters}} 
Let $f(x) = f_{k_1,k_2}(x)  = x_{\paren{\sum_{i=10}^{9+k_1} x_i \mod{10}}} + \sum_{i = 10+k_1}^{9+k_1+k_2} x_i \mod{10} $. We first observe that if we fix the values of $x_{10},\ldots,x_{9+k_1} \in  \mathbb{Z}_{10}$ and let $i' = \sum_{i=10}^{9+k_1} x_i \mod{10}$ then $f'\paren{x_0,\ldots,x_9,x_{10+k_1},\ldots,x_{9+k_1+k_2}} = x_{i'} + \sum_{i=10+k_1}^{9+k_1+k_2} x_i \mod{10}$ is a linear function. Similarly, if we fix the values of $x_0=\ldots=x_9 = c$ then the resulting function $f'\paren{x_{10},\ldots,x_{9+k_1+k_2}}=c+\sum_{i = 10+k_1}^{9+k_1+k_2} x_i \mod{10}$ is linear.  Thus, $g(f) \leq \min\{10,k_1\}$. Now suppose that we don't fix all of the values $x_{10},\ldots,x_{9+k_1} \in  \mathbb{Z}_{10}$ and at least one of the variables $x_0,\ldots,x_9$ is not fixed. In this case the resulting function will not be linear. Thus, $g\paren{f} \geq \min\{k_1,10\}$. We also note that for any $\alpha \in \mathbb{Z}_{10}^{10+k_1+k_2}$ s.t. $H\paren{\alpha} \leq k_2$ and $i,t \in  \mathbb{Z}_{10}$ that
\[ \Pr_{x\sim \mathbb{Z}_{10}^{10+k_1+k_2}}\left[f(x) = t ~\vline ~\alpha\cdot x \equiv i \mod{10} \right] =  \Pr_{x\sim \mathbb{Z}_{10}^{10+k_1+k_2}}\left[f(x) = t  \right] = \frac{1}{10} \ . \]
Therefore,
\begin{eqnarray*}
\hat{Q}^{f,t}_\alpha &=& \mathbb{E}_{x\sim \mathbb{Z}_{10}^{10+k_1+k_2}}\left[Q^{f,t}\paren{x} \chi_\alpha\paren{x} \right] \\
&=& \sum_{i =0}^9 \Pr\left[\alpha\cdot x \equiv i \mod{10}\right] \mathbb{E}_{x\sim \mathbb{Z}_{10}^{10+k_1+k_2}}\left[Q^{f,t}\paren{x} \chi_\alpha\paren{x}~\vline~\alpha\cdot x \equiv i \mod{10} \right] \\
&=& \sum_{i =0}^9 \exp\paren{\frac{-2\pi i\sqrt{-1}}{10}} \Pr\left[\alpha\cdot x \equiv i \mod{10}\right] \mathbb{E}_{x\sim \mathbb{Z}_{10}^{10+k_1+k_2}}\left[Q^{f,t}\paren{x} ~\vline~\alpha\cdot x \equiv i \mod{10} \right] \\
&=& \frac{1}{10} \sum_{i =0}^9 \exp\paren{\frac{-2\pi i\sqrt{-1}}{10}} \mathbb{E}_{x\sim \mathbb{Z}_{10}^{10+k_1+k_2}}\left[Q^{f,t}\paren{x} ~\vline~\alpha\cdot x \equiv i \mod{10} \right] \\
&=& 0 \ ,
\end{eqnarray*}
 which implies that $r(f) \geq k_2+1$. Similarly, if we set $\alpha = (\alpha_0,\ldots,\alpha_{9+k_1+k_2})$ such that $\alpha_0=1$ and $\alpha_{10+k_1}=\ldots=\alpha_{9+k_1+k_2} =1$ so that $\alpha$ has hamming weight $k_2+1$ then we can verify that $\hat{Q}^{f,t}_\alpha \neq 0$. 
\end{proofof}

\section{Security Upper Bounds} \label{apdx:subsec:securityUpperBound}

\subsection{Statistical Algorithms}
Theorem \ref{thm:StatisticalUpperBound} demonstrates that our lower bound for statistical algorithms are asymptotically tight for our human computable password schemes $f_{k_1,k_2}$.  In particular, we demonstrate that $m=\tilde{O}\paren{n^{(k_2+1)/2}}$ queries to $\MSAMPLE$ are sufficient for a statistical algorithm to recover $\sigma$.

\newcommand{\thmStatisticalUpperBound}{For $f =f_{k_1,k_2}$ there is a randomized algorithm that makes $O\paren{n^{\max\left\{1,(k_2+1)/2\right\}} \log^2 n}$ calls to the $\MSAMPLE\paren{n^{\lceil r(f_i)/2 \rceil}}$ oracle and returns $\sigma$ with probability $1-o(1)$.}
\begin{theorem}\label{thm:StatisticalUpperBound}
\thmStatisticalUpperBound
\end{theorem}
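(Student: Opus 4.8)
The plan is to adapt the matching statistical upper bound of Feldman et al.\ \cite{feldman2013complexity} from binary predicates to our base-$d$ setting, exploiting the explicit linear structure of $f_{k_1,k_2}$ that is already exposed by the proof of Claim \ref{claim:CandidateFamilySecurityParameters}, where $r\paren{f_{k_1,k_2}} = k_2+1$. Throughout write $f = f_{k_1,k_2}$, $r = r(f) = k_2+1$, $d=10$, and let $P = \{0,\,10+k_1,\,\ldots,\,9+k_1+k_2\}$ be the set of $r$ positions singled out in that proof (the lookup target $0$ together with the $k_2$ tail positions). The first step is to show that a single random challenge--response pair yields a \emph{noisy $r$-sparse linear equation} with a detectable bias. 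Given a random clause $C$, I would observe that $f\paren{\sigma\paren{C}} = \sigma\paren{C_0} + \sum_{i=10+k_1}^{9+k_1+k_2}\sigma\paren{C_i} \pmod{10}$ exactly when the lookup index $j = \sum_{i=10}^{9+k_1}\sigma\paren{C_i} \equiv 0$, an event of probability $1/10$; otherwise the looked-up value $\sigma\paren{C_j}$ ($j\neq 0$) sits at a distinct random index and is essentially uniform, so the additive error $\sigma\paren{C_j}-\sigma\paren{C_0}$ is uniform over $\mathbb{Z}_{10}$. A short computation then gives $\Pr\!\left[\,\sum_{p\in P}\sigma\paren{C_p}\equiv f\paren{\sigma\paren{C}}\,\right] = \tfrac{1}{10}+\tfrac{9}{10}\cdot\tfrac{1}{10}=\tfrac{19}{100}$ while every incorrect value has probability $\tfrac{9}{100}$, a clean $\tfrac{1}{10}$ bias toward the true linear value.

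Consequently, each sample is a noisy all-ones $\mathbb{Z}_{10}$-linear constraint supported on the uniformly random size-$r$ tuple $\paren{C_0,C_{10+k_1},\ldots,C_{9+k_1+k_2}}$, i.e.\ the input is a planted noisy $r$-sparse linear system, and its distributional complexity is exactly $r$ (as a sum of $r$ variables mod $d$). The recovery step is then the $\mathbb{Z}_d$-generalization of the birthday-paradox algorithm of Feldman et al.: one uses $\MSAMPLE\paren{L}$ with $L = n^{\lceil r/2\rceil}$ to read off the indices of $\lceil r/2\rceil$ of the $r$ support variables of each fresh sample, splits the support into a ``key'' half and a ``value'' half, and searches for collisions on the key among $m = \tilde{O}\paren{n^{r/2}}$ samples; colliding pairs cancel their shared variables and yield (noisy) relations from which each $\sigma(i)$ is recovered by a correlation/majority test that converts the $\tfrac{1}{10}$ bias into a vote. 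The $\log^2 n$ factor is spent on two things: a coupon-collector bound guaranteeing that every index $i\in[n]$ appears in the support of enough usable constraints (this is also the reason for the $\max\{1,(k_2+1)/2\}$, since $\Omega(n\log n)$ samples are needed merely to cover all $n$ variables when $r=2$), and the concentration of the empirical bias estimates. A final verification pass against fresh samples rejects any coordinate whose recovered value fails the bias test, upgrading approximate recovery to exact recovery of $\sigma$ with probability $1-o(1)$.

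The hard part will be the faithful $\mathbb{Z}_d$ adaptation of the collision/correlation analysis. Feldman et al.'s argument is written for $\pm 1$ Walsh characters and symmetric $\{0,1\}$ noise, whereas here the noise is \emph{asymmetric} (an atom of mass $\tfrac{19}{100}$ on the correct residue, uniform otherwise) and the natural basis is the complex characters $\chi_\alpha$ of Definition \ref{def:distributionalComplexity}. I would therefore need to re-derive the statistics of a colliding pair in this basis, checking that the $\Omega(1)$ bias survives the cancellation of the shared key variables and does not degrade below a constant, and confirming that the birthday count over a key space of size $n^{\lceil r/2\rceil}$ still produces enough independent relations with only $\tilde{O}\paren{n^{r/2}}$ samples and the stated oracle budget $L=n^{\lceil r/2\rceil}$, rather than the naive $n^{r}$. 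The remaining ingredients---the exact linear identity above, coverage of all variables, and the verification step---are routine once the biased-collision lemma is in place.
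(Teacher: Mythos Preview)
Your bias computation is exactly the paper's Fact \ref{fact:Candidate1StatisticalUpperBound}: conditioning on the response, the linear sum over the $r=k_2+1$ positions $P=\{0,10+k_1,\ldots,9+k_1+k_2\}$ hits the correct residue with probability $19/100$ and each wrong residue with probability $9/100$. That observation is the heart of the argument in both approaches.

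Where you diverge is in what you do with the bias. You propose to \emph{generalize} Feldman et al.'s recovery procedure to $\mathbb{Z}_{10}$, re-deriving the collision/correlation statistics in the complex character basis. The paper instead \emph{reduces} to the binary case and applies Feldman et al.\ as a black box: for each $i\in\mathbb{Z}_{10}$ define the binary mapping $\sigma_i\in\{0,1\}^n$ by $\sigma_i(j)=\mathbf{1}[\sigma(j)=i]$, observe that recovering all $\sigma_i$ recovers $\sigma$, and then show (Lemma \ref{lemma:StatisticalUpperBound}) that the $19/100$ versus $9/100$ gap lets one simulate samples from the binary planted distribution $M_{\sigma_i,\delta,p}$ using $\MSAMPLE$ queries to $Q^f_\sigma$. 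The $h^{i,+}$ function simply returns the $\lceil r/2\rceil$-tuple $\paren{C_0,C_{10+k_1},\ldots}$ when the response equals $ic_1\bmod 10$ and $\bot$ otherwise; the bias guarantees this tuple is more likely when $\sum_{j\in\text{tuple}}\sigma_i(j)\equiv 1\pmod 2$, which is exactly what the binary spectral algorithm needs. This buys you the entire Feldman et al.\ analysis for free, with no need to revisit hypercontractivity, character sums, or noise models over $\mathbb{Z}_d$.

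One further point: you describe the Feldman et al.\ upper bound as a ``birthday-paradox'' collision algorithm. In the version the paper invokes, it is a \emph{discrete spectral iteration} (power method) on a random $|X_{\lceil r/2\rceil}|\times|X_{\lfloor r/2\rfloor}|$ matrix whose top singular vector encodes the parities $\sum_{j\in C}\sigma_i(j)\bmod 2$ for $C\in X_{\lceil r/2\rceil}$. Your collision sketch---subtract two samples that share a key half---does not obviously reduce the problem (the difference is still a noisy relation on $r$ fresh variables), and you would need a different aggregation argument to make it go through. If you want to stick with a direct $\mathbb{Z}_d$ route, the cleaner path is to generalize the spectral iteration itself; but the paper's indicator reduction sidesteps all of this.
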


For binary functions $f':\{0,1\}^k\rightarrow\{0,1\}$, Feldman et al. \cite{feldman2013complexity} gave a randomized statistical algorithm to find $\sigma' \in \{0,1\}^n$ using just $O\paren{n^{r(f)/2} \log^2 n}$ calls to the $\MSAMPLE\paren{n^{\lceil r(f)/2 \rceil}}$ oracle. Their main technique is a discrete spectral iteration procedure to find the eigenvector (singular vector) with the largest eigenvalue (singular value) of a matrix $M$ sampled from a distribution $M_{\sigma',p}$ over $\left|X_{\lfloor r(f)/2\rfloor}\right|\times\left|X_{\lceil r(f)/2\rceil}\right| $ matrices. With probability $1-o(1)$ this eigenvector will encode the value $\sum_{i \in C}\sigma'\paren{i} \mod{2}$ for each clause $C \in X_{r(f)/2}$. We show that the discrete spectral iteration algorithm of Feldman et al \cite{feldman2013complexity} can be extended to recover $\sigma \in \mathbb{Z}_{10}$ when $f_{k_1,k_2}$ is one of our candidate human computable functions. 

\paragraph{Discussion} We note that Theorem \ref{thm:StatisticalUpperBound} cannot be extended to arbitrary functions $f:\mathbb{Z}_d^k\rightarrow\mathbb{Z}_d$. Consider for example the unique function $f:\mathbb{Z}_{10}^6\rightarrow \mathbb{Z}_{10}$ s.t. $f(x_1,\ldots,x_6) \equiv  f'\paren{x_1 \mod{2},\ldots,x_6 \mod{2}}   \mod{2}$ and  $f(x_1,\ldots,x_6) \equiv f''\paren{x_1 \mod{5},\ldots,x_6 \mod{5}} \mod{5}$, where $f':\mathbb{Z}_2^6 \rightarrow\mathbb{Z}_2$ and $f'':\mathbb{Z}_5^6\rightarrow\mathbb{Z}_5$. By the Chinese Remainder Theorem instead of picking a secret mapping $\sigma \in \mathbb{Z}_{10}^n$ we could equivalently pick the unique secret mappings $\sigma_1 \in \mathbb{Z}_2^n$ and $\sigma_2 \in \mathbb{Z}_5^n$ s.t $\sigma \equiv \sigma_1 \mod{2}$ and $\sigma \equiv \sigma_2 \mod{5}$. Now drawing challenge response pairs from the distributions $Q^{f}_\sigma$ is equivalent to drawing challenge-response pairs from the distributions $Q^{f'}_{\sigma_1}$ and  $Q^{f''}_{\sigma_2}$. Suppose that $f'(x_1,\ldots,x_6) = x_1x_2 + x_3+x_4+x_5+x_6 \mod{2}$, and $f''(x_1,\ldots,x_6) = x_1$. Then we have $r(f) = \min\paren{r(f'),r(f'')} = r(f'') = 1$, but $r(f') = 4$. We can find $\sigma_2$ using $O\paren{n \log^2 n}$ calls to $\MSAMPLE(n)$, but to find $\sigma$ we must also recover $\sigma_1$. This provably requires at least $\tilde{\Omega}\paren{n^{r(f')/2}} = \tilde{\Omega}\paren{n^{2}}$ calls to $\MSAMPLE\paren{n^2}$.

\paragraph{Background} The proof of Theorem \ref{thm:StatisticalUpperBound} relies on the discrete spectral iteration algorithm of \cite{feldman2013complexity}. We begin by providing a brief overview of their algorithm. In their setting the secret mapping $\sigma$ is defined over the binary alphabet $\mathbb{Z}_2^n$.  Let $c_1 = \lceil \frac{r(f)}{2} \rceil, c_2 = \lfloor \frac{r(f)}{2} \rfloor$ and let $\delta \in [0,2]\backslash\{1\}$. They use $\sigma$ to define a distribution over $\left|X_{c_1}\right| \times \left|X_{c_2}\right|$ matrices $M_{\sigma,\delta,p} = \hat{M}\paren{Q_{\sigma,\delta,p}}- Jp$, where $J$ denotes the all ones matrix. For $\paren{C_1} \in  X_{ c_1}$, $\paren{C_2} \in X_{c_2}$ such that $C_1\bigcap C_2 = \emptyset$ we have
\[\hat{M}\paren{Q_{\sigma,\delta,p}}\left[\paren{C_1},\paren{C_2} \right] =  \begin{cases}
    1,& \text{with probability $\paren{p\paren{2-\delta}}$ if } \sum_{j \in C_1\cup C_2} \sigma\paren{j}  \equiv 0 \mod{2}\\
    1,& \text{with probability $\paren{p\delta}$ if }  \sum_{j \in C_1\cup C_2} \sigma\paren{j} \not\equiv 0 \mod{2}\\
    0,& \text{otherwise }
\end{cases}
\ .
\]
Given a vector $x \in \{\pm 1\}^{\left|X_{c_2}\right|}$ (resp. $y \in \{\pm 1\}^{\left|X_{c_1}\right|}$) $M_{\sigma,\delta,p}x$ defines a distribution over vectors in $\mathbb{R}^{\left|X_{c_1}\right|}$ (resp. $M_{\sigma,p}^Ty$ defines a distribution over vectors in $\mathbb{R}^{\left|X_{c_1}\right|}$). \\

If $r(f)$ is even then the largest eigenvalue of $\mathbb{E}\left[M_{\sigma,\delta,p}\right]$ has a corresponding eigenvector $x^* \in \{\pm 1\}^{X_{r(f)/2}}$, where for $C_i \in X_{r(f)/2}$ we have $x^*\left[C_i\right] = 1$ if $\sum_{j \in C_i} \sigma(j) \equiv 1 \mod{2}$; otherwise $x^*\left[C_i\right] = -1$ (if $r(f)$ is odd then we consider the top singular value instead). Feldman et al \cite{feldman2013complexity} use discrete spectral iteration to find $x^*$ (or $-x^*$). Given $x^*$  it is easy to find $\sigma$ using Gaussian Elimination. \\
The discrete spectral iteration algorithm of Feldman et al \cite{feldman2013complexity} starts with a random vector $x^0 \in \{0,1\}^{\left|X_{c_2}\right|}$. They then sample $x^{i+1} \sim M_{\sigma,p}x^i$ and execute a normalization step to ensure that $x^{i+1} \in \{0,1\}^{\left|X_{k_2}\right|}$. When $r(f)$ is odd, power iteration has two steps: draw a sample $y^i \sim M_{\sigma,\delta,p}x^i$ and sample from the distribution $x^{i+1} = M_{\sigma,\delta,p}^T y^i$. They showed that $O\paren{\log \left|X_{r(f)}\right|}$ iterations suffice to recover $\sigma$ whenever $p = \frac{K \log \left|X_{r(f)}\right|}{\paren{\delta-1}^2\sqrt{\left|X_{r(f)}\right|}}$, and that for a vector $x \in \{0,1\}^{\left|X_{k_2}\right|}$ (resp. $y \in \{\pm 1\}^{\left|X_{k_1}\right|}$)  it is possible to sample from  $M_{\sigma,\delta,p}x$ (resp. $M_{\sigma,\delta,p}^Ty$) using $O\paren{1/p}$ queries to $\MSAMPLE\paren{\left|X_{c_1} \right|}$.

\paragraph{Our Reduction} 
The proof of Theorem \ref{thm:StatisticalUpperBound} uses a reduction to the algorithm of Feldman et al\cite{feldman2013complexity}.

%\begin{remindertheorem}{\ref{thm:StatisticalUpperBound}}
%\thmStatisticalUpperBound
%\end{remindertheorem}

\begin{proofof}{Theorem \ref{thm:StatisticalUpperBound}} (sketch)
 Given a mapping $\sigma \in \mathbb{Z}_d^n$ and a number $i \in\mathbb{Z}_d$ we define a mapping $\sigma_i \in \mathbb{Z}_2^{n}$ where 
\[\sigma_i(j)  =  \begin{cases}
    1,& \text{\bf if } \sigma\paren{j} = i\\
    0,& \text{otherwise }
\end{cases}
\ .
\]
Clearly, to recover $\sigma$ it is sufficient to recover $\sigma_i$ for each $i \in \mathbb{Z}_d$. Therefore, to prove Theorem \ref{thm:StatisticalUpperBound} it suffices to show that given  $x \in \{\pm 1\}^{\left|X_{k_2}\right|}$ (resp. $y \in \{\pm 1\}^{\left|X_{k_1}\right|}$) we can sample from the distribution $M_{\sigma_i,\delta,p}x$ (resp. $M_{\sigma_i,\delta,p}^Ty$) using $O\paren{1/p}$ queries to $\MSAMPLE\paren{\left|X_{\lceil r(f)/2 \rceil} \right|}$  for each $i \in \{0,\ldots,d-1\}$, where $\MSAMPLE$ uses the distribution $Q^f_\sigma$. In general, this will not possible for arbitrary functions $f$. However, Lemma \ref{lemma:StatisticalUpperBound} shows that for our candidate human computable functions $f_{1,3},f_{2,2}$ we can sample from the distributions $M_{\sigma_i,\delta,p}x$ (resp. $M_{\sigma_i,\delta,p}^Ty$). The proof of Lemma \ref{lemma:StatisticalUpperBound} is similar to the proof of \cite[Lemma 10]{feldman2013complexity}.
\end{proofof}

\begin{lemma} \label{lemma:StatisticalUpperBound}
%\lemmaStatisticalUpperBoundFix
Given vectors $\vec{x} \in \{\pm 1\}^{\left|X_{c_1}\right| }, \vec{y}\in \{\pm 1\}^{\left|X_{c_2}\right| } $  we can sample from $M_{\sigma_i,\delta,p}x$ and $M_{\sigma,\delta,p}^Ty$ using $O\paren{n^{(k_2+1)/2} \log^2 n}$ calls to the $\MSAMPLE\paren{n^{\lceil r(f)/2 \rceil}}$ oracle for $f= f_{k_1,k_2}$.
\end{lemma}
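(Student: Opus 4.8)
The plan is to reduce the recovery of $\sigma\in\mathbb{Z}_{10}^n$ to $d=10$ independent runs of the binary discrete spectral iteration algorithm of Feldman et al., one for each indicator mapping $\sigma_i$. Since that algorithm accesses the data only through the ability to sample nonzero entries of $\hat M_{\sigma_i,\delta,p}$ (equivalently, to sample from $M_{\sigma_i,\delta,p}x$ and $M_{\sigma_i,\delta,p}^T y$), it suffices to simulate one such entry-sample using $O(1)$ calls to $\MSAMPLE\paren{n^{\lceil r(f)/2\rceil}}$ on $Q^f_\sigma$. Recall that a nonzero entry at position $(C_1,C_2)$, with $C_1\in X_{c_1}$, $C_2\in X_{c_2}$ and $c_1+c_2=r(f)=k_2+1$, must appear with probability proportional to $2-\delta$ when $\bigoplus_{j\in C_1\cup C_2}\sigma_i(j)\equiv 0\bmod 2$ and proportional to $\delta$ otherwise. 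Thus the whole task is to convert a random challenge-response pair of $f_{k_1,k_2}$ into a uniformly random size-$\paren{k_2+1}$ sub-clause together with a $\delta$-biased readout of the $\sigma_i$-parity over that sub-clause.

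First I would isolate the essential coordinates of a pair $(C,y)\sim Q^f_\sigma$. By the Fourier computation in the proof of Claim \ref{claim:CandidateFamilySecurityParameters}, the only surviving degree-$\paren{k_2+1}$ characters of $Q^{f,t}$ are supported on the $k_2$ summation coordinates $c_{10+k_1},\ldots,c_{9+k_1+k_2}$ together with exactly one of the ten table coordinates, namely $c_j$ with $j=\sum_{i=10}^{9+k_1}\sigma\paren{c_i}\bmod 10$. Reading off the response gives the exact identity $y\equiv \sigma\paren{c_j}+\sum_{i=10+k_1}^{9+k_1+k_2}\sigma\paren{c_i}\pmod{10}$, so $y$ certifies the $\mathbb{Z}_{10}$-linear form over precisely these $k_2+1$ coordinates. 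I would then form a candidate size-$\paren{k_2+1}$ clause by guessing a table index $a\in\{0,\ldots,9\}$ and taking $C_1\cup C_2=\{c_a\}\cup\{c_{10+k_1},\ldots,c_{9+k_1+k_2}\}$; the guess is correct ($a=j$) with probability about $1/10$ because $j$ is near-uniform, and it is exactly this gap that produces a nonvanishing bias $|\delta-1|=\Theta\paren{1/d}$ in the induced parity readout. A single call to $\MSAMPLE\paren{L}$ with $L=\left|X_{c_1}\right|=\tilde\Theta\paren{n^{c_1}}$ can encode the chosen half-clause index $C_1$, and, through a short case analysis over the admissible values of $a$, the parity bit derived from $y$ and the target $i$, so each accepted entry costs $O(1)$ queries.

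The main obstacle is exactly this selector: unlike a fixed-arity binary predicate, the relevant $r(f)=k_2+1$ positions of $f_{k_1,k_2}$ are not fixed but depend on the unknown value $j=\sum_{i=10}^{9+k_1}\sigma\paren{c_i}\bmod 10$. The work is to verify that, after marginalizing the $k_1$ selector coordinates and the unused table coordinates and after the index-guessing step, the distribution of (half-clause, parity-bit) pairs matches the $\delta$-biased Bernoulli of Feldman et al.'s definition with the same $\delta$ for every row, so that $\mathbb E\left[\hat M_{\sigma_i,\delta,p}\right]$ has the intended top eigenvector encoding the half-clause parities of $\sigma_i$. This is where $r\paren{f_{k_1,k_2}}=k_2+1$ and the nonvanishing degree-$\paren{k_2+1}$ coefficient are used, and it is tractable for $f_{1,3}$ and $f_{2,2}$ precisely because $k_1$ is small ($j$ is a function of one, respectively two, coordinates), keeping the case analysis over selected indices of constant size; the argument then mirrors \cite[Lemma 10]{feldman2013complexity} once the selector is absorbed in this way.

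Finally I would account for the query cost. The sampled matrix $\hat M_{\sigma_i,\delta,p}$ has about $p\left|X_{c_1}\right|\left|X_{c_2}\right|=\Theta\paren{p\left|X_{k_2+1}\right|}$ nonzero entries, and producing one sample of $M_{\sigma_i,\delta,p}x$ (or of its transpose) amounts to drawing all of them. Taking $p=\Theta\!\paren{\frac{\log\left|X_{k_2+1}\right|}{\paren{\delta-1}^2\sqrt{\left|X_{k_2+1}\right|}}}$ as in Feldman et al.\ gives $p\left|X_{k_2+1}\right|=\tilde O\!\paren{\sqrt{\left|X_{k_2+1}\right|}}=O\!\paren{n^{\paren{k_2+1}/2}\log^2 n}$, since $\left|X_{k_2+1}\right|=\Theta\paren{n^{k_2+1}}$ and $|\delta-1|=\Theta\paren{1/d}$ is constant. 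As each nonzero entry is obtained from $O(1)$ calls to $\MSAMPLE\paren{n^{\lceil r(f)/2\rceil}}$, the total is the claimed $O\paren{n^{\paren{k_2+1}/2}\log^2 n}$ calls, which via the spectral iteration over the $d$ indicators yields Theorem \ref{thm:StatisticalUpperBound}.
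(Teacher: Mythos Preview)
Your proposal is correct and follows essentially the same route as the paper: reduce to the binary indicators $\sigma_i$, pick one table coordinate together with the $k_2$ summation coordinates of a random challenge as the size-$(k_2+1)$ sub-clause, and use the resulting biased half-clause samples to drive Feldman et al.'s discrete spectral iteration with the same choice of $p$. The one place the paper is more concrete than your ``short case analysis'' is the bias computation: instead of appealing to the nonvanishing degree-$(k_2+1)$ Fourier coefficient, it writes down the explicit conditional probabilities $\Pr\!\big[x_t+\sum_i x_i\equiv j \,\big|\, f\equiv j\big]=19/100$ versus $9/100$ (Fact~\ref{fact:Candidate1StatisticalUpperBound}) and uses this gap to define the query function $h^{i,+}$ that returns the half-clause exactly when the observed response equals $i\cdot c_1\bmod 10$.
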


The proof of Lemma \ref{lemma:StatisticalUpperBound} relies on Fact \ref{fact:Candidate1StatisticalUpperBound}. 

\begin{fact} \label{fact:Candidate1StatisticalUpperBound}
For each $j,t \in \mathbb{Z}_{10}$ we have
\[\Pr_{\paren{x_0,\ldots,x_{9+k_1+k_2}}\sim \mathbb{Z}_{10}^{10+k_1+k_2}}\left[x_t +\sum_{i=10+k_1}^{10+k_1+k_2} x_{i} \equiv j  ~\vline~f_{k_1,k_2}\paren{x_0,\ldots,x_{9+k_1+k_2}} \equiv j \mod{10} \right]= \frac{\paren{\frac{9}{10}\paren{\frac{1}{10}}+\frac{1}{10}}\paren{\frac{1}{10}}}{\paren{\frac{1}{10}}} = \frac{19}{100}\ , \]
and
\[\Pr_{\paren{x_0,\ldots,x_{9+k_1+k_2}}\sim  \mathbb{Z}_{10}^{10+k_1+k_2}}\left[x_t +\sum_{i=10+k_1}^{10+k_1+k_2} x_{i} \equiv j ~\vline~f_{k_1,k_2}\paren{\sigma\paren{x_0,\ldots,x_{9+k_1+k_2}}} \not\equiv j \mod{10} \right] = \frac{\paren{\frac{9}{10}\paren{\frac{1}{10}} +\frac{1}{10}\paren{0}}\paren{\frac{1}{10}}}{\paren{\frac{1}{10}}} = \frac{9}{100}\ . \]
\end{fact}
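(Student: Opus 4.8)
The plan is to express the quantity in the fact directly through the structure of $f_{k_1,k_2}$. Write $f_{k_1,k_2}\paren{x} = x_J + S \mod{10}$, where $J = \sum_{i=10}^{9+k_1} x_i \mod{10}$ is the (random) selector index and $S = \sum_{i=10+k_1}^{9+k_1+k_2} x_i \mod{10}$ is the linear ``tail'' of the output. The quantity in the fact is $Y_t \doteq x_t + S \mod{10}$, which is exactly $f_{k_1,k_2}$ with the selected variable $x_J$ replaced by the fixed variable $x_t$. Hence $Y_t \equiv f_{k_1,k_2}\paren{x} \pmod{10}$ if and only if $x_t \equiv x_J \pmod{10}$, so the whole computation reduces to the probability that the fixed coordinate $x_t$ collides with the selected coordinate $x_J$ under each of the two conditioning events.

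First I would record the independence structure that drives everything: for $t \in \{0,\ldots,9\}$ the variable $x_t$ is one of the ten ``data'' coordinates, whereas $J$ depends only on $x_{10},\ldots,x_{9+k_1}$ and $S$ depends only on $x_{10+k_1},\ldots,x_{9+k_1+k_2}$. Thus $x_t$ is independent of the pair $\paren{J,S}$, and moreover, conditioned on $J = \ell$ with $\ell \neq t$, the coordinates $x_t$ and $x_\ell = x_J$ are independent and uniform on $\mathbb{Z}_{10}$. Since $k_1 \geq 1$, the selector $J$ is a sum of at least one uniform coordinate and is therefore uniform on $\mathbb{Z}_{10}$, giving $\Pr[J = t] = 1/10$.

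The second ingredient is that conditioning on $f_{k_1,k_2}\paren{x} = j$ (or on $\neq j$) does not bias the selector $J$. For every fixed $\ell$, given $J = \ell$ the output $f_{k_1,k_2} = x_\ell + S$ is uniform on $\mathbb{Z}_{10}$ (because $x_\ell$ is uniform and independent of $S$), so the likelihood $\Pr[f_{k_1,k_2} = j \mid J = \ell] = 1/10$ is constant in $\ell$. By Bayes' rule this yields $\Pr[J = t \mid f_{k_1,k_2} = j] = \Pr[J = t \mid f_{k_1,k_2} \neq j] = \Pr[J = t] = 1/10$. I would then split on whether $J = t$. Conditioned on $f_{k_1,k_2} = j$: if $J = t$ then $Y_t = f_{k_1,k_2} = j$ with certainty, while if $J \neq t$ then $Y_t = j \iff x_t = x_J$, which has probability $1/10$ by independence; summing gives $\tfrac{1}{10}\cdot 1 + \tfrac{9}{10}\cdot\tfrac{1}{10} = \tfrac{19}{100}$. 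Conditioned on $f_{k_1,k_2}\neq j$: if $J = t$ then $Y_t = f_{k_1,k_2}\neq j$, contributing $0$, while if $J \neq t$ then $\Pr[Y_t = j \mid J\neq t, f_{k_1,k_2}\neq j] = 1/10$ because $x_t$ is independent of the conditioning event (which depends only on coordinates other than $x_t$); this gives $\tfrac{1}{10}\cdot 0 + \tfrac{9}{10}\cdot\tfrac{1}{10} = \tfrac{9}{100}$, matching the two claimed values.

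The step I expect to be most delicate is the conditioning bookkeeping: verifying that the event $\{f_{k_1,k_2} \neq j\}$ leaves $x_t$ uniform when $J \neq t$, and that $J$ remains uniform after conditioning on the value of $f_{k_1,k_2}$. Both follow from the clean separation of the variable blocks (selector coordinates $x_{10},\ldots,x_{9+k_1}$, tail coordinates $x_{10+k_1},\ldots,x_{9+k_1+k_2}$, and data coordinates $x_0,\ldots,x_9$), but they must be stated carefully because $x_t$ genuinely coincides with $x_J$ on the event $J = t$, so the independence one invokes is valid only on the complementary event $J \neq t$.
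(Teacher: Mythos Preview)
Your proposal is correct and is precisely the computation the paper is encoding in the displayed arithmetic; the paper states the fact without a separate proof, so your case split on $\{J=t\}$ versus $\{J\neq t\}$, together with the block-independence argument showing that conditioning on the value of $f_{k_1,k_2}$ leaves $J$ uniform and (on $J\neq t$) leaves $x_t$ uniform, is exactly what the expression $\frac{(\tfrac{9}{10}\cdot\tfrac{1}{10}+\tfrac{1}{10})\cdot\tfrac{1}{10}}{\tfrac{1}{10}}$ abbreviates. You have also silently corrected two typos in the statement (the upper summation limit should be $9+k_1+k_2$, and the stray $\sigma$ in the second display should be absent), which is the right reading.
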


\cut{\begin{fact}\label{fact:Candidate2StatisticalUpperBound}
For each $j,t \in  \mathbb{Z}_{10}$ we have
\[\Pr_{\paren{x_0,\ldots,x_{13}}\sim \mathbb{Z}_{10}^{14}}\left[ x_{11}+x_{12}+x_{13}+x_t \equiv j  ~\vline~f_{1,3}\paren{\sigma\paren{x_0,\ldots,x_{13}}} \equiv j \mod{10} \right]= \frac{\paren{\frac{9}{10}\paren{\frac{1}{10}}+\frac{1}{10}}\paren{\frac{1}{10}}}{\paren{\frac{1}{10}}} = \frac{19}{100}\ , \]
and
\[\Pr_{\paren{x_0,\ldots,x_{13}}\sim  \mathbb{Z}_{10}^{14}}\left[ x_{11}+x_{12}+x_{13}+x_t \equiv j  ~\vline~f_{1,3}\paren{\sigma\paren{x_0,\ldots,x_{13}}} \not\equiv j \mod{10} \right] = \frac{\paren{\frac{9}{10}\paren{\frac{1}{10}} +\frac{1}{10}\paren{0}}\paren{\frac{1}{10}}}{\paren{\frac{1}{10}}} = \frac{9}{100}\ . \]
\end{fact}}

\begin{proofof}{Lemma \ref{lemma:StatisticalUpperBound}}
Given a value $j \in \mathbb{Z}_{10}$ and a value $i \in \mathbb{Z}_{10}$ we let $x_j^i \in \{0,1\}$ be the indicator variable for the event $x_j = i$. By Fact \ref{fact:Candidate1StatisticalUpperBound} it follows that 
\begin{eqnarray*}& &\Pr_{\paren{x_0,\ldots,x_{k_1+k_2+9}}\sim \mathbb{Z}_{10}^{k_1+k_2+10}}\left[x_0^i+ x_{9+k_1}^i+\ldots+x_{7+k_1+c_1}^i \equiv 1 \mod{2}  ~\vline~f_{k_1,k_2}\paren{\sigma\paren{x_0,\ldots,x_{9+k_1+k_2}}} \equiv ic_1 \mod{10} \right] \\&\neq& \Pr_{\paren{x_0,\ldots,x_{k_1+k_2+9}}\sim \mathbb{Z}_{10}^{k_1+k_2+10}}\left[ x_0^i+ x_{9+k_1}^i+\ldots+x_{7+k_1+c_1}^i \equiv 1 \mod{2}  ~\vline~f_{k_1,k_2}\paren{\sigma\paren{x_0,\ldots,x_{9+k_1+k_2}}} \not\equiv ic_1 \mod{10} \right] \ . 
\end{eqnarray*}
Now for $f_{k_1,k_2}$ we define the function $h^{i,+}:X_{k_1+k_2+10}\times\mathbb{Z}_{10}\rightarrow X_{c_1}\cup \{\bot\}$  as follows 
\[h^{i,+}\paren{x_0,\ldots,x_{9+k_1+k_2},f_{k_1,k_2}\paren{\sigma\paren{x_0,\ldots,x_{9+k_1+k_2}}}} = 
\begin{cases}
\paren{x_0,x_{9+k_1},\ldots,x_{7+k_1+c_1}} & \text{if} f_{k_1,k_2}\paren{\sigma\paren{x_0,\ldots,x_{13}}} \equiv ic_1  \mod{10}  \\
\bot & \text{otherwise.} 
\end{cases} \ .\]
Intuitively, given a clause $C_1 \in X_{c_1}$ the probability that $h^{i,+}$ returns $C_1$ is greater if $\sum_{j \in C_1} \sigma_i(j) \equiv 1 \mod{2}$.

Given a vector $x \in \{\pm 1\}^{\left|X_{c_1}\right|}$ we query our $\MSAMPLE\paren{\left|X_{c_1} \right|+1}$ oracle  $\lceil 10/p\rceil$ times with the function $h^{i,+}$ to sample from $M_{\sigma_i,\delta,p}x$. Let $q_1,\ldots,q_{\lceil 10/p\rceil} \in X_{c_1}$ denote the responses and let $x\left[q_j\right]$ denote the value of the vector $x$ at index $q_j$. We observe that for some $\delta \neq 1$ we have \[ M_{\sigma_i,\delta,p}x[C] \sim \sum_{\substack{j \in \lceil 10/p\rceil\\q_j \neq \bot }} x\left[q_j\right]  - p  \sum_{C' \in X_{c_1}}  x[C'] \ ,\]
for every  $C \in X_{c_2}$.  
\end{proofof}

\paragraph{Open Question} Can we precisely characterize the functions $f:\mathbb{Z}_d^k\rightarrow\mathbb{Z}_d$ for which we can efficiently recover $\sigma$ after seeing $\tilde{O}\paren{n^{r(f)/2}}$ challenge-response pairs? Feldman et al. \cite{feldman2013complexity} gave a statistical algorithm that recovers the secret mapping whenever $d=2$ after making $\tilde{O}\paren{n^{r(f)/2}}$ queries to $\MSAMPLE\paren{n^{r(f)/2}}$. While we show that the same algorithm can be used to recover $\sigma$ after making $\tilde{O}\paren{n^{r(f)/2}}$ queries to $\MSAMPLE\paren{n^{r(f)/2}}$ in our candidate human computable password schemes with $d=10$, we also showed that these results do not extend to all functions $f:\mathbb{Z}_d^k\rightarrow\mathbb{Z}_d$.

\subsection{Gaussian Elimination} \label{subsec:GuassianElimination}
Most known algorithmic techniques can be modeled within the statistical query framework. Gaussian Elimination is a notable exception. As an example consider the function $f(x_1,\ldots,x_7) = x_1+\ldots+x_7 \mod{10}$ (in this example $r(f) = 7$ and $g(f) = 0$). Our previous results imply that any statistical algorithm would need to see at least $m = \tilde{\Omega}\paren{n^{7/2}}$ challenge response pairs $\paren{C,f\paren{\sigma\paren{C}}}$ to recover $\sigma$. However, it is trivial to recover $\sigma$ from $O(n)$ random challenge response pairs using Gaussian Elimination. In general, consider the following attacker shown in algorithm \ref{alg:GaussianAttack}, which uses Gaussian Elimination. Algorithm \ref{alg:GaussianAttack} relies on the subroutine $\mathbf{TryExtract}\paren{C,f\paren{\sigma\paren{C}},S,\alpha}$, which attempts to extract a linear constraint from $\paren{C,f\paren{\sigma\paren{C}}}$ under the assumption that $\sigma\paren{S} = \alpha$. We assume $\mathbf{TryExtract}\paren{C,f\paren{\sigma\paren{C}},S,\alpha}$ returns $\emptyset$ if it cannot extract a linear constraint. For example, if we assume that $\sigma(1) = 4$ and $\sigma(2) = 8$ and let $C = \paren{i_0,i_1,...,i_{9}, 1, 2, i_{10},i_{11}}$ (with $i_j \in [n] \setminus \{1,2\}$) then we have $f_{2,2}\paren{\sigma\paren{C}}$ $= \sigma\paren{i_{4+8 \mod{10}}} + \sigma\paren{i_{10}} + \sigma\paren{i_{11}} \mod{10}$. In this case, $\mathbf{TryExtract}\paren{C,f\paren{\sigma\paren{C}},\{1,2\},\{4,8\}}$ would return the constraint $f\paren{\sigma\paren{C}}= \sigma\paren{i_{2}} + \sigma\paren{i_{10}} + \sigma\paren{i_{11}} \mod{10} $. However, $\mathbf{TryExtract}\paren{C,f\paren{\sigma\paren{C}},\{i_{0},2\},\{4,8\}}$ would return $\emptyset$.

\begin{algorithm}[H]
\caption{$\mathbf{GaussianAttack}$} \label{alg:GaussianAttack}
\SetKwInOut{Input}{input}
\Input{Clauses $C_1,\ldots, C_m \sim X_k$, and labels $f\paren{\sigma\paren{C_1}},\ldots,f\paren{\sigma\paren{C_m}}$.}
\ForAll{$S \in X_{g(f)}$, $\alpha \in \mathbb{Z}_d^{g(f)}$}{
      $\mathbf{LC} \leftarrow \emptyset$ \; 
\tcp{$\mathbf{LC}$ is the set of linear constraints extracted}
      \ForAll{$C \in \left\{C_1,\ldots,C_m\right\}$}{
          $\mathbf{LC} \leftarrow \mathbf{LC}  \bigcup \mathbf{TryExtract}\paren{C,f\paren{\sigma\paren{C}},S,\alpha}$ \;
\If{$\left| \mathbf{LC} \right| \geq n$}{ 
 $\sigma' \leftarrow \mathbf{LinearSolve}\paren{LC}$ \;
\If{ $\forall i \in [m]$. $f\paren{\sigma'\paren{C_i}}=f\paren{\sigma\paren{C_i}} \in C$ }{ \Return{$\sigma'$} 
}
}
}
}
\cut{\begin{algorithmic}
\State {\bf Input:} Clauses $C_1,\ldots, C_m \sim X_k$, and labels $f\paren{\sigma\paren{C_1}},\ldots,f\paren{\sigma\paren{C_m}}$. 

\ForAll{$S \in X_{g(f)}$, $\alpha \in \mathbb{Z}_d^{g(f)}$}
      \State $\mathbf{LC} \gets \emptyset$ \Comment{$\mathbf{LC}$ is the set of linear constraints extracted}
      \ForAll{$C \in \left\{C_1,\ldots,C_m\right\}$}
          \State $\mathbf{LC} \gets \mathbf{LC}  \bigcup \mathbf{TryExtract}\paren{C,f\paren{\sigma\paren{C}},S,\alpha}$ 
\If{$\left| \mathbf{LC} \right| \geq n$} 
\State $\sigma' \gets \mathbf{LinearSolve}\paren{LC}$
\If{ $\forall i \in [m]$. $f\paren{\sigma'\paren{C_i}}=f\paren{\sigma\paren{C_i}} \in C$ } \Return{$\sigma'$} 
\EndIf

\EndIf
\EndFor
\EndFor

\end{algorithmic}}
\end{algorithm}

 Fact \ref{fact:GaussianSecurity} says that an attacker needs at least $m=\tilde{\Omega}\paren{n^{1+g(f)}}$ challenge-response pairs to recover $\sigma$ using Gaussian Elimination. This is because the probability that $\mathbf{TryExtract}\paren{C,f\paren{\sigma\paren{C}}S,\alpha}$ extracts a linear constraint is at most $O\paren{\paren{\frac{\left|S\right|}{n}}^{-g(f)}}$, which is  $O\paren{n^{-g(f)}}$ for $\left|S\right|$ constant. The adversary needs $O(n)$ linearly independent constraints to run Gaussian Elimination. If the adversary can see at most $\tilde{O}\paren{n^{s(f)}}$ examples neither approach (Statistical Algorithms or Gaussian Elimination) can be used to recover $\sigma$. 

\begin{fact} \label{fact:GaussianSecurity}
Algorithm \ref{alg:GaussianAttack} needs to see at least  $m=\tilde{\Omega}\paren{n^{1+g(f)}}$ challenge-response pairs to recover $\sigma$.
\end{fact}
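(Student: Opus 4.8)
The plan is to reduce the success of Algorithm~\ref{alg:GaussianAttack} to a balls-into-bins (occupancy) question and then rule out success by a union bound. First I would pin down exactly when a single sampled pair contributes a constraint. For $f = f_{k_1,k_2}$ the only way to linearize $f\paren{\sigma\paren{C}}$ by fixing $g(f)=k_1$ of the secret values is to fix the $k_1$ \emph{control} variables, i.e.\ the entries of $C$ at positions $10,\ldots,9+k_1$; once these are fixed the selected index $j$ is determined and the response becomes linear. Hence $\mathbf{TryExtract}\paren{C,f\paren{\sigma\paren{C}},S,\alpha}$ returns a nonempty constraint only if the control positions of $C$ coincide with the index set $S$. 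Crucially, whether a constraint is extracted depends only on $C$ and $S$, \emph{not} on the guessed values $\alpha$; the guess $\alpha$ only affects the \emph{content} of the constraint, not whether one is produced. (For a general $f$ there are at most $\binom{k}{g(f)}=O(1)$ size-$g(f)$ linearizing position-sets, which changes the argument only by a constant factor.)

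The next step is to observe that, for a fixed $S$, the number of constraints $\left|\mathbf{LC}\right|$ assembled by the inner loop equals the number of sampled challenges whose control positions equal $S$, independently of $\alpha$. Since the inner loop calls $\mathbf{LinearSolve}$ only once $\left|\mathbf{LC}\right|\ge n$, the algorithm can possibly recover $\sigma$ only if some $S\in X_{g(f)}$ is the control pattern of at least $n$ of the $m$ sampled challenges. Writing $\pi(C_i)$ for the ordered tuple of control variables of $C_i$, the $\pi(C_i)$ are i.i.d.\ uniform over $X_{g(f)}$ (the restriction of a uniform clause to a fixed set of positions is a uniform shorter clause), so the question becomes exactly: throw $m$ balls uniformly into $\left|X_{g(f)}\right| = \Theta\paren{n^{g(f)}}$ bins and ask whether the maximum load reaches $n$.

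To finish I would apply a multiplicative Chernoff upper-tail bound together with a union bound over the $\Theta\paren{n^{g(f)}}$ bins. Fixing $m = n^{1+g(f)}/C$ for a large constant $C$, each bin has mean load $\mu = m/\left|X_{g(f)}\right| \le 2n/C$, so the Chernoff bound gives $\Pr\left[\text{load}(S)\ge n\right]\le \paren{e\mu/n}^{n} \le \paren{2e/C}^{n}$. Choosing $C>2e$ makes this exponentially small in $n$, dominating the polynomial factor $\Theta\paren{n^{g(f)}}$ from the union bound, so with probability $1-o(1)$ \emph{no} bin attains load $\ge n$. Consequently, whenever $m = O\paren{n^{1+g(f)}}$ with a small enough hidden constant, Algorithm~\ref{alg:GaussianAttack} never accumulates $n$ constraints for any $\paren{S,\alpha}$ and hence fails to recover $\sigma$, which is precisely the claimed bound $m=\tilde{\Omega}\paren{n^{1+g(f)}}$. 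The step requiring the most care is the characterization in the first paragraph: I must verify that no alternative set of $g(f)$ secret values can linearize $f_{k_1,k_2}$, so that the per-challenge extraction probability is genuinely $\Theta\paren{n^{-g(f)}}$ and not larger. This is where the structure of $f_{k_1,k_2}$—all nonlinearity being funneled through the single index selection $j$—is essential; extending to arbitrary $f$ only needs the observation that the number of linearizing position-sets is bounded by the constant $\binom{k}{g(f)}$.
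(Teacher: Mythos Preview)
Your proposal is correct and follows the same core idea as the paper: the paper's justification (given only in the paragraph preceding the fact, not in a formal proof) simply asserts that for fixed $S$ of constant size the probability that $\mathbf{TryExtract}$ returns a constraint on a random challenge is $O\paren{n^{-g(f)}}$, so roughly $n^{1+g(f)}$ challenges are needed to accumulate $n$ constraints. You have fleshed this expected-value sketch into a rigorous occupancy argument via Chernoff plus a union bound over $S$, and you correctly isolate the key observation that the extraction event depends only on $(C,S)$ and not on $\alpha$; this is more careful than the paper's one-line explanation, but the underlying reasoning is the same.
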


Remark \ref{remark:security} explores the tradeoff between the adversary's running time and the number of challenge-response pairs that an adversary would need to see to recover $\sigma$ using Gaussian elimination. In particular the adversary can recover $\sigma$ from $\tilde{O}\paren{n^{1+ g(f)/2}}$ challenge-response pairs if he is willing to increase his running time by a factor of $d^{\sqrt{n}}$. In practice, this attack may be reasonable for $n \leq 100$ and $d=10$, which means that it may be beneficial to look for candidate human computable functions $f$ that maximize $\min\{r(f)/2,1+g(f)/2\}$ instead of $s(f)$ whenever $n \leq 100$.

\begin{remark} \label{remark:security}
If the adversary correctly guesses value of $\sigma\paren{S}$ for $\left|S\right| = n^\epsilon$ then he may be able to extract a linear constraint from a random example with probability $\Omega(1/n^{\paren{1-\epsilon}g(f)})$.  The adversary would only need $\tilde{O}\paren{n^{1+\paren{1-\epsilon}g(f)}}$ examples to solve for $\sigma$, but his running time would be proportional to $d^{\epsilon n}$ --- the expected number of guesses before he is correct.
\end{remark}

% !TEX root = humanComputablePasswords.tex

\section{Rehearsal Model} \label{apdx:Rehearsal}
In this section we review the usability model of Blocki et al. \cite{NaturallyRehearsingPasswords}. Their usability model estimates the `extra effort' that a user needs to expend to memorize and rehearse all of his secrets for a password management scheme. In this section we use $(\hat{c},\hat{a})$ to denote a cue-association pair, and we use the variable $t$ to denote time (days). In our context $(\hat{c},\hat{a})$ might denote the association between a letter (e.g., `e') and the secret digit associated with that letter (e.g., $\sigma\paren{e}$). If the user does not rehearse an association $(\hat{c},\hat{a})$ frequently enough then the user might forget it. Their are two main components to their usability model: rehearsal requirements and visitation schedules. Rehearsal requirements specify how frequently a cue-association pair must be used for a user to remember the association. Visitation schedules specify how frequently the user authenticates to each of his accounts and rehearses any cue-association pairs that are linked with the account.

\subsection{Rehearsal Requirements}
 
Blocki et al. \cite{NaturallyRehearsingPasswords} introduce a rehearsal schedule to ensure that the user remembers each cue-association pair.  
\begin{definition} \cite{NaturallyRehearsingPasswords}
\label{def:RehearsalRequirement} A rehearsal schedule for a
cue-association pair $(\hat{c},\hat{a})$ is a sequence of times $t_0^{\hat{c}} < t_1^{\hat{c}} <...$. For each $i \geq 0$ we have a {\em rehearsal requirement}, the cue-association pair must be rehearsed at least once during the time window $\left[t_i^{\hat{c}},t_{i+1}^{\hat{c}}\right) = \{x \in \mathbb{R} ~ \vline~
t_i^{\hat{c}} \leq x < t_{i+1}^{\hat{c}} \}$.  
\end{definition} 

A rehearsal schedule is {\em sufficient} if a user can maintain the association $\left(\hat{c},\hat{a}\right)$ by following the
rehearsal schedule. The length of each
interval $\left[t_{i}^{\hat{c}},t_{i+1}^{\hat{c}}\right)$ may depend on the strength of the mnemonic techniques used to memorize and rehearse a cue-association pair $\paren{\hat{c},\hat{a}}$
as well as $i$ --- the number of prior rehearsals \cite{memory:ExpandingRehearsal,memory:alternatevan}. 

{\bf Expanding Rehearsal Assumption \cite{NaturallyRehearsingPasswords}:} The rehearsal schedule given by $t_i^{\hat{c}} = 2^{i\AssociationStrength{}}$ is sufficient to maintain the association $(\hat{c},\hat{a})$, where $\AssociationStrength{} > 0$ is a constant.  \\

\subsection{Visitation Schedules}
Suppose that the user has $m$ accounts $A_1,\ldots,A_m$. A visitation schedule for an account $A_i$ is a sequence of real numbers
$\tau_0^i < \tau_1^i < \ldots$, which represent the times when the account
$A_i$ is visited by the user. Blocki et al. \cite{NaturallyRehearsingPasswords} do not assume that the exact visitation
schedules are known a priori. Instead they model visitation schedules using a
random process with a known parameter $\lambda_i$ based on
$E\left[\tau_{j+1}^i-\tau_j^i\right]$ --- the average time between consecutive
visits to account $A_i$. 

 A rehearsal requirement $\left[t_i^{\hat{c}},t_{i+1}^{\hat{c}}\right)$
can be satisfied naturally if the user visits a site $A_j$ that
uses the cue $\hat{c}$ $\paren{\hat{c} \in c_j}$  during the given time window. Here, $c_j$ denote the set of cue-association pairs that the user must remember when logging into account $A_j$.  Formally,

\begin{definition} \cite{NaturallyRehearsingPasswords} We say that a rehearsal requirement
$\left[t_i^{\hat{c}},t_{i+1}^{\hat{c}}\right)$ is {\em naturally satisfied} by a visitation
schedule $\tau_0^i < \tau_1^i < \ldots$ if $\exists j \in [m],k \in \mathbb{N}$ s.t $\hat{c} \in c_j$ and $\tau_k^j
\in \left[t_i^{\hat{c}},t_{i+1}^{\hat{c}}\right)$.  We use 
\[\ExtraRehearsals{t}{\hat{c}} = \left|\left\{i ~\vline ~ t_{i+1}^{\hat{c}}
\leq t \wedge \forall j,k. \left(\hat{c} \notin c_j \vee \tau_k^j \notin
\left[t_i^{\hat{c}},t_{i+1}^{\hat{c}} \right) \right) \right\} \right| \ , \]
to denote the number of rehearsal requirements that are not naturally satisfied by the visitation schedule during the time interval $[0,t]$.
\end{definition}

{\bf Example: } Consider the human computable function $f_{2,2}$ from section \ref{sec:Candidates}, and suppose that the user has to compute $f_{2,2}\paren{\sigma\paren{C_i}}$ to authenticate at account $A_j$, where $C_i = \paren{x_0,\ldots,x_{13}}$. When the user computes $f_{2,2}$ he must rehearse the associations $\paren{x_{10},\sigma\paren{x_{10}}}$, $\paren{x_{11},\sigma\paren{x_{11}}}$, $\paren{x_{12},\sigma\paren{x_{12}}}$, $\paren{x_{13},\sigma\paren{x_{13}}}$ and $\paren{x_{i},\sigma\paren{x_{i}}}$ where $i =\paren{ \sigma\paren{x_{10}}+ \sigma\paren{x_{11}} \mod{10}}$. Thus $c_j\supset \{x_i,x_{10},x_{11},x_{12},x_{13}\}$. When user authenticates he naturally rehearses each of these associations in $c_j$. \\

If a cue-association pair $\left(\hat{c},\hat{a}\right)$ is not rehearsed naturally during the interval $\left[t^{\hat{c}}_i,t^{\hat{c}}_{i+1}\right)$ then the user needs to perform an extra rehearsal to maintain the association. Intuitively, $\ExtraRehearsals{t}{\hat{c}}$ denotes the total number of extra rehearsals of the cue-association pair $\left(\hat{c},\hat{a}\right)$ during the time interval $[0,t]$, and $\TotalExtraRehearsals{t} = \sum_{\hat{c} \in C}
\ExtraRehearsals{t}{\hat{c}}$ denotes the total number of extra rehearsals during the time interval $[0,t]$ to maintain all of the cue-association pairs. Thus, a smaller value of $E\left[\TotalExtraRehearsals{t}\right]$ indicates that the user needs to do less extra work to rehearse his secret mapping. \\

\paragraph{Poisson Arrival Process}
The visitation schedule for each account $A_j$ is given by a Poisson arrival process with parameter $\lambda_j$, where $1/\lambda_j=E\left[\tau_{j+1}^i-\tau_j^i\right]$ denotes the average time between consecutive visits to account $A_j$. 

\begin{table}
\centering
\begin{tabular}{| l | c | c | c | c | c |}
\hline
Schedule $\vline~ \lambda$& $\frac{1}{1 }$  & $\frac{1}{3 }$  & $\frac{1}{7}$  & $\frac{1}{31 }$  & $\frac{1}{365 }$ \\
\hline
Very Active & 10 & 10 & 10 & 10 & 35 \\
\hline
Typical & 5 & 10 & 10 & 10 & 40 \\
\hline
Occasional & 2 & 10 & 20 & 20 & 23 \\
\hline
Infrequent & 0 & 2 & 5 & 10 & 58 \\
\hline 
\end{tabular}
\caption{Visitation Schedules - number of accounts visited with frequency $\lambda$ (visits/days)}
\label{tab:userSchedules}
\end{table}

\paragraph{Evaluating Usability}  Blocki et al. \cite{NaturallyRehearsingPasswords} prove the following theorem. Given a sufficient rehearsal schedule and a visitation schedule, 
Theorem \ref{thm:ExtraRehearsals} predicts the value of $\TotalExtraRehearsals{t}$, the total number of extra rehearsals that a user will need to do to remember all of the cue-association pairs required to reconstruct all of his passwords. 
\newcommand{\thmExtraRehearsals}{Let $i_{\hat{c}}* = \left(\arg\max_x t^{\hat{c}}_{x} < t \right)-1$ then \begin{eqnarray*}
E\left[\TotalExtraRehearsals{t} \right] = \sum_{\hat{c} \in C} \sum_{i=0}^{i_{\hat{c}}*} \exp \left(-\left(\sum_{\substack{j~s.t. \\ \hat{c} \in c_j}} \lambda_j \right)\left(t^{\hat{c}}_{i+1}-t^{\hat{c}}_i \right) \right) \ 
\end{eqnarray*}  

}

\begin{theorem} \cite{NaturallyRehearsingPasswords} \label{thm:ExtraRehearsals}
\thmExtraRehearsals
\end{theorem}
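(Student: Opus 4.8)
The plan is to reduce the entire expectation to a sum of single-interval failure probabilities by linearity of expectation, and then to evaluate each such probability using the basic properties of Poisson arrival processes. Since $\TotalExtraRehearsals{t} = \sum_{\hat{c}\in C} \ExtraRehearsals{t}{\hat{c}}$, and since the definition of $\ExtraRehearsals{t}{\hat{c}}$ is literally a count of indices $i$, I would first rewrite it as a sum of indicator variables,
\[
\ExtraRehearsals{t}{\hat{c}} = \sum_{i=0}^{i_{\hat{c}}^*} \mathbf{1}\left[ \forall j,k.\ \paren{\hat{c}\notin c_j \vee \tau_k^j \notin \left[t_i^{\hat{c}},t_{i+1}^{\hat{c}}\right)} \right],
\]
where the upper limit $i_{\hat{c}}^*$ encodes the constraint $t_{i+1}^{\hat{c}} \le t$ from the definition. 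Taking expectations and applying linearity twice (over $\hat{c}$ and over $i$) reduces the theorem to showing that for each fixed cue $\hat{c}$ and index $i$,
\[
\Pr\left[ \text{requirement } \left[t_i^{\hat{c}},t_{i+1}^{\hat{c}}\right) \text{ is not naturally satisfied} \right] = \exp\left(-\left(\sum_{j:\ \hat{c}\in c_j}\lambda_j\right)\paren{t_{i+1}^{\hat{c}}-t_i^{\hat{c}}}\right).
\]

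The core step is this per-interval probability. A requirement $\left[t_i^{\hat{c}},t_{i+1}^{\hat{c}}\right)$ fails to be naturally satisfied precisely when none of the accounts $A_j$ with $\hat{c}\in c_j$ is visited during the window. I would invoke two standard facts about Poisson arrival processes. First, the visitation processes for distinct accounts are independent, so the probability that no relevant account fires in the window is the product over $j$ with $\hat{c}\in c_j$ of the individual no-arrival probabilities. Second, for a single process of rate $\lambda_j$, the number of arrivals in an interval of length $\Delta = t_{i+1}^{\hat{c}}-t_i^{\hat{c}}$ is Poisson with mean $\lambda_j\Delta$, so the probability of zero arrivals is $e^{-\lambda_j\Delta}$. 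Multiplying gives $\prod_{j:\ \hat{c}\in c_j} e^{-\lambda_j\Delta} = \exp\paren{-\paren{\sum_{j:\ \hat{c}\in c_j}\lambda_j}\Delta}$, the claimed quantity. Equivalently, I could appeal to the superposition property: the merged arrival stream over all accounts using $\hat{c}$ is itself a Poisson process of rate $\Lambda_{\hat{c}} = \sum_{j:\ \hat{c}\in c_j}\lambda_j$, and such a process has no arrival in a length-$\Delta$ window with probability $e^{-\Lambda_{\hat{c}}\Delta}$. Substituting back into the doubly-indexed sum yields the stated formula.

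The main obstacle is not any hard calculation but the careful justification of the event structure and independence. I expect the delicate points to be: (i) confirming that the rehearsal schedule $\{t_i^{\hat{c}}\}$ is deterministic and fixed in advance, so that an extra rehearsal never alters a window and each indicator is a clean function of the visitation randomness alone; (ii) verifying that "naturally satisfied" depends only on arrivals inside the half-open window $\left[t_i^{\hat{c}},t_{i+1}^{\hat{c}}\right)$, so the single-interval computation is self-contained and cross-account independence applies directly; and (iii) a minor boundary subtlety in matching the index range $0 \le i \le i_{\hat{c}}^*$ to the condition $t_{i+1}^{\hat{c}} \le t$ (strict versus non-strict inequality), which can differ by at most one boundary term when $t$ coincides with a schedule point and is harmless. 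Finally, I would note that linearity of expectation sidesteps any need to reason about the joint distribution of failures across different intervals, so the independence of arrivals in disjoint windows --- though true for Poisson processes --- is not actually required for the result.
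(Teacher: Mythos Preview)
Your proposal is correct and is exactly the standard argument for this formula: write $\TotalExtraRehearsals{t}$ as a double sum of indicator variables, apply linearity of expectation, and compute each indicator's expectation as the probability that the merged Poisson stream of rate $\Lambda_{\hat c}=\sum_{j:\hat c\in c_j}\lambda_j$ has no arrival in a window of length $t_{i+1}^{\hat c}-t_i^{\hat c}$, which is $e^{-\Lambda_{\hat c}(t_{i+1}^{\hat c}-t_i^{\hat c})}$. Note, however, that this paper does not actually supply its own proof of the statement---the theorem is quoted from \cite{NaturallyRehearsingPasswords} and used as a black box to populate Table~\ref{tab:Usability}---so there is nothing in the present paper to compare your argument against; your write-up simply fills in the omitted (and routine) derivation.
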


We use the formula from Theorem \ref{thm:ExtraRehearsals} to obtain the usability results in Table \ref{tab:Usability}. To evaluate this formula we need to be given the rehearsal requirements, a visitation schedule ($\lambda_i$) for each account $A_i$ and a set of public challenges $\vec{C}_i \in \paren{X_{14}}^{10}$ for each account $A_i$. The rehearsal requirements are given by the Expanding Rehearsal Assumption \cite{NaturallyRehearsingPasswords} (we use the same association strength parameter $\AssociationStrength{} = 1$ as Blocki et al. \cite{NaturallyRehearsingPasswords}), and the visitation schedules for each user are given in Table \ref{tab:userSchedules}. We assume that each password is $10$ digits long and that the challenges $\vec{C}_i \in \paren{X_{14}}^{10}$ are chosen at random by Algorithm \ref{alg:GenStories}. Notice that each time the user responds to a single digit challenge he rehearses the secret mapping at five locations (see discussion in Section \ref{subsec:Usability}). Because the value of $\mathbb{E}\left[\TotalExtraRehearsals{365}\right]$ depends on the particular password challenges that we generated for each account, we ran Algorithm \ref{alg:GenStories} and computed the resulting value  $\mathbb{E}\left[\TotalExtraRehearsals{365}\right]$ one-hundred times. The values in Table \ref{tab:Usability} represent the mean value of $\mathbb{E}\left[\TotalExtraRehearsals{365}\right]$ across all hundred instances.

\begin{table}
\parbox{.65\linewidth}{
\centering
\begin{tabular}{| l | l | l | l || l | l | l | }
\hline
 &\multicolumn{3}{c||}{Our Scheme $\paren{\sigma \in \mathbb{Z}_{10}^n}$} & \multicolumn{3}{|c|}{\sharedCues} \\
\hline
 User & $n=100$ & $n=50$ & $n=30$ & SC-0 & SC-1 & SC-2 \\
\hline
Very Active & $0.396$ & $0.001$ & $\approx 0$ & $\approx 0$ & $3.93$ & $7.54$ \\
\hline
Typical & $2.14$ & $0.039$ & $\approx 0$ & $\approx 0$ & $10.89$ & $19.89$ \\
\hline 
Occasional & $2.50$ & $0.053$ & $\approx 0$ & $\approx 0$ & $22.07$ & $34.23$ \\
\hline
Infrequent & $70.7$ & $22.3$ & $6.1 $ & $\approx 2.44$ & $119.77$ & $173.92$  \\
\hline
\end{tabular} 
\caption{$\mathbb{E}\left[\TotalExtraRehearsals{365}\right]$: Extra Rehearsals over the first year to remember $\sigma:\{1,\ldots,n\}\rightarrow \mathbb{Z}_{10}$ in our scheme with $f_{2,2}$ or $f_{1,3}$. Compared with  \sharedCues~schemes SC-0,SC-1 and SC-2\cite{NaturallyRehearsingPasswords}. }
\label{tab:Usability}
}
\hfill
\parbox{.23\linewidth}{
\begin{tabular}{| l | l || l | l | }
\hline
A & B & C & D \\
\hline \hline
0 & E & 5 & J   \\
1 & F & 6&K \\
2 & G & 7&L \\
3 & H & 8&M \\
4 & I & 9&N \\
\hline
\end{tabular}
\caption{Single-Digit Challenge Layout. Given a random mapping $\sigma$ from letters to digits the user can compute $f_{2,2}\paren{\sigma\paren{`C'}}$ by executing the following steps (1) Recall $\sigma(`A')$ --- the number associated with the letter A, (2) Recall $\sigma(`B')$, (3) Compute $i = \sigma(`A')+\sigma(`B') \mod{10}$ --- without loss of generality suppose that $i = 8$, (4) Find the letter  at index $i$---`M' if $i=8$, (5) Recall $\sigma(`M')$ (6) Recall $\sigma(`C')$ (7) Compute $j = \sigma(`M')+\sigma(`C') \mod {10}$ (8) Recall $\sigma(`D')$ (9) Return $j+\sigma(`D') \mod{10}$.} \label{tbl:challengeLayout}
}
\hfill
\end{table}

\section{Sum of $k$-Mins}
In the basic Hopper-Blum \cite{hopper2001secure} Human Identification Protocol the user memorizes a subset $S \subseteq [n]$ of $k=|S|$ secret indices. A single digit challenge consisted of a vector $x \in \mathbb{Z}_{10}^n$ of $n$ digits and the user responded by  with the $\mod{10}$ sum of the digits at $k \leq n$ secret locations plus an error term $e$ 
\[ \sum_{i \in S} x_i \mod{10} + e \ . \]
Typically, the user will set $e=0$, but occasionally the user is supposed to respond with a completely random digit instead of the correct response (e.g., so that the adversary cannot simply use Gaussian Elimination to find $S$). Thus, the human user must occasionally generate random numbers in his head to execute the Hopper-Blum protocol. This is potentially problematic because humans are not good at consciously generating random numbers \cite{wagenaar1972generation,humanRandom:figurska2008humans,seventeenMostRandom}.  In fact, hard learning problems like noisy parity might even be easy to learn when humans are providing the source of noise.

Hopper and Blum~ \cite{hopper2001secure} also proposed a deterministic human identification protocol call sum of $k$-mins. In this protocol the user memorized $k$ secret pairs $(i,j)$ of indices $S \subseteq [n]^2$. As before a single digit challenge consists of a vector $x \in \mathbb{Z}_{10}^n$ of $n$ digits. However, now the response to the challenge is deterministic
\[ \sum_{(i,j) \in S} \min\{ x_i x_j\} \mod{10} \ . \]
 We observe that for any constant $k$ the protocol is not secure against polynomial time attackers who have seen $O\big(k\cdot\log n \big)$ examples. The adversary can simply enumerate all possible sets $S$ of $k$ pairs and cross out the ones that are inconsistent with the challenge-response pairs he has already seen. Even for larger $k$ (e.g., greater human work) Hopper and Blum~ \cite{hopper2001secure} observed that the protocol was not secure against an adversary who has seen $O(n^2)$ examples. To see this observe that we can create an indicator variable $y_{i,j}$ for each pair $(i,j)$. Each challenge response pair $(x,r)$ yields a linear constraint
 \[ \sum_{(i,j)} y_{i,j} \min\{x_i,x_j\}  = r \mod{10} \ .  \]

\end{document}